\documentclass[11pt]{amsart}
\usepackage{amssymb,amsthm, times, multicol}
\usepackage{delarray,verbatim}
\usepackage[hmargin=1in,height=8in]{geometry}
\usepackage{xcolor}
\usepackage[colorlinks,linkcolor=blue,citecolor=magenta,anchorcolor=magenta]{hyperref}
\usepackage{mathtools}
\mathtoolsset{showonlyrefs=true}

\newcommand {\ud}{{\rm d}}

\newcommand {\rbra}[1]{\left(#1\right)}

\newcommand {\norm}[1]{\|#1\|}
\newcommand {\absol}[1]{|#1|}

\linespread{1.20}

\numberwithin{equation}{section}

\newtheorem{theorem}{Theorem}[section]
\newtheorem{proposition}{Proposition}[section]
\newtheorem{corollary}{Corollary}[section]
\newtheorem{remark}{Remark}[section]
\newtheorem{lemma}{Lemma}[section]

\newtheorem{assump}{Assumption}[section]

\numberwithin{remark}{section} \numberwithin{proposition}{section}
\numberwithin{corollary}{section}

\newcommand {\bR}{\mathbb{R}}
\newcommand {\R}{\mathbb{R}}

\newcommand {\cF}{\mathcal{F}}

\newcommand {\bN}{\mathbb{N}}
\newcommand {\bP}{\mathbb{P}}

\newcommand {\bE}{\mathbb{E}}
\newcommand {\E}{\mathbb{E}}
\newcommand {\cE}{\mathcal{E}}

\newcommand{\diff}{{\rm d}}

\newcommand{\lev}{L\'{e}vy }

\newcommand{\red}{\textcolor[rgb]{1.00,0.00,0.00}}

\newcommand{\bW}{\mathbb{W}}
\newcommand{\bZ}{\mathbb{Z}}
\newcommand{\cL}{\mathcal{L}}
\newcommand{\cB}{\mathcal{B}}
\newcommand{\cD}{\mathcal{D}}

\allowdisplaybreaks[1]

\title{On the bail-out dividend problem for spectrally negative Markov additive models}

\thanks{K. Noba is supported by JSPS KAKENHI grant no. JP18J12680. X. Yu is supported by Hong Kong Early Career Scheme under no. 25302116 and by Hong Kong Polytechnic University internal fund under no. P0031417.}

\author[K. Noba]{Kei Noba}
\address[K. Noba]{Graduate School of Engineering Science, Osaka University, Japan.}
\email{knoba@sigmath.es.osaka-u.ac.jp}
\author[J. L. P\'erez]{Jos\'e-Luis P\'erez}
\address[J. L. P\'erez]{Department of Probability and Statistics, Centro de Investigaci\'{o}n en Matem\'{a}ticas A.C. Calle Jalisco s/n. C.P. 36240, Guanajuato, Mexico}
\email{jluis.garmendia@cimat.mx}
\author[X. Yu]{Xiang Yu}
\address[X. Yu]{Department of Applied Mathematics, The Hong Kong Polytechnic University, Hung Hom, Kowloon, Hong Kong.}
\email{xiang.yu@polyu.edu.hk}

\begin{document}
\maketitle

\begin{abstract}
This paper studies the bail-out optimal dividend problem with regime switching under the constraint that the cumulative dividend strategy is absolutely continuous. We confirm the optimality of the regime-modulated refraction-reflection strategy when the underlying risk model follows a general spectrally negative Markov additive process. To verify the conjecture of a barrier type optimal control, we first introduce and study an auxiliary problem with the final payoff at an exponential terminal time and characterize the optimal threshold explicitly using fluctuation identities of the refracted-reflected L\'evy process. Second, we transform the problem with regime-switching into an equivalent local optimization problem with a final payoff up to the first regime switching time. The refraction-reflection strategy with regime-modulated thresholds can be shown as optimal by using results in the first step and some fixed point arguments for auxiliary recursive iterations.
\end{abstract}
\ \\
\noindent \small{\textbf{Keywords:}
Refracted-reflected spectrally negative L\'evy process, capital injection, absolutely continuous constraint, regime-switching, fixed point argument.
\ \\
\ \\
\noindent  \textbf{Mathematics Subject Classification (2010)}:  Primary 60G51; Secondary, 93E20, 91G80\\
\ \\ 

%%%%% text %%%%%

\section{Introduction}
The bail-out version of de Finetti's optimal dividend problem has attracted a lot of research interests from the community of corporate finance and insurance. This optimal control problem is to maximize the expected net present value (NPV) of dividends with infinite time horizon while the shareholders are also required to inject capital to prevent the company from bankruptcy. A spectrally negative L\'evy process, namely a L\'evy process with only downward jumps, is usually used to describe the underlying surplus process for an insurance company that diffuses because of the premiums and jumps downside by claim payments. Avram et al. showed in \cite{APP2007} that it is optimal to inject capital by reflecting from below at zero and to pay dividends from above at a suitably chosen threshold. However, the typical admissible set of singular dividend controls is generally difficult to implement. Many practical constraints have been proposed to guarantee the optimal dividend policy has better structures. An example of such constraints requires the cumulative dividend payment to be absolutely continuous with respect to the Lebesgue measure while its density is bounded by a constant. Under this control constraint, the problem has been solved by \cite{YP2017} for spectrally positive \lev models and very recently by \cite{PerYamYu2018} for the spectrally negative L\'evy case. The optimal control under the constraint fits the type of a \textit{refraction-reflection strategy} that reflects the surplus from below at zero in the classical sense and decreases the drift of the surplus process at a suitably chosen threshold. In particular, for the spectrally negative case, the resulting controlled surplus process turns to be a refracted-reflected L\'{e}vy process introduced in \cite{PerYam2018}. Fluctuation identities of the refracted-reflected L\'{e}vy process play the core role in the verification of optimality with an explicitly chosen refraction barrier in \cite{PerYamYu2018}.

In the present paper, we are interested in the bail-out dividend problem under the same constraint as in \cite{PerYamYu2018}, but in a more general framework in which the underlying risk is characterized by a spectrally negative Markov additive process. This process can be seen as a family of L\'evy processes switching via an independent Markov chain. In addition, a negative jump is introduced each time there is a change in the current regime. This jump is independent of the family of L\'evy processes and the Markov chain, and can be understood as the cost for the insurance company to adapt to the new regime. The regime-switching model is commonly used to capture the changes of market behavior caused by macroeconomic transitions or macroscopic readjustment. On the other hand, the continuous time Markov chain is often used to approximate some stochastic factors which affect the underlying state processes. Comparing with stochastic drift or volatility models, the regime-switching model is advantageous for its tractability and more explicit structures especially in stochastic control problems. Many empirical justifications of market regimes have been conducted in various models, see a short list among \cite{AngBeK02b}, \cite{AngTim}, \cite{Gray}, \cite{Ham}, \cite{Pelle}, and \cite{So}. The generalization of control and optimization problems from single market models to models with regime-switching deserves technical treatment, which has become a vibrant research topic during the past decades. Some recent work motivated by different financial applications can be found in  \cite{ZhouYin}, \cite{ZhangZhou}, \cite{JinYin}, \cite{CapLop14a} and \cite{BoLiaoYu}.

In particular, optimal dividend problems in the context with regime-switching have been studied, however, only in the framework of diffusion models or models with Poisson jumps, see for instance \cite{Azcue}, \cite{JiaPis2012}, \cite{JinYin}, \cite{WeiWY}, \cite{ZY2016}, and \cite{ZhuJ}. Similar to the single risk model, optimal dividend strategies in these pioneer works have been shown to fit the type of barrier control as well. In the more general case, it becomes an open problem if the barrier dividend policy can still preserve its optimality.

This paper therefore aims to provide the positive answer to the optimality of the barrier dividend strategy, namely the refracted-reflected dividend and injection controls but modulated by the regime states, in general spectrally negative L\'evy models. Unlike most of the aforementioned work in the diffusion or jump-diffusion cases that rely heavily on the PDE approach, our method is purely probabilistic and is based on fluctuation identities for refracted-reflected L\'{e}vy process. On the other hand, our analysis differs from \cite{PerYamYu2018} substantially due to the complexity caused by different regimes. The verification of the optimal barrier(s) is expected to be much more involved than in \cite{PerYamYu2018} as the barrier in each regime period is clearly coupled with other regime modulated barriers through the definition of the value function. The HJB variational inequalities for the global control problem will become a system of coupled variational inequalities based on the regime states. To reduce the complexity and deal with the switch in the regimes, we borrow the idea in the literature of stochastic control to use the dynamic programming principle and localize the problem to the period up to the first regime switch, see \cite{JiaPis2012} and \cite{ZY2016} for similar optimal dividend problems.

Note that the dynamic programming principle allows us to write the problem equivalently to a bail-out dividend problem up to the first regime switching time, however, with an additional indirect utility process. Our verification of optimality can therefore be summarized in two steps:
\begin{itemize}
\item[(i)] First, we study a simplified auxiliary bail-out dividend problem with a final payoff in a single spectrally negative L\'{e}vy model up to an independent exponential time. In this part, we can successfully compute the expected NPV of dividends minus capital injection under a refraction-reflection strategy explicitly and perform the ``guess-and-verify" procedure common in the literature, see \cite{PerYamYu2018}. We construct and confirm the optimal barrier using the smooth fit principle and delicate computations of generators and slope conditions of the value function. As a byproduct, our work in the single model appears to be the first one that deals with a final payoff up to an independent exponential time for spectrally negative L\'{e}vy processes and itself is an important add-on to the literature that may have other future applications.

\item[(ii)] After the preparation in a single spectrally negative L\'{e}vy model, we can define the iteration operator by proving the dynamic programming principle similar to \cite{JiaPis2012} and \cite{ZY2016}. In our framework, we can show the existence of the candidate optimal barriers modulated by regime states using the result from step $\textrm{(i)}$. Then we proceed to prove that the corresponding expected NPV under the regime-modulated refracted-reflected strategy is the fixed point of the iteration operator. This completes the second step of the verification and the optimality of the barrier type control is successfully retained in the general model as conjectured. Our conclusion recovers many existing results in models with diffusion processes or jump diffusion processes.
\end{itemize}

The rest of the paper is organized as follows. Section \ref{sec2} introduces some preliminaries for a single spectrally negative L\'evy process. Section \ref{sec3} formulates our bail-out dividend problem with regime switching in the spectrally negative Markov additive model. The optimality of regime-modulated refraction-reflection strategies is presented as the main result of this paper. In Section \ref{sec4}, we introduce an auxiliary bail-out optimal dividend problem with absolutely continuous constraint and the final payoff at an independent exponential terminal time. The optimality of a refraction-reflection strategy is confirmed and the verification of the explicit optimal barrier is provided. In Section \ref{sec5}, an auxiliary iteration operator is defined and verification of the optimal regime-modulated thresholds is completed by using the results in Section \ref{sec4} and some fixed point arguments. The proof of the dynamic programming principle in our framework is given in Appendix \ref{secA}.

\section{Preliminaries on spectrally negative L\'evy process}\label{sec2}
Before we introduce the bail-out optimal dividend problem in spectrally negative L\'evy models with Markov regime-switching, let us first recall some preliminary results on a single spectrally negative L\'evy process and related fluctuation identities.

\subsection{Spectrally negative L\'evy process}\label{SNP_intro}
Let us consider a spectrally negative L\'evy process $X = (X(t) ; t \geq 0)$  defined on some filtered probability space
$(\Omega , \cF , {\bf F} , \bP )$, where ${\bf F} := \{ \cF(t), t \geq 0\}$ denotes the right-continuous complete filtration
generated by the process $X$.
We will write by $\bP_{x}$ the law of the process conditioned on the event $\{ X_0=x  \}$ and $\bE_x$ as the associated expectation operator.

Let $\psi_X : [0, \infty) \rightarrow \bR$ be the Laplace exponent of the L\'evy process $X$, i.e. $\bE_0 \left[e^{\theta X(t)}\right] =: e^{\psi_{X}(\theta )t}$,  $t, \theta \geq 0$. We will assume throughout the paper that $\psi_X$ is given by the L\'evy--Khintchine formula
\begin{align*}
\psi_{X} (\theta) := \gamma \theta +\frac{\sigma^2}{2} \theta^2 + \int_{(-\infty , 0)} (e^{\theta z}-1 - \theta z1_{\{ z>-1\}})\Pi (dz),
~~~\theta \geq 0,
\end{align*}
where $\gamma \in \bR$, $\sigma \geq 0$ and $\Pi$ is the L\'evy measure of $X$ on $(-\infty,0)$ that satisfies $\int_{(-\infty,0)} (1 \land x^2) \Pi (dx) < \infty$. It is well-known that $X$ has paths of bounded variation if and only if $\sigma=0$ and $\int_{(-1,0)} |z|\Pi(\mathrm{d}z)$ is finite.  In this case, its Laplace exponent is given simply by
\begin{equation}\label{psi_bounded_var}
\psi_X(\theta) = c \theta+\int_{(-\infty,0)}\big( {\rm e}^{\theta z}-1\big)\Pi(\ud z), \quad \theta \geq 0, 
\end{equation}
where $c:=\gamma-\int_{(-1,0)} z\Pi(\mathrm{d}z)$. Note that necessarily $c>0$ because we have ruled out the case that $X$ has monotone paths.

\subsection{Scale functions}
In order to solve the stochastic control problems in later sections, we need to introduce the so-called scale functions using fluctuation identities for spectrally negative L\'evy processes. For $q\geq 0$, let $W^{(q)}$ denote the $q$-scale function of the process $X(t)$ and $\bW^{(q)}$ be the $q$-scale function of the process $Y(t)=X(t)-\delta t$. These are the mappings from $\bR$ to $[0, \infty)$ that take value zero on the negative half-line, while on the positive half-line
they are strictly increasing functions that are defined by their Laplace transforms:
\begin{align}
\int_0^\infty e^{-\theta x}W^{(q)}(x)dx:=\frac{1}{\psi_X (\theta) -q},\qquad\text{$\theta > \Phi (q)$},\label{111a}\\
\int_0^\infty e^{-\theta x}\bW^{(q)}(x)dx:=\frac{1}{\psi_Y (\theta) -q},\qquad\text{$\theta > \varphi (q)$}\label{111b}.
\end{align}
Here $\psi_Y(\theta) = \psi_X (\theta) - \delta \theta$, $\theta\geq0$, is the Laplace exponent of the process $Y$, and
\begin{align*}
\Phi (q) := \sup\{ \lambda \geq 0 : \psi_X(\lambda) = q\}\ \quad\text{and} \qquad \varphi (q) := \sup \{ \lambda \geq 0 :\psi_Y(\lambda)=q \}.
\end{align*}

We also define, for $x \in \R$,
\begin{align*}
\overline{W}^{(q)}(x) &:=  \int_0^x W^{(q)}(y) \diff y, \quad
Z^{(q)}(x) := 1 + q \overline{W}^{(q)}(x),  \\
\overline{Z}^{(q)}(x) &:= \int_0^x Z^{(q)} (z) \diff z = x + q \int_0^x \int_0^z W^{(q)} (w) \diff w \diff z.
\end{align*}
Noting that $W^{(q)}(x) = 0$ for $-\infty < x < 0$, we have
\begin{align}
\overline{W}^{(q)}(x) = 0, \quad Z^{(q)}(x) = 1  \quad \textrm{and} \quad \overline{Z}^{(q)}(x) = x, \quad x \leq 0.  \label{z_below_zero}
\end{align}
Analogously, we define $\overline{\mathbb{W}}^{(q)}$, $\mathbb{Z}^{(q)}$ and $\overline{\mathbb{Z}}^{(q)}$ for the process $Y$.

From the results in \cite{KL}, for $q, x\geq 0$, we have the following identities between the scale functions $W^{(q)}$ and $\mathbb{W}^{(q)}$,
\begin{align}
\delta \int_0^x \bW^{(q)}(x-y)W^{(q)}(y) dy&= \overline{\bW}^{(q)}(x)-\overline{W}^{(q)}(x), \label{111}\\
\delta \int_0^x \bW^{(q)}(y)W^{(q)\prime}(x-y) dy&= (1-\delta W^{(\alpha)}(0))\bW^{(q)}(x)-W^{(q)}(x).\label{111_der}
\end{align}

\begin{remark} \label{remark_scale_function_properties}
	\begin{enumerate}
		\item $W^{(q)}$ and $\mathbb{W}^{(q)}$ are differentiable a.e. In particular, if $X$ is of unbounded variation or the \lev measure is atomless, it is known that $W^{(q)}$ and $\mathbb{W}^{(q)}$ are $C^1(\R \backslash \{0\})$; see, e.g.,\ \cite[Theorem 3]{Chan2011}.
		\item As $x \downarrow 0$, by Lemma 3.1 of \cite{KKR}, we have
		\begin{align*}%\label{eq:Wqp0}
		\begin{split}
		W^{(q)} (0) &= \left\{ \begin{array}{ll} 0, & \textrm{if $X$ is of unbounded
			variation,} \\ c^{-1}, & \textrm{if $X$ is of bounded variation,}
		\end{array} \right.  \\ \mathbb{W}^{(q)} (0) &= \left\{ \begin{array}{ll} 0, & \textrm{if $Y$ is of unbounded
			variation,} \\ (c-\delta)^{-1}, & \textrm{if $Y$ is of bounded variation.}
		\end{array} \right.
		\end{split}
		\end{align*}
		\item As in Lemma 3.3 of \cite{KKR}, we also have
        \begin{align} \label{W_zero_derivative}
		\begin{split}
        W^{(q)\prime}_+ (0) &:= \lim_{x \downarrow 0}W^{(q)'}_+ (x) =
		\left\{ \begin{array}{ll}  \frac 2{\sigma^2}, & \textrm{if }\sigma > 0, \\
		\infty, & \textrm{if }\sigma = 0 \; \textrm{and} \; \Pi(0, \infty) = \infty, \\
		\frac {q + \Pi(0, \infty)} {c_X^2}, &  \textrm{if }\sigma = 0 \; \textrm{and} \; \Pi(0, \infty) < \infty,
		\end{array} \right.\\
		\mathbb{W}^{(q) \prime}_+ (0) &:= \lim_{x \downarrow 0}\mathbb{W}^{(q)\prime}_+ (x) =
		\left\{ \begin{array}{ll}  \frac 2 {\sigma^2}, & \textrm{if }\sigma > 0, \\
		\infty, & \textrm{if }\sigma = 0 \; \textrm{and} \; \Pi(0,\infty) = \infty, \\
		\frac {q + \Pi(0, \infty)} {c_Y^2}, &  \textrm{if }\sigma = 0 \; \textrm{and} \; \Pi(0,\infty) < \infty.
		\end{array} \right.
		\end{split}
		\end{align}
		
	\end{enumerate}
\end{remark}

\section{Bail-out optimal dividend problem with regime switching}\label{sec3}
We now introduce and formulate our dividend problem in a general setting with a spectrally negative Markov additive model and present our main result that states the optimality of the barrier strategy.

\subsection{Spectrally negative Markov additive processes}
Let us consider a bivariate process $((X(t),H(t)) ; t \geq 0)$, where the component $H$ is a continuous-time Markov chain with finite state space $E$ and the generator matrix $Q={(q_{ij})}_{i, j\in E}$.
When the Markov chain $H$ is in the state $i$, the process $X$ behaves as a spectrally negative L\'evy process $X^i$. In addition, when the process $H$ changes to a state $j\not=i$, the process $X$ jumps according to a non-positive random variable $J_{ij}$ with $i,j\in E$. The components $(X^i)_{i\in E}$, $H$, and $(J_{ij})_{i,j\in E}$ are assumed to be independent and are defined on some filtered probability space
$(\Omega , \cF , {\bf F} , \bP )$ where ${\bf F} := \{ \cF(t), t \geq 0\}$ denotes the right-continuous complete filtration jointly
generated by the processes $(X,H)$ and the family of random variables $(J_{ij})_{i,j\in E}$.
We will denote by $\bP_{(x, i)}$ the law of the process conditioned on the event $\{ X_0=x , H_0 =i \}$.
\par
We will assume throughout this paper that for each $i\in E$ the Laplace exponent of the L\'evy process $X^i$,
$\psi^i : [0, \infty) \rightarrow \bR$, i.e.
\begin{align*}
	\bE_0 \left[e^{\theta X^i(t)}\right] =: e^{\psi_{X^i}(\theta)t}, ~~~t, ~ \theta \geq 0,
\end{align*}
is given by the L\'evy--Khintchine formula
\begin{align*}
\psi_{X^i} (\theta) := \gamma(i) \theta +\frac{\sigma(i)^2}{2} \theta^2 + \int_{(-\infty , 0)} (e^{\theta z}-1 - \theta z1_{\{ -1<z<0\}})\Pi (i,dz),
~~~\theta \geq 0,
\end{align*}
where $\gamma_i \in \bR$, $\sigma_i \geq 0$ and $\Pi_i$ is a measure on $(-\infty,0)$ called the L\'evy measure of $X^i$ that satisfies $\int_{(-\infty,0)} (1 \land x^2) \Pi(i,dx) < \infty$. As in Section \ref{SNP_intro}, if $X^i$ is of bounded variation, its Laplace exponent is given by
%\begin{equation}
$\psi_{X^i}(\theta) = c (i) \theta+\int_{(-\infty,0)}\big( {\rm e}^{\theta z}-1\big)\Pi(i, \ud z)$, $\theta \geq 0$,
%\end{equation}
where $c(i):=\gamma (i) -\int_{(-1,0)} z\Pi(i, \mathrm{d}z)$.

\subsection{Bail-out optimal dividend problem with absolutely continuous strategies and Markov switching regimes}
We consider a bail-out de Finetti's dividend problem, namely the shareholders need to inject capital to prevent the company from going bankrupt.  In particular, a strategy  is a pair $\pi := \left( L^{\pi}(t), R^{\pi}(t); t \geq 0 \right)$ of non-decreasing, right-continuous, and adapted processes (with respect to the filtration generated by $X$ and $H$) starting at zero where $L^{\pi}$ is the cumulative amount of dividends and $R^{\pi}$ is that of injected capital. With $U^\pi(0-) := x$ and $U^\pi(t) := X(t) - L^\pi(t) + R^\pi(t)$, $t \geq 0$, it is required that $U^\pi(t) \geq 0$ a.s.\ uniformly in $t$. For a given function $\delta: E\mapsto \R_+$, we require that $L^\pi$ is absolutely continuous with respect to the Lebesgue measure of the form $L^\pi(t) = \int_0^t \ell^\pi(s) \diff s$, $t \geq 0$, with $\ell^\pi$ restricted to take values in $[0,\delta(H(t))]$ uniformly in time.

About the capital injection $R^\pi$, it is assumed that
\begin{align}\label{adm_ci}
\bE_{(x, i)} \left[ \int_{[0, \infty)} e^{-\int_0^t r(H(s)) ds} dR^\pi(t)\right]< \infty, \forall\ x\geq 0, i\in E.
\end{align}

%We will consider a function $\beta:E\mapsto (1,\infty)$ such that for every $t>0$, $\beta(H(t))$ represents the cost per unit of injected capital. 
Similar to \cite{ZY2016}, it is considered in the present paper that $\beta>1$,  which represents the constant cost per unit of injected capital in all regimes. We want to maximize
\begin{align*}
V_{\pi} (x,i) := \mathbb{E}_{(x,i)} \left( \int_0^\infty e^{-\int_0^tr(H(s))ds} \ell^\pi(s)  \diff t -  \int_{[0, \infty)} e^{-\int_0^tr(H(s))ds} \beta \diff R^{\pi}(t)\right), \quad x \geq 0,
\end{align*}
where $r:E\mapsto \R_+$ represents the Markov-modulated rate of discounting.
Hence our aim is to find the value function of the problem, i.e.,
\begin{equation}\label{vf_rs}
V(x,i):=\sup_{\pi \in \mathcal{A}}V_{\pi}(x,i), \quad x \geq 0,
\end{equation}
where $\mathcal{A}$ is the set of all admissible strategies that satisfy constraints described as above.

For the problem with regime-switching in this section, the following assumptions are mandated.
\begin{assump}\label{AAA1}
	We assume that $\E\left[X^i_1\right]=\psi_{X^i}'(0+)>-\infty$ for all $i\in E$.
\end{assump}
\begin{assump}\label{AAA2}
For all $i, j\in E$ with $i\neq j$, we assume that $\max_{i,j\in E}\E[-J_{ij}]<\infty$.
\end{assump}
%\blue{[Kei:Added this]}\red{[JL: Thanks!]}
%\begin{assump}\label{cond_cir}
%We assume that for all $i\in E$, $\E_i[\beta(H(\zeta))]=\sum_{j\in E, j\not=i}\frac{q_{ij}}{q_i}\beta(j)\leq \beta(i)$, where $\zeta$ denotes the first regime switch epoch. 
%\red{Here, we assume that $\frac{0}{0}=0$. }
%\end{assump}

%\begin{remark}
%This mild condition is needed because if Assumption \ref{cond_cir} does not hold, then the expected cost of capital injection is higher for the new regime. Therefore in order to avoid injecting capital in the next regime, it might \blue{have been} better to \blue{pay} few dividends in the previous one. Hence the refraction-reflection strategy may no longer be optimal for arbitrary cost functions.

%Note that if we assume that $\beta(i)=\beta>1$ for all $i\in E$, then Assumption \ref{cond_cir} holds trivially. The condition of constant cost per unit of injected capital in all regimes is assumed in \cite{ZY2016} in which the underlying surplus model is given by a diffusion process.
%\end{remark}
On the other hand, to avoid that the process $Y^i:=\{ Y^i(t) = X^i(t) -\delta(i) t: t\geq 0\}$ has monotone paths for $i\in E$, we make the following assumption.
\begin{assump}\label{AAA4}
	If the process $X^i$ has bounded variation paths, we require that $c(i)>\delta(i)$ for $i\in E$.
\end{assump}	

In the next result, we claim that the dynamic programming principle for the value function of our control problem holds valid, which will play a key role in finding the fixed point of some recursive iterations later on. The proof will be reported in Appendix \ref{secA}.
\begin{proposition}\label{Prop101}
For $x\in\bR$ and $i\in E$, we have
\begin{align}
{V(x, i)} = &
\sup_{\pi \in \Pi}
\bE_{(x, i)} \bigg{[}\int_0^{{\zeta} } e^{- {\Lambda(t) }}l^\pi(t) dt
-\int_{[0, {\zeta}]} {\beta}e^{-{\Lambda(t)}}dR^\pi(t)
%+ \}}e^{-\zeta q(i) t } w(U^\pi_\zeta , Z_\zeta)
\notag\\&+%1_{\{ \zeta < \zeta\}}
 e^{-\Lambda (\zeta)}V(U^\pi (\zeta), H(\zeta))\bigg{]}, \label{4}
\end{align}
%\green{[I think the integral of $R^\pi$ contain $\zeta$.]}
%\red{
%\begin{align}
%{V(x, i)} = &
%\sup_{\pi \in \Pi}
%\bE_{(x, i)} \bigg{[}\int_0^{{\zeta} } e^{- {r(i)} t}l^\pi(t) dt
%-\int_{[0, {\zeta})} {\beta(Z(t))}e^{-{r(i)}}dR^\pi(t)
%\\&+e^{-r(i)\zeta}V(U^\pi (\zeta), Z(\zeta))\bigg{]}.? \label{4}
%\end{align}
%}
%\blue{[Kei: Given that $\zeta$ is the first regime switch time, $\Lambda(\zeta)=r(i)\zeta$. So In order to keep the same notation, I prefer to keep it as $\Lambda(\zeta)$. Is this ok?}
%\red{[JL: OK!]}
where $\zeta$ denotes the epoch of the first regime switch and $\Lambda(t) :=\int_0^tr(H(s))ds$.
\end{proposition}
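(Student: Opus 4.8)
The plan is to prove the identity \eqref{4} by the standard two-inequality dynamic programming argument, resting on two structural facts. First, $\zeta$ — the first jump time of the finite-state Markov chain $H$ — is an $\mathbf{F}$-stopping time, exponentially distributed with parameter $-q_{ii}$ under $\bP_{(x,i)}$ and, by the independence built into the model, independent there of $X^i$ and of $(J_{k\ell})_{k,\ell\in E}$. Second, the triple $((X,H),(J_{k\ell})_{k,\ell\in E})$ is a strong Markov process, so that conditionally on $\cF(\zeta)$ the shifted process $(X(\zeta+\cdot),H(\zeta+\cdot))$ is a fresh copy of the spectrally negative Markov additive model started from $(X(\zeta),H(\zeta))$, where $X(\zeta)=X(\zeta-)+J_{i,H(\zeta)}$ already incorporates the regime-change jump. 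Together with the multiplicative decomposition $e^{-\Lambda(t)}=e^{-\Lambda(\zeta)}\,e^{-(\Lambda(t)-\Lambda(\zeta))}$ for $t\ge\zeta$, in which $\Lambda(t)-\Lambda(\zeta)=\int_\zeta^t r(H(s))\,\diff s$ is a functional of the post-$\zeta$ path only, these facts reduce the ``tail'' of any admissible strategy's payoff to an evaluation of the value function at the post-switch state.

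For the inequality $V(x,i)\le(\text{right-hand side of }\eqref{4})$, I would fix an arbitrary $\pi\in\mathcal A$ and split its payoff at $\zeta$, writing $\int_{[0,\infty)}=\int_{[0,\zeta]}+\int_{(\zeta,\infty)}$ for the capital-injection integral and $\int_0^\infty=\int_0^\zeta+\int_\zeta^\infty$ for the dividend integral; the splitting is licit by Tonelli's theorem, since the dividend part is non-negative and the injection part is integrable by \eqref{adm_ci}. Conditioning the post-$\zeta$ contribution on $\cF(\zeta)$ and using the two facts above, it equals $e^{-\Lambda(\zeta)}$ times the payoff of the time-shifted restriction $\pi'$ of $\pi$ to $[\zeta,\infty)$ — with the jump $J_{i,H(\zeta)}$ and any atom of $R^\pi$ at $\zeta$ absorbed into the starting point $U^\pi(\zeta)$ — for the post-switch problem started at $(U^\pi(\zeta),H(\zeta))$. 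Since $\pi'$ is admissible there, this conditional payoff is at most $e^{-\Lambda(\zeta)}V(U^\pi(\zeta),H(\zeta))$; taking $\bE_{(x,i)}$ and then the supremum over $\pi\in\mathcal A$ gives the desired inequality. (Measurability of $(y,j)\mapsto V(y,j)$, used here, follows from the defining supremum, or a posteriori from the explicit formula obtained in Sections \ref{sec4}--\ref{sec5}.)

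For the reverse inequality I would fix $\varepsilon>0$ and a strategy $\pi_1$ admissible on $[0,\zeta]$, and by a measurable selection argument choose, for each $(y,j)\in\R\times E$, an admissible $\varepsilon$-optimal strategy $\pi_2^{y,j}$ with $V_{\pi_2^{y,j}}(y,j)\ge V(y,j)-\varepsilon$, in such a way that $(y,j,\omega)\mapsto\pi_2^{y,j}(\omega)$ is jointly measurable. Concatenating $\pi_1$ on $[0,\zeta]$ with $\pi_2^{U^{\pi_1}(\zeta),H(\zeta)}$ on $(\zeta,\infty)$ — adjoining, if necessary, a lump injection at time $\zeta$ (part of $\pi_1$) just large enough to restore $U^\pi(\zeta)\ge0$ after the downward jump $J_{i,H(\zeta)}$ — produces an admissible $\pi\in\mathcal A$, and the conditioning argument now yields
\begin{align*}
V_\pi(x,i)=\bE_{(x,i)}\bigg[&\int_0^{\zeta}e^{-\Lambda(t)}\ell^{\pi_1}(t)\,\diff t-\int_{[0,\zeta]}\beta e^{-\Lambda(t)}\,\diff R^{\pi_1}(t)\\
&+e^{-\Lambda(\zeta)}V_{\pi_2^{U^{\pi_1}(\zeta),H(\zeta)}}\big(U^{\pi_1}(\zeta),H(\zeta)\big)\bigg].
\end{align*}
Bounding the last term below via $V_{\pi_2^{y,j}}(y,j)\ge V(y,j)-\varepsilon$ evaluated at $(y,j)=(U^{\pi_1}(\zeta),H(\zeta))$, then taking the supremum over $\pi_1$ and letting $\varepsilon\downarrow0$, gives $V(x,i)\ge(\text{right-hand side of }\eqref{4})$ and hence \eqref{4}.

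The main obstacle, as usual, is the measurable selection step in the second inequality: one must produce $\varepsilon$-optimal admissible controls that depend jointly measurably on the random post-switch data $(U^{\pi_1}(\zeta),H(\zeta))$, and then verify that the pasted control is genuinely admissible across $\zeta$ — that $U^\pi\ge0$ survives the negative jump $J_{i,H(\zeta)}$, that the density constraint $\ell^\pi(t)\in[0,\delta(H(t))]$ and the integrability requirement \eqref{adm_ci} are preserved, and that the single instant $\zeta$ contributes no spurious or missing payoff. The remaining ingredients — the strong Markov property at $\zeta$, the multiplicative form of the discount factor, and the integrability estimates coming from $\ell^\pi\le\max_{j\in E}\delta(j)$, \eqref{adm_ci}, and Assumptions \ref{AAA1}--\ref{AAA2} — are routine.
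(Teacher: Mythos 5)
Your overall architecture — two inequalities, splitting the payoff of an arbitrary admissible strategy at $\zeta$, invoking the strong Markov property and the multiplicative structure of the discount factor, and concatenating a pre-switch strategy with post-switch $\varepsilon$-optimal strategies — matches the paper's proof, and your treatment of the easy inequality ($V(x,i)\le$ RHS of \eqref{4}) is fine. The gap is exactly where you flag it: you invoke "a measurable selection argument" to choose $\varepsilon$-optimal strategies $\pi_2^{y,j}$ jointly measurably in $(y,j,\omega)$, but you never produce one. In this pathwise control setting (strategies are adapted processes, not Markov feedback maps), there is no off-the-shelf selection theorem that delivers a jointly measurable family of $\varepsilon$-optimal \emph{admissible} controls together with the admissibility of the pasted control; asserting it leaves the hard half of the proposition unproved.

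The paper closes this gap with two preparatory lemmas that make the selection trivial. Lemma \ref{Lipschitz_con_vf} shows $0\le V(y,i)-V(x,i)\le\beta(y-x)$ by a comparison argument: starting from $y>x$ one can run the $\varepsilon$-optimal strategy for $x$ and absorb the surplus $y-x$ into the capital-injection term, and conversely starting from $x$ one can inject $y-x$ at time zero at cost $\beta(y-x)$ and then run the $\varepsilon$-optimal strategy for $y$. Lemma \ref{LemA01} then uses this Lipschitz bound to construct, for each $\epsilon>0$ and each compact interval, a \emph{single} strategy that is uniformly $\epsilon$-optimal over all starting points in that interval (partition into a grid of mesh $M/N$, take $\varepsilon$-optimal strategies at the grid points, and from a generic $x$ inject capital up to the nearest grid point above). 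In the proof of Proposition \ref{Prop101} one then only needs the countable family $(\pi_\epsilon^{k})_{k\in\bZ\cap[0,\infty)}$, one strategy per interval $[kM,(k+1)M)$, and the concatenated control
\begin{align}
\ell^{\pi_\epsilon}(t)=\ell^\pi(t)1_{\{t<\zeta\}}+\sum_{k=0}^\infty\bigl(\ell^{\pi_\epsilon^{k}}(t-\zeta)\circ\theta_\zeta\bigr)1_{\{t\ge\zeta\}}1_{\{U^\pi(\zeta)\in[k,(k+1))\}}
\end{align}
is manifestly adapted and admissible, with no selection theorem required. If you want to complete your proof along your own lines, you should replace the appeal to measurable selection by this Lipschitz-plus-uniform-approximation device (or prove an equivalent regularity statement for $V$); otherwise the second inequality does not go through.
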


\subsection{Markov-modulated refraction-reflection strategies}
For the optimal control, we will consider the Markov-modulated refraction-reflection strategy, say $\pi^{b} = ((L^{0,b}(t), R^{0,b}(t)), t \geq 0)$, with a suitable refraction level $b=(b(i),i\in E)$. Namely, dividends are paid at the maximal rate $\delta(H(t))$ whenever the surplus process is above $b(H(t))$ while it is pushed upward by capital injection whenever it attempts to downcross zero. The resulting surplus process becomes the Markov-modulated refracted-reflected \lev process given by
%\begin{align*}
$U^{0,b}(t) := X(t) - L^{0,b}(t) + R^{0,b}(t)$, $t\geq 0$.
%\end{align*}
We can explicitly describe the cumulative dividend control modulated by regime states as
\begin{align*}
L^{0,b} (t)= \int_0^t \delta(H(s)) 1_{\{ U^{0,b}(t) > b(H(s)) \}} \diff s\ =:\int_0^t l^{0,b}(s)\diff s.
\end{align*}
%For the case of bounded variation, we can also write the candidate capital injection as
%\begin{align*}
%$R^{0,b} (t)= \sum_{0 \leq s \leq t} |U^{0,b}(s-)| 1_{\{U^{0,b}(s-) < 0\}}.$
%\end{align*}
Let $\cE$ be a set of functions from $E$ to $[0, \infty)$. The next theorem is the main result of this paper and its proof will be provided by an iterative construction of the value function $V$ in Section \ref{sec5}.

\begin{remark}
The Markov-modulated refraction-reflection strategy $\pi^b$ is indeed admissible. To wit, if we set $T_0=0$, and for each $n\geq 1$ we denote by $T_n$ the $n$-th jump time of $H$, then by the strong Markov property
\begin{align*}
\bE_{(x, i)} \left[ \int_{[0, \infty)} e^{-\int_0^t r(H(s)) ds} dR^{\pi^b}(t)\right]
=&%\sum_{i\in\bN}
\sum_{n\geq 1}%?
\bE_{(x, i)} \left[ \int_{[T_{n-1}, T_n)} e^{-\int_0^t r(H(s)) ds} dR^{\pi^b}(t)\right] \\
=&M_b+%\sum_{i\in E}
\sum_{n\geq 1}%?
M_e^{n}(M_J + M_b )<\infty, 
\end{align*}
%[The sum $\sum_{i\in\bN}$ should be $\blue{\sum_{n\geq 1}}$?]
where
\begin{align*}
M_b:=\max_{i\in E}\bE_{(0, i)} \left[ \int_{[0, T_1)} e^{-r(i) t} dR^{\pi^b}(t)\right],\ 
M_J:=\max_{i,j\in E}\E[-J_{ij}],\ 
\text{and}\ \ 
M_e:= \max_{i \in E} \E_{(0, i)} \left[   e^{-r(i)T_1}\right]. 
\end{align*}	
Therefore by Assumption \ref{AAA1}, Assumption \ref{AAA2} and the admissibility of the refraction-reflection strategies, we conclude that \eqref{adm_ci} holds.
\end{remark}

\begin{theorem}\label{Thm201}
Under Assumptions \ref{AAA1}, \ref{AAA2} %\ref{cond_cir} 
and \ref{AAA4}, there exists a function $b^\ast \in \cE$, such that the Markov-modulated refracted-reflected strategy $\pi^{b^*}$ is optimal and the value function of the problem \eqref{vf_rs} is given by
\begin{align*}
V(x, i) = V_{\pi^{b^\ast}} (x, i),\qquad\textrm{for}\ x \in [0, \infty).
\end{align*}
\end{theorem}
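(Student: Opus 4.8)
The plan is to implement the two-step program announced in the introduction. First, I would solve in a single spectrally negative L\'evy model the auxiliary bail-out dividend problem with absolutely continuous constraint and a final payoff evaluated at an independent $\mathbf{e}_q$-distributed terminal time: given a (sufficiently regular) terminal reward function, compute explicitly, via the fluctuation identities \eqref{111}--\eqref{111_der} for the refracted-reflected process, the expected NPV of dividends minus $\beta$ times capital injection under a refraction-reflection strategy with barrier $b$, then perform the guess-and-verify procedure (smooth fit at $b$, slope condition $\partial_x(\text{value})\le 1$ above $b$, and a variational/generator inequality below $b$) to identify a candidate optimal barrier $b$ and show it is genuinely optimal. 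This is carried out in Section~\ref{sec4}; I assume its conclusion. The role of the terminal reward here will be played by $V(\cdot,H(\zeta))$ discounted over the holding time, so the auxiliary problem must be solved for a reasonably general class of terminal data (continuous, nonnegative, with controlled growth), not just a fixed one.

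Second, using Proposition~\ref{Prop101} I would recast the global problem as a fixed-point problem on $\cE$. Define an operator $\mathcal{G}$ that sends a vector-valued function $g=(g(\cdot,i))_{i\in E}$ to the function whose $i$-th component is the value of the single-regime auxiliary problem of step one, run with the L\'evy triple of $X^i$, drift cap $\delta(i)$, discount rate $r(i)$, terminal time $\zeta=T_1$ (exponential with rate $-q_{ii}$), and terminal payoff
\[
\E\bigl[\,g\bigl(U(\zeta)+J_{i,H(\zeta)},\,H(\zeta)\bigr)\,\big|\,\mathcal{F}(\zeta-)\bigr],
\]
the regime jump $J_{ij}$ and the new regime $j$ being sampled according to the usual construction. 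By Proposition~\ref{Prop101}, $V$ is a fixed point of $\mathcal{G}$. I would then: (a) from step one, for each candidate input produce a regime-modulated barrier vector $b(g)\in\cE$ and show the iteration preserves an appropriate order-bounded set of functions (bounded below by $0$, above by the unconstrained or ``always pay at full rate'' value, and Lipschitz-type estimates on the excess over the identity); (b) establish a contraction or monotonicity property of $\mathcal{G}$ — the natural route is monotonicity plus a strict discount factor coming from $M_e<1$ (cf. the admissibility remark), i.e. $\|\mathcal{G}g_1-\mathcal{G}g_2\|_\infty\le M_e\|g_1-g_2\|_\infty$ using that the auxiliary value depends on the terminal data only through an expectation weighted by $e^{-\Lambda(\zeta)}$; (c) conclude by Banach's fixed point theorem that $\mathcal{G}$ has a unique fixed point $V^\ast$ in the chosen space, that the iterates converge, and that the associated barrier vector $b^\ast:=b(V^\ast)\in\cE$ is well defined. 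Finally I would verify that the Markov-modulated refraction-reflection strategy $\pi^{b^\ast}$ actually attains $V^\ast$: by construction $V_{\pi^{b^\ast}}$ solves the same local optimization (it is the ``never deviate'' strategy that is optimal in each holding interval by step one), so $V_{\pi^{b^\ast}}$ is also a fixed point of $\mathcal{G}$ in the space, hence equals $V^\ast$ by uniqueness; since $V^\ast=V$ by Proposition~\ref{Prop101} and a verification (upper-bound) lemma, this gives $V=V_{\pi^{b^\ast}}$.

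A couple of technical points deserve care and I would address them explicitly. One must check that the terminal reward fed into the operator stays in the admissible class: continuity and the growth bound on $V(\cdot,i)$ follow from the explicit form coming out of step one together with Assumptions~\ref{AAA1}--\ref{AAA2}, which control $\E[X_1^i]$ and $\E[-J_{ij}]$ and hence the linear growth of the value and the finiteness of the capital-injection term; Assumption~\ref{AAA4} guarantees $Y^i$ is non-monotone so that $\mathbb{W}^{(q)}$ and the refracted-reflected identities are available in each regime. One must also be careful that ``optimality within each holding interval implies global optimality'' is not circular: the clean way is the fixed-point uniqueness argument just described, supplemented by a standard verification lemma showing any $g=\mathcal{G}g$ in the space dominates $V_\pi$ for every admissible $\pi$ (via the dynamic programming inequality of Proposition~\ref{Prop101} applied iteratively over successive regime switches and a monotone/dominated convergence passage, using \eqref{adm_ci} and $M_e<1$ to kill the tail).

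The main obstacle I expect is step one: solving the single-regime bail-out dividend problem with an \emph{exponential-time terminal payoff} for a \emph{spectrally negative} L\'evy process. The discounting-at-rate-$r$ combined with killing at rate $q$ effectively replaces $r$-scale functions by $(r+q)$-scale functions in some places but not others, and the terminal reward contributes a potential term $\E^x[e^{-r\mathbf{e}_q}\,(\text{reward at }U(\mathbf{e}_q))]$ whose evaluation under the refracted-reflected dynamics requires the full suite of two-sided exit and resolvent identities for the refracted-reflected process from \cite{PerYam2018,KL}, together with delicate smooth-fit and slope-condition computations to pin down the barrier; ensuring these are compatible with a general (merely continuous, linearly growing) terminal reward, rather than an affine one, is where the real work lies. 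By contrast, the fixed-point wrap-up in step two is, given step one, a relatively routine contraction argument driven by the uniform bound $M_e<1$.
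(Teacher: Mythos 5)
Your two-step architecture is exactly the paper's: Section~\ref{sec4} solves the single-regime exponential-horizon problem, and Section~\ref{sec5} sets up the local operator via Proposition~\ref{Prop101}, proves it is a contraction with constant $\sup_{i}\E_{(0,i)}[e^{-r(i)\zeta}]<1$, and identifies $V_{\pi^{b^*}}$ with $V$ by fixed-point uniqueness (Proposition~\ref{Prop203}, Corollary~\ref{Cor201}, Proposition~\ref{Prop301}). Two points in your step two, however, are not routine and as written would break.

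First, the class of terminal rewards you propose to iterate on --- ``continuous, nonnegative, with controlled growth'' --- is too large. Theorem~\ref{Prop102} delivers an optimal \emph{barrier} only under Assumption~\ref{AA3}: the payoff $w$ must be concave with $w'_+(0+)\le\beta$ and $w'_+(\infty)\in[0,1]$; the selection of $b^*$ via the sign of $g$ in \eqref{174} and the monotonicity of $g/h$ in \eqref{smooth_fit} rests entirely on this concavity. For a merely continuous terminal reward the refraction-reflection strategy need not be optimal in the single-regime subproblem, and your map $g\mapsto b(g)$ is then not even defined. So the substantive work in step two is to exhibit an invariant class: the paper's $\cD$ requires $\tilde f(\cdot,i)$ concave with the slope bounds $\tilde f'_+(0+,i)\le\beta$ and $\tilde f'_+(\infty,i)\in[0,1]$, Proposition~\ref{classD} shows that concavity plus $f'\le\beta$ is inherited by $\tilde f$, and Remark~\ref{rem_U_T} (via Lemma~\ref{thm:smoothfit} and Lemma~\ref{lemma_slope}) shows $\Theta$ maps $\cD$ into itself. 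This is precisely the step your ``relatively routine contraction wrap-up'' elides.

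Second, your terminal payoff $\E\bigl[g(U(\zeta)+J_{i,H(\zeta)},H(\zeta))\mid\cF(\zeta-)\bigr]$ is evaluated at a possibly negative argument. Since the regime-switch jump $J_{ij}$ can push the surplus below zero, admissibility forces an immediate capital injection at cost $\beta$ per unit, so the correct terminal functional is the paper's $\tilde f$ of \eqref{finite_tilde}, which carries the term $\beta(x+y)+f(0,j)$ on the event $\{-y>x\}$. Omitting this leaves the operator undefined on $[0,\infty)$ and also loses the slope $\beta$ at the origin that yields $\tilde f'_+(0+,i)\le\beta$ and keeps the iteration inside $\cD$. With these two repairs your argument coincides with the paper's proof.
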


\section{Optimal strategies for an auxiliary bail-out dividend problems until an exponential terminal time}\label{sec4}
In order to prove Theorem \ref{Thm201} above,
%investigate the optimality of the refraction-reflection strategy 
%for spectrally negative L\'evy models with Markov regime-switching, 
we will first introduce and study an auxiliary bail-out optimal dividend problem with a final payoff at an independent exponential terminal time in a single spectrally negative L\'evy model. %Our main result in this section %is interesting for its own sake and is new to the literature for general spectrally negative L\'evy models.

\subsection{Bail-out optimal dividend problem with absolutely continuous strategies and an exponential terminal time}
In a model with a single spectrally negative L\'evy process $X(t)$,  let $\pi = \{ (L^\pi(t), R^\pi(t)) : t \geq 0 \}$ be non-decreasing, right-continuous and adapted processes with respect to $\bf F$ starting at zero
where $L^\pi$ is the cumulative amount of dividends and $R^\pi$ is that of injected capital.
With $U^\pi (0-):=x$ and
\begin{align}
U^\pi(t):= X(t)-L^\pi(t)+R^\pi(t), ~~~t\geq 0,
\end{align}
it is required that $U^\pi(t) \geq 0$ a.s. uniformly in $t$.
Let us consider a constant $\delta >0$, then we require that $L^\pi$ is absolutely continuous with respect to
the Lebesgue measure of the form $L^\pi(t) = \int_0^t l^\pi(s) ds$, $t\geq 0$, with $l^\pi$ restricted to take values in $[0, \delta]$
uniformly in time. About the capital injection $R^\pi$, it is assumed that for $q>0$ and $x\geq0$,
\begin{align}
\bE_{x} \left[ \int_{[0, \zeta)} e^{-qt} dR^\pi(t)\right]< \infty , \label{9}
\end{align}
where $\zeta$ is an independent exponential random variable of parameter $r>0$.

For $\beta> 1$ as the cost per unit of injected capital and $q>0$ as the discount factor, the expected net present value (NPV) of the dividend payments minus the cost of capital injection associated with a strategy $\pi$ with initial capital $x\geq 0$ is defined by
\begin{align}
v_\pi (x) := &\bE_{x} \left[\int_0^\zeta e^{-qt}\diff L^\pi(t)  -\beta\int_{[0, \zeta)} e^{-qt}dR^\pi(t)
+e^{-q\zeta} w(U^\pi(\zeta))\right]\notag\\
=&r \int_0^\infty e^{-rs}\bE_{x} \left[\int_0^s e^{-qt}\diff L^\pi(t)  -\beta\int_{[0, s)} e^{-qt}dR^\pi(t)
+e^{-qs} w(U^\pi(s))\right]ds\notag\\
=&\bE_{x} \left[\int_0^\infty e^{-\alpha t}\diff L^\pi(t) -\beta\int_{[0, \infty)} e^{-\alpha t}dR^\pi(t)
+r\int_0^\infty e^{-\alpha t} w(U^\pi(t))dt\right],\label{value_func}
\end{align}
where $\alpha= q + r$ and $w\in C[0, \infty)$ is a function representing the payoff at an exponential time. It is worth noting that $\bP_x$ is the law of $X$ and $\zeta$ is an independent random variable such that $X$ does not jump at $\zeta$, therefore we have $U^\pi (\zeta -)= U^\pi(\zeta)$, $\bP_x$-a.s.

The value function for the stochastic control problem is then formulated as
\begin{align}
v_{\pi^\ast} (x) := \sup_{\pi ^\in \Pi} v_\pi (x),\qquad\text{$x\in\R$,}\label{vf_def}
\end{align}
where $\Pi$ denotes the set of all admissible strategies that satisfy the constraints described previously. We also aim to obtain the optimal dividend strategy $\pi^*$ which achieves the value function in \eqref{vf_def}.

Throughout this section, we will make the next standard assumption on the L\'evy process $X$.
%\begin{assump}\label{assum_L}
%We assume that there exists $\overline{\theta}>0$ such that $\int_{(-\infty,0)}e^{\overline{\theta}|z|}\Pi(dz)<\infty$. In particular this implies that $\E\left[X_1\right]=\psi'(0+)>-\infty$.
%\end{assump}
%\red{[Is not $\int_{(-\infty,0)}e^{\overline{\theta}|z|}\Pi(dz)<\infty$ necessary in the first part?]\\}
%\blue{[Kei: I am not sure, we usually assume this for the resolvent case (inventory control), to assure we can derivate under the integral sign, etc. I put it for the moment to be safe.]}
%\red{[JL: OK. Thanks!]}\blue{[TO DO.]}
\begin{assump}\label{AA1}
	We assume that $\E\left[X_1\right]=\psi'(0+)>-\infty$.
\end{assump}
\par %On the other hand, we will make another assumption that is 
A common assumption in the literature (see \cite{KLP,PerYamYu2018}) is also required to avoid that the process $Y:=\{ Y(t) = X(t) -\delta t: t\geq 0\}$ has monotone paths.
\begin{assump}\label{AA2}
	For the case that the process $X$ has bounded variation paths, it is assumed that $c>\delta$.
\end{assump}
For the payoff function $w(\cdot):[0, \infty)\to\bR$, we also require the following conditions.
\begin{assump}\label{AA3}
	It is assumed that $w$ is continuous on $[0,\infty)$ and concave with $w'_+(0+)\leq \beta$ and $w'_+(\infty):=\lim_{x\to\infty}w'_+(x)\in[0,1]$, where we use $w'_+(x)$ to represent the right derivative of a concave function $w(x)$.
	%\blue{[Should we add that $w$ is right-continuous at $0$ for the integration by parts to hold?]}\green{[I agree with you.]}
\end{assump}
Starting from this point onwards, we will distinguish the two notations $f'(x)$ and $f'_+(x)$ for a given function $f(x)$, the former one denotes its standard derivative and the latter one denotes its right derivative.  
%Here and in the remainder of the paper for a concave function $f$, $f^{\green{(\prime)}}$ denotes its right derivative. 
%\green{[Here, $w^{\green{(\prime)}}$ denotes its right derivative. For other function $f$, $f^\prime$ denotes its derivative. ]}\blue{[I am not sure because we also need the right derivatives for functions on $\mathcal{D}$ in Section 5? Perhaps it would be better to write a short remark explaining that the differentiability of $w$ is not required for the proof of Lemmas 4.2-4.6?]}
%\red{[Xiang: Maybe it is better to distinguish the notations of derivative and right derivative? For any given function $f(x)$, we can denote $f'(x)$ as the standard derivative and denote $f^{(')}(x)$ as its right derivative. Then in Section 4 and Section 5, it will be much easier to see the difference between $w^{(')}(x)$ and $v'_b(x)$ and see which functions are truly differentiable.]}
%\green{[It may be as you say. I changed. How do you think the derivative of the scale function should be written when $X$ has bounded variation paths?]}\blue{[I am sorry, but how about using $f_+'$ instead of $f^{(')}(x)$, I have seen this notation in the literature and perhaps it is less confusing for the referee?]}

\begin{remark}
If we assume that the final payoff function $w(x)\equiv 0$ in the objective function \eqref{value_func}, the bail-out optimal dividend problem \eqref{vf_def} is essentially the same as the one studied in \cite{PerYamYu2018} after replacing the parameter $q$ in \cite{PerYamYu2018} by $q+r$ in the present paper. This motivated us to conjecture that the mathematical argument and the optimality of the barrier strategy in \cite{PerYamYu2018} may also hold in this paper with a final payoff function after some technical modifications. 

Let us first recall the ``guess-and-verify'' procedure applied in \cite{PerYamYu2018}, which crucially relies on the fluctuation identities for refracted-reflected L\'{e}vy processes developed in \cite{PerYam2018}: (i) one first selects a candidate refraction-reflection strategy and its associated barrier level via the smooth fit principle such that the NPV becomes continuously (resp. twice continuously) differentiable at the threshold for the case of bounded (resp. unbounded) variation; (ii) the optimality of the selected strategy is confirmed by verifying the variational inequalities that require the computation of the generators and certain slope conditions of the value function. In principle, as long as $w(x)\equiv 0$, one can simply apply some probabilistic killing arguments as in \cite{JeI} and modify some proofs in \cite{PerYam2018} and \cite{PerYamYu2018} to conclude the optimality of a barrier strategy in problem \eqref{vf_def}. Actually, in view of the exponential distribution of $\zeta$, all results in \cite{PerYamYu2018} hold true for problem \eqref{vf_def} in a trivial way with the revised parameter $q+r$.
%Actually, for the random variable $\zeta$ with a more general distribution, one can also apply some probabilistic killing arguments in \cite{JeI} and modify the methods in \cite{PerYam2018} and \cite{PerYamYu2018} to verify the optimality of a barrier strategy when there is no final payoff $w(x)$. \red{I am not sure about this last sentence. Is it clear that we can apply the results in \cite{JeI} for more general distributions for $\zeta$? I do not know if we should remove this sentence from the paper (and just say this in the response letter) or leave it to make the referee happy. What do you think?}

As opposed to \cite{PerYamYu2018}, the presence of the payoff function $w(x)$ in \eqref{value_func} renders the problem more difficult and the probabilistic killing arguments in \cite{JeI} can not be applied directly. We aim to follow the aforementioned ``guess-and-verify'' procedure. However, most computations differ substantially from \cite{PerYamYu2018} and the verification part in step (ii) becomes more involved as it relies on some new fluctuation identities and the concavity of the payoff function $w(x)$ as in Assumption \ref{AA3}, see for instance the proof of Lemma \ref{lem4-1} and Lemma \ref{lemma_slope}. To make the presentation complete and self-contained, we will formally perform a ``guess-and-verify'' procedure step by step to incorporate the payoff function $w(x)$ in the subsequent subsections.

\end{remark}

\subsection{Refraction--Reflection strategies for spectrally negative L\'evy processes}\label{Sec103}
We proceed to verify the optimality of a refraction-reflection strategy $\pi^b =\{  (L^{(0,b)}(t), R^{(0,b)}(t))  :  t \geq 0\}$ with an appropriate refraction threshold $b\geq0$. Under this strategy, dividends are paid at the maximal rate $\delta>0$ whenever the surplus process is above the level $b$ while the process is pushed upward by injecting capital whenever the process attempts to downcross below $0$. The resulting surplus process
\[
U^{(0,b)}(t)=X(t)-L^{(0,b)}(t)+R^{(0,b)}(t),\qquad\text{$t \geq0$,}
\]
%\red{\[
%[U^{(0,b)}(t)=X(t)-L^{(0,b)}(t)+R^{(0,b)}(t),\qquad\text{$t \geq0$,}]
%\]}
is the so-called refracted-reflected spectrally negative L\'evy process, which was first introduced and studied in \cite{PerYam2018}.

%We can acknowledge that the candidate dividend strategy is of the bang-bang type, namely dividends are paid out at the maximum rate $\delta$ or at the minimum rate $0$. On the other hand, the way that capital is injected corresponds to a reflection strategy and hence it fits into the framework of singular control. Consequently, 
The cumulative dividend control can be explicitly expressed as $L^{(0,b)}(t) = \int_0^t \delta 1_{\{ U^{0,b}(s) > b \}} \diff s$,
%\red{\begin{align*}
%[L^{(0,b)}(t) = \int_0^t \delta 1_{\{ U^{0,b}(s) > b \}} \diff s,]
%\end{align*}}
and for the case of bounded variation, we can write the candidate capital injection as
\begin{align*}
R^{(0,b)}(t) = \sum_{0 \leq s \leq t} |U^{0,b}(s-)| 1_{\{U^{0,b}(s-) < 0\}}.
\end{align*}
%\red{\begin{align*}
%[R^{(0,b)}(t) = \sum_{0 \leq s \leq t} |U^{0,b}(s-)| 1_{\{U^{0,b}(s-) < 0\}}.]
%\end{align*}}
For a formal construction of this process, we refer the reader to \cite{PerYam2018}. For any $b \geq 0$, the admissibility of the aforementioned refraction-reflection strategy $\pi^{b}$ follows from Lemma 3.1 in \cite{PerYamYu2018}.

We denote the corresponding expected NPV by
\begin{align}
v_{b} (x) := &\bE_{x} \left[\int_0^\zeta e^{-qt}\diff L^{(0,b)}(t)  -\beta\int_{[0, \zeta)} e^{-qt}dR^{(0,b)}(t)
	+e^{-q\zeta} w(U^{(0,b)}(\zeta)) \right]\notag\\
=&\bE_{x} \left[\int_0^\infty e^{-\alpha t}\diff L^{(0,b)}(t) -\beta\int_{[0, \infty)} e^{-\alpha t}dR^{(0,b)}(t)
	+r\int_0^\infty e^{-\alpha t} w(U^{(0,b)}(t))dt\right].\label{v_pi}
	\end{align}
%\blue{[Added this to show that $T_bf$ is bounded for bounded $f$.]}\red{[Thanks!]}
\begin{remark}\label{bounded_v}
	If the function $w:[0,\infty)\mapsto\R$ is bounded, then by using \eqref{9} together with \eqref{v_pi}, we have that for any $b\geq0$,
\begin{align*}
|v_{b} (x)|&\leq \frac{\delta}{\alpha}+\beta \bE_{x} \left[\int_{[0, \infty)} e^{-\alpha t}dR^{(0,b)}(t)\right]+\sup_{x\geq0}|w(x)|\frac{r}{\alpha}\\&\leq \frac{1}{\alpha}\left(\delta+r\sup_{x\geq0}|w(x)|\right)+\beta \bE_{0} \left[\int_{[0, \infty)} e^{-\alpha t}dR^{(0,b)}(t)\right]<\infty,\qquad\text{$x\geq0$,}
\end{align*}
where the second inequality follows from the fact that the mapping $x\mapsto\bE_{x} \left[\int_{[0, \infty)} e^{-\alpha t}dR^{(0,b)}(t)\right]$ is non-increasing. Therefore $v_b$ is bounded.
\end{remark}

The next result confirms the optimality of the barrier type dividend control for the auxiliary problem.
\begin{theorem}\label{Prop102}
Under Assumptions \ref{AA1}, \ref{AA2} and \ref{AA3}, there exists a constant barrier $0\leq b^*<\infty$ such that the refracted-reflected strategy at the level $b^*$, i.e. $\pi^{b^*}$, is optimal and the value function is given by $v(x)=v_{b^*}(x)=v_{\pi^*}(x)$ for all $x\geq 0$.
\end{theorem}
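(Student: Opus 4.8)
\textbf{Proof plan for Theorem~\ref{Prop102}.}
The strategy is the classical ``guess-and-verify'' scheme adapted to the exponentially-killed setting. First I would compute the expected NPV $v_b(x)$ associated with the refraction–reflection strategy $\pi^b$ explicitly in terms of the scale functions $W^{(\alpha)}$, $\bW^{(\alpha)}$, $Z^{(\alpha)}$, $\bZ^{(\alpha)}$ of $X$ and of $Y = X - \delta t$, together with the resolvent/potential integrals against $w$. Writing $\zeta$ as an independent $\mathrm{Exp}(r)$ time lets me reduce the problem to an $\alpha$-discounted infinite-horizon functional with $\alpha = q+r$; then the fluctuation identities for the refracted-reflected process from \cite{PerYam2018}, in particular the resolvent density of $U^{(0,b)}$ and the expected discounted capital injection $\bE_x[\int_{[0,\infty)} e^{-\alpha t}\,\diff R^{(0,b)}(t)]$, give a closed form for $v_b$. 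The decomposition is: the dividend term contributes $\delta/\alpha$ minus a correction supported on $[0,b]$ expressed through $\bW^{(\alpha)}$; the injection term contributes a negative multiple of $Z^{(\alpha)}$-type quantities; and the $w$-term contributes $r\int_0^\infty G_b(x,y) w(y)\,\diff y$ where $G_b$ is the known resolvent kernel of the refracted-reflected process.

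Second, I would identify the candidate optimal barrier $b^*$ via the smooth-fit / slope condition. Differentiating $v_b(x)$ in $x$ and evaluating the dependence on $b$, the natural first-order condition is that the marginal value $v_b'(x)$ at $x=b$ equals $1$ (the slope of the dividend payoff), or equivalently a condition of the form $v_b'(b+)=1$; using the explicit formula this reduces to a scalar equation $h(b)=0$ where $h$ is continuous. I would show $h(0)\le 0 \le \lim_{b\to\infty} h(b)$ (or the reverse), using Assumption~\ref{AA3} — the hypotheses $w_+'(0+)\le\beta$ and $w_+'(\infty)\in[0,1]$ are exactly what make the two endpoint inequalities go the right way — and invoke the intermediate value theorem to produce $b^*\in[0,\infty)$. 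One must also check via the behaviour of the scale functions at $0$ (Remark~\ref{remark_scale_function_properties}) that the degenerate case $b^*=0$ is handled, and that $b^*<\infty$ because otherwise the injection cost $\beta>1$ would make large barriers suboptimal.

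Third comes the verification lemma. I would show that $v := v_{b^*}$ is sufficiently smooth ($C^1$ on $(0,\infty)$, and $C^2$ off the L\'evy-measure atoms when $\sigma>0$, using Remark~\ref{remark_scale_function_properties}(1)), satisfies $0\le v'(x)\le \beta$ everywhere (concavity-type bounds, again leaning on Assumption~\ref{AA3}), satisfies the variational inequality
\[
\max\Big\{ (\cL - \alpha) v(x) + r\,w(x) + \delta\,(1 - v'(x))^+ - \delta\, v'(x)\,\mathbbm{1}_{\{\text{no-dividend region}\}},\ \ 1 - v'(x)\Big\} \le 0,
\]
written more carefully as the standard pair of inequalities: $(\cL^X-\alpha)v + r w \le 0$ with $v'\le 1$ on $[0,b^*]$, and $(\cL^Y-\alpha)v + r w + \delta \le 0$ (the refracted generator) with $v' \le 1$ on $[b^*,\infty)$, plus the reflection boundary condition $v'(0)=\beta$ (or $v'(0+)\ge\beta$ forcing equality) encoding costless-at-the-margin injection. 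Then for an arbitrary admissible $\pi$ I apply the Itô–Meyer / Dynkin formula to $e^{-\alpha t} v(U^\pi(t))$, use the bounds on $v'$ to control the $\diff L^\pi$ and $\diff R^\pi$ terms, use the variational inequality to kill the drift, take expectations, and let $t\to\infty$ (the boundedness of $v$ from Remark~\ref{bounded_v} plus the integrability \eqref{9} handle the limit), obtaining $v(x)\ge v_\pi(x)$; equality for $\pi^{b^*}$ follows since all inequalities become equalities along that strategy.

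\textbf{Main obstacle.} The hard part is the verification that $v_{b^*}$ solves the variational inequality on the \emph{whole} half-line — specifically proving $(\cL^X-\alpha)v + rw \le 0$ on the dividend region $(b^*,\infty)$ (where $v$ is built from the ``$Y$-side'' scale functions) and $v' \ge 1$ on $[0,b^*]$ (the ``$X$-side''), since these require sign information about combinations of $W^{(\alpha)}$, $\bW^{(\alpha)}$ and the potential term involving $w$ that are not monotone a priori. This is where the concavity of $w$ and the precise endpoint slope conditions in Assumption~\ref{AA3} must be exploited, together with the identities \eqref{111}–\eqref{111_der} linking the two families of scale functions; controlling the $w$-dependent resolvent term and showing it does not destroy the required inequalities is the delicate point, and it is precisely the novelty over \cite{PerYamYu2018}, where $w\equiv 0$.
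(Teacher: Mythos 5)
Your architecture matches the paper's: explicit computation of $v_b$ via the fluctuation identities of \cite{PerYam2018,PerYamYu2018}, selection of $b^*$ by smooth fit, and a verification lemma reducing optimality to a variational inequality plus the bounds $v'\le\beta$ and boundedness below. The existence argument for $b^*$ is also essentially the paper's: the smooth-fit equation $g(b)=0$ is exactly your condition $v_b'(b)=1$, and the endpoint behaviour you invoke (driven by $w'_+(0+)\le\beta$ and $w'_+(\infty)\in[0,1]$) is what the paper uses to get $\lim_{b\to\infty}g(b)<0$ and hence $b^*<\infty$, with $b^*=0$ possible only in the bounded-variation case.

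However, the step you correctly flag as the ``main obstacle'' is precisely the step you do not resolve, and it is the heart of the proof. The paper does \emph{not} verify $(\cL-\alpha)v+rw\le 0$ by sign analysis of scale-function combinations. Instead it does two things. First (Lemma \ref{optimality_1}, imported from \cite{HPY}), the generator inequality on the whole half-line is shown to be \emph{equivalent} to the pure slope conditions $v_{b^*}'\ge 1$ on $(0,b^*]$ and $v_{b^*}'\le 1$ on $(b^*,\infty)$, so no direct generator computation is ever needed. Second, and this is the key device missing from your plan, the derivative admits the probabilistic representation
\begin{align}
v_{b^\ast}^\prime(x)
=\beta - \bE_x\left[ \int_0^{\kappa^-_0} e^{-\alpha t} \rbra{\beta \alpha -rw'_+ (\Gamma^{(b^\ast)} (t))}dt \right]
=\beta\,\E_x\big[e^{-\alpha\kappa_0^-};\kappa_0^-<\infty\big]+r\,\bE_x\Big[ \int_0^{\kappa^-_0} e^{-\alpha t}\, w'_+(\Gamma^{(b^\ast)} (t))\,dt \Big],
\end{align}
where $\Gamma^{(b^*)}$ is the \emph{refracted} (not reflected) process killed on passing below $0$. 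Concavity of $w$ makes $x\mapsto\beta\alpha-rw'_+(x)$ non-decreasing, hence $v_{b^*}'$ is non-increasing and non-negative; the choice $g(b^*)=0$ gives $v_{b^*}'(b^*)=1$; and $w'_+\le\beta$ gives $v_{b^*}'(0+)\le\beta$. All the required inequalities then follow by monotonicity. The same monotone quantity $g(b)/h(b)=v_b'(b)-1$ is what guarantees the threshold is well defined, so this representation is doing double duty. Without it, your proposed route of extracting sign information from combinations of $W^{(\alpha)}$, $\bW^{(\alpha)}$ and the $w$-resolvent term is not a workable plan. One further small correction: the reflection boundary condition is not $v'(0+)\ge\beta$; the paper proves $v_{b^*}'(0+)\le\beta$, which together with the linear-slope-$\beta$ extension of $v_{b^*}$ below zero is what the verification lemma actually requires.
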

%\red{\begin{remark}\label{Rem203}
%The statement of Theorem \ref{Prop102} holds when $r=0$ and $\zeta=0$ by Theorem 5.1 in \cite{PerYamYu2018}. 
%\end{remark}}
%\blue{[I guess that with constant $\beta$ we can remove this Remark?]}
We now provide the construction and verification of the optimal strategy in Theorem \ref{Prop102} in the subsequent subsections.
\subsection{The computation of $v_b$}
Using the results in \cite{PerYam2018}, we have the following equivalent form of the expected NPV \eqref{v_pi} for a refracted-reflected strategy at the level $b\geq0$.
\begin{proposition}\label{enp_aux}
For $q>0$, $b\geq0$ and $x\in \R$, we have
\begin{align}
v_{b}(x)
=&-\delta \overline{\bW}^{(\alpha)}(x- b)+ \beta\rbra{\overline{Z}^{(\alpha)} (x) + \frac{\psi_X^\prime(0+)}{\alpha}}
+\beta \delta \int_b^x \bW^{(\alpha)}(x-y) Z^{(\alpha)}(y) dy \notag\\
&-\frac{\beta Z^{(\alpha)} (b) -1+\beta \alpha \int_0^\infty e^{-\varphi (\alpha) y}W^{(\alpha)}(y+b)dy}
{\alpha \varphi (\alpha)\int_0^\infty e^{-\varphi (\alpha) y}W^{(\alpha)}(y+b)dy}\notag\\
&\times\rbra{Z^{(\alpha)} (x) +\alpha \delta \int_b^x \bW^{(\alpha)}(x-y)W^{(\alpha)}(y)dy}\notag\\
&+r\int_{(0, b)}w'_+(z) \bigg{(}
\frac{ \int_b^\infty e^{-\varphi (\alpha) u}W^{(\alpha)}(u-z)du }
{ \alpha \int_b^\infty e^{-\varphi (\alpha) y} W^{(\alpha)} (y)dy}
\rbra{ Z^{(\alpha)}(x) +\alpha \delta \int_b^x \bW^{(\alpha)} (x- y ) W^{(\alpha)} (y) dy  }\notag\\
&- \rbra{\overline{W}^{(\alpha)} (x-z) + \delta \int_b^x \bW^{(\alpha)} (x - y) W^{(\alpha)}(y-z)dy}
\bigg{)} dz+\frac{rw(0)}{\alpha}\notag\\
&+r\int_{(b, \infty)}w'_+ (z) \bigg{(}\frac{e^{-\varphi (\alpha)z }  }
{\varphi (\alpha)\delta \alpha \int_b^\infty e^{-\varphi (\alpha) y} W^{(\alpha)} (y)dy}\notag \\
&\times \rbra{ Z^{(\alpha)}(x) +\alpha \delta \int_b^x \bW^{(\alpha)} (x- y ) W^{(\alpha)} (y) dy  }
-\overline{\bW}^{(\alpha)}(x- z) \bigg{)} dz.\label{enpv_aux}%,
\end{align}
Here we remind the reader that $w'_+(x)$ denotes the right derivative of the concave final payoff function $w(x)$ as we do not impose the differentiability of $w(x)$. 
%and
%\begin{equation}\label{enp_aux_below_0}
%v_{b}(x)=\beta x+v_{b}(0),\qquad\text{for $x<0$}.
%\end{equation}
\end{proposition}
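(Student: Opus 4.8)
The plan is to exploit linearity of the expectation in \eqref{v_pi} and split
\[
v_b(x)=D_b(x)-\beta I_b(x)+r\,G_b(x),
\]
where $D_b(x):=\bE_x\bracks{\int_0^\infty e^{-\alpha t}\diff L^{(0,b)}(t)}$ is the expected discounted dividends, $I_b(x):=\bE_x\bracks{\int_{[0,\infty)}e^{-\alpha t}\diff R^{(0,b)}(t)}$ the expected discounted capital injection, and $G_b(x):=\bE_x\bracks{\int_0^\infty e^{-\alpha t}w(U^{(0,b)}(t))\diff t}$ the discounted running payoff generated by the terminal cost. Each term is finite: $D_b(x)\le\delta/\alpha$ because $\ell^{(0,b)}\le\delta$; $I_b(x)<\infty$ by the admissibility of $\pi^b$ (Lemma 3.1 of \cite{PerYamYu2018}); and $G_b(x)<\infty$ because Assumption \ref{AA3} makes $w$ grow at most linearly while $\bE_x\bracks{\int_0^\infty e^{-\alpha t}U^{(0,b)}(t)\diff t}<\infty$, the latter following from $\psi_X'(0+)>-\infty$ (Assumption \ref{AA1}) together with the finiteness of $I_b$ via $\bE_x\bracks{\int_0^\infty e^{-\alpha t}R^{(0,b)}(t)\diff t}=\alpha^{-1}I_b(x)$ and $U^{(0,b)}(t)\le X(t)^++R^{(0,b)}(t)$.

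The first two functionals are classical. Applying the fluctuation identities for the refracted--reflected spectrally negative L\'evy process from \cite{PerYam2018} with discount rate $\alpha=q+r$ gives closed forms for $D_b$ and $I_b$ in terms of $W^{(\alpha)},\bW^{(\alpha)},\overline{\bW}^{(\alpha)},Z^{(\alpha)},\overline{Z}^{(\alpha)},\varphi(\alpha)$ and $\psi_X'(0+)$; these are exactly the expressions used for the infinite-horizon bail-out problem in \cite{PerYamYu2018}. Substituting them produces the $w$-free part of \eqref{enpv_aux}: the terms $-\delta\overline{\bW}^{(\alpha)}(x-b)$, $\beta\rbra{\overline{Z}^{(\alpha)}(x)+\psi_X'(0+)/\alpha}$, $\beta\delta\int_b^x\bW^{(\alpha)}(x-y)Z^{(\alpha)}(y)\diff y$, and the part of the coefficient of $Z^{(\alpha)}(x)+\alpha\delta\int_b^x\bW^{(\alpha)}(x-y)W^{(\alpha)}(y)\diff y$ that does not involve $w$.

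For $G_b$, note that under Assumption \ref{AA3} the right derivative $w'_+$ is nonincreasing with $0\le w'_+(\infty)\le w'_+\le w'_+(0+)\le\beta$, so $w(y)=w(0)+\int_{(0,y)}w'_+(z)\,\diff z$ and, by Tonelli (all integrands nonnegative, iterated integral finite by the estimate above),
\[
G_b(x)=\frac{w(0)}{\alpha}+\int_{(0,\infty)}w'_+(z)\,\Theta^{(\alpha)}_b(x,(z,\infty))\,\diff z,
\]
where $\Theta^{(\alpha)}_b(x,\diff y)$ is the $\alpha$-potential measure of $U^{(0,b)}$ started from $x$, which is given explicitly in \cite{PerYam2018}. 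The mass $\Theta^{(\alpha)}_b(x,(z,\infty))$ has two distinct scale-function representations according as $z\in(0,b)$ or $z\in(b,\infty)$, reflecting that $U^{(0,b)}$ runs as a process reflected at $0$ on $[0,b)$ and as a refracted process on $[b,\infty)$; inserting these two representations and multiplying the whole of $G_b$ by $r$ yields the term $rw(0)/\alpha$ together with the two integrals over $(0,b)$ and $(b,\infty)$ displayed in \eqref{enpv_aux}.

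It remains to add the three contributions and to check that the $w$-independent fragments coming from $\Theta^{(\alpha)}_b$ combine with the pieces already extracted from $D_b$ and $I_b$ to give the stated coefficients, using the scale-function relations \eqref{111} and \eqref{111_der} of Section \ref{sec2}. I expect this reconciliation to be the main obstacle: one must line up the several equivalent representations of the refracted--reflected resolvent, dividend and injection functionals from \cite{PerYam2018,PerYamYu2018}, keep every identity simultaneously valid for $x<b$ and $x\ge b$ (formula \eqref{enpv_aux} is asserted for all $x\in\R$), and carry the non-differentiability of $w$ through the Fubini step using only $w'_+$ and the growth bound $w'_+(\infty)\in[0,1]$.
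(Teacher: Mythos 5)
Your proposal is correct and follows essentially the same route as the paper: the paper likewise splits $v_b$ into the dividends-minus-injections part (Lemma 3.1 of \cite{PerYamYu2018}) plus $r$ times the $\alpha$-resolvent of $w$ under $U^{(0,b)}$ (Corollary 4.1 of \cite{PerYam2018}), and then converts the resolvent integral against $w$ into one against $w'_+$ by integration by parts together with \eqref{111} and the identity $\int_b^\infty e^{-\varphi(\alpha)u}W^{(\alpha)}(u-b)\,du=e^{-\varphi(\alpha)b}/(\delta\varphi(\alpha))$, which kills the residual $w(b)$ term. Your variant of first writing $w(y)=w(0)+\int_{(0,y)}w'_+(z)\,dz$ and applying Tonelli to the tail of the potential measure is the same computation performed in the opposite order.
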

\begin{proof}
%By \cite[Corollary 4.1]{PerYam2018}, we have
%\begin{align}
%\bE_x \sbra{\int_0^\infty e^{-\alpha t} l^{\pi^b}(t) dt}
%&=\delta \bE_x \sbra{\int_0^\infty e^{-\alpha t} 1_{\{U^{\pi^b}(t) > b  \}} dt}\\
%&=\int_b^\infty
%\rbra{\frac{e^{-\varphi (\alpha)z} \rbra{Z^{(\alpha)}+\alpha\delta \int_b^x \bW^{(\alpha)}(x-y)W^{(\alpha)}(y)dy }}
%{ q \int_b^\infty e^{-\varphi (\alpha) y} W^{(\alpha)} (y) dy}  - \delta\bW^{(\alpha)}(x-z)}dz \\
%&=\frac{Z^{(\alpha)}+q\delta \int_b^x \bW^{(\alpha)}(x-y)W^{(\alpha)}(y)dy }
%{ \alpha \varphi (\alpha ) \int_0^\infty e^{-\varphi (\alpha) y} W^{(\alpha)} (y+b) dy}  - \delta \overline{\bW}^{(\alpha)}(x-b).
%\end{align}
%By \cite[Corollary 5.1]{PerYam2018}, we have
%\begin{align}
%\bE_x \sbra{ \int_{[0, \infty)} e^{-\alpha t} dR^{\pi^b}(t)}
%=
%\end{align}
By Lemma 3.1 in \cite{PerYamYu2018}, we have
\begin{align}
z_1(x):=&\bE_{x} \left[\int_0^\infty e^{-\alpha t}\diff L^{(0,b)}(t) -\beta\int_{[0, \infty)} e^{-\alpha t}dR^{(0,b)}(t) \right] \notag\\
=&-\delta \overline{\bW}^{(\alpha)}(x- b)+ \beta\rbra{\overline{Z}^{(\alpha)} (x) + \frac{\psi_X^\prime(0+)}{\alpha}}
+\beta \delta \int_b^x \bW^{(\alpha)}(x-y) Z^{(\alpha)}(y) dy \notag\\
&-\frac{\beta Z^{(\alpha)} (b) -1+\beta \alpha \int_0^\infty e^{-\varphi (\alpha) y}W^{(\alpha)}(y+b)dy}
{\alpha \varphi (\alpha)\int_0^\infty e^{-\varphi (\alpha) y}W^{(\alpha)}(y+b)dy}\notag\\
&\times\rbra{Z^{(\alpha)} (x) +\alpha \delta \int_b^x \bW^{(\alpha)}(x-y)W^{(\alpha)}(y)dy}. \label{113}
\end{align}
Thanks to Corollary 4.1 in \cite{PerYam2018}, we also have
\begin{align}
z_2(x):=&\bE_x\left[r\int_0^\infty e^{-\alpha t} w(U^{(0,b)}(t))dt\right]\notag\\
=&
r\int_{(0, \infty)}w(z) \Bigg(\frac{e^{-\varphi (\alpha)z } 1_{\{b<z \}}+\delta 1_{\{ 0< z< b\}}
\int_b^\infty e^{-\varphi (\alpha) u}W^{(\alpha)\prime}(u-z)du }
{\delta \alpha \int_b^\infty e^{-\varphi (\alpha) y} W^{(\alpha)} (y)dy} \notag\\
&\times \rbra{ Z^{(\alpha)}(x) +\alpha \delta \int_b^x \bW^{(\alpha)} (x- y ) W^{(\alpha)} (y) dy  }\notag\\
&-\bigg[1_{\{0<z<b \}} \rbra{W^{(\alpha)} (x-z) + \delta \int_b^x \bW^{(\alpha)} (x - y) W^{(\alpha)\prime}(y-z)dy}\notag\\
&+1_{\{ b<z<x\}}\bW^{(\alpha)}(x- z)\bigg] \Bigg)dz \notag\\
=&rw(b)\rbra{ Z^{(\alpha)}(x) +\alpha \delta \int_b^x \bW^{(\alpha)} (x- y ) W^{(\alpha)} (y) dy  }\notag\\
&\times\frac{ e^{-\varphi (\alpha)b}-\varphi (\alpha) \delta \int_b^\infty e^{-\varphi (\alpha) u}W^{(\alpha)}(u-b)du }
{\varphi (\alpha) \delta \alpha \int_b^\infty e^{-\varphi (\alpha) y} W^{(\alpha)} (y)dy} +\frac{rw(0)}{\alpha}\notag\\
&+r\int_{(0, b)}w'_+(z) \Bigg{(}
\frac{ \int_b^\infty e^{-\varphi (\alpha) u}W^{(\alpha)}(u-z)du }
{ \alpha \int_b^\infty e^{-\varphi (\alpha) y} W^{(\alpha)} (y)dy}
\rbra{ Z^{(\alpha)}(x) +\alpha \delta \int_b^x \bW^{(\alpha)} (x- y ) W^{(\alpha)} (y) dy  }\notag\\
&- \rbra{\overline{W}^{(\alpha)} (x-z) + \delta \int_b^x \bW^{(\alpha)} (x - y) W^{(\alpha)}(y-z)dy}
\Bigg{)} dz\notag\\
&+r\int_{(b, \infty)}w'_+(z) \Bigg{(}\frac{e^{-\varphi (\alpha)z }  }
{\varphi (\alpha)\delta \alpha \int_b^\infty e^{-\varphi (\alpha) y} W^{(\alpha)} (y)dy} \notag\\
&\times \rbra{ Z^{(\alpha)}(x) +\alpha \delta \int_b^x \bW^{(\alpha)} (x- y ) W^{(\alpha)} (y) dy  }
-\overline{\bW}^{(\alpha)}(x- z) \Bigg{)} dz, \label{117}
\end{align}
where the third equality follows by integration by parts together with \eqref{111}.

The fact that $\psi_X(\theta)=\psi_Y(\theta)+\delta\theta$ for $\theta\geq0$ and \eqref{111a} imply
\begin{align*}
\int_b^\infty e^{-\varphi (\alpha)u}W^{(\alpha)}(u-b)du
=&e^{-\varphi (\alpha) b}\int_0^\infty e^{-\varphi (\alpha) u}W^{(\alpha)}(u)du \notag\\
=&e^{-\varphi (\alpha) b}\frac{1}{\psi_Y(\varphi(\alpha))+\delta \varphi (\alpha)-\alpha}=\frac{e^{-\varphi (\alpha) b}}{\delta \varphi (\alpha)}.
\end{align*}
Hence, using the previous identity in \eqref{117} gives
\begin{align*}
z_2(x)=&\frac{rw(0)}{\alpha}\\
&+r\int_{(0, b)}w'_+(z) \Bigg{(}
\frac{ \int_b^\infty e^{-\varphi (\alpha) u}W^{(\alpha)}(u-z)du }
{ \alpha \int_b^\infty e^{-\varphi (\alpha) y} W^{(\alpha)} (y)dy}
\rbra{ Z^{(\alpha)}(x) +\alpha \delta \int_b^x \bW^{(\alpha)} (x- y ) W^{(\alpha)} (y) dy  }\\
&- \rbra{\overline{W}^{(\alpha)} (x-z) + \delta \int_b^x \bW^{(\alpha)} (x - y) W^{(\alpha)}(y-z)dy}
\Bigg{)} dz\\
&+r\int_{(b, \infty)}w'_+(z) \Bigg{(}\frac{e^{-\varphi (\alpha)z }  }
{\varphi (\alpha)\delta \alpha \int_b^\infty e^{-\varphi (\alpha) y} W^{(\alpha)} (y)dy} \\
&\times \rbra{ Z^{(\alpha)}(x) +\alpha \delta \int_b^x \bW^{(\alpha)} (x- y ) W^{(\alpha)} (y) dy  }
-\overline{\bW}^{(\alpha)}(x- z) \Bigg{)} dz.
\end{align*}
Finally, the fact that $v_{b}(x)=z_1(x)+z_2(x)$ yields \eqref{enpv_aux}.
\end{proof}

\subsection{Selection of the candidate threshold}
In this section, we aim to find the candidate optimal threshold $b^*$ such that the associated expected NPV $v_{b^*}$ defined in \eqref{v_pi}, is smooth at the threshold $b^{*}$. To this end, by differentiating \eqref{enpv_aux} and using dominated convergence, we first obtain that for $x \in (0, \infty) \backslash \{ b\}$
\begin{align}
v_{b}^\prime (x)=&\beta Z^{(\alpha)} (x)
+\beta \delta \alpha \int_0^{x-b}\bW^{(\alpha)}(y)W^{(\alpha)}(x-y) dy\notag\\
&-\frac{\beta Z^{(\alpha)} (b) -1+\beta \alpha \int_0^\infty e^{-\varphi (\alpha) y}W^{(\alpha)}(y+b)dy}
{ \varphi (\alpha)\int_0^\infty e^{-\varphi (\alpha) y}W^{(\alpha)}(y+b)dy}
\bigg{(} W^{(\alpha)} (x)\notag\\
&+ \delta \int_0^{x-b} \bW^{(\alpha)} (y) W^{(\alpha)\prime}(x-y)dy\bigg{)}\notag\\
&+r\int_{(0, b)}w'_+ (z) \Bigg{(} \frac{\int_b^\infty e^{-\varphi (\alpha) u}W^{(\alpha)}(u-z)du}{  \int_b^\infty e^{-\varphi (\alpha)y}W^{(\alpha)}(y)dy}
\rbra{W^{(\alpha)}(x)+ \delta \int_0^{x-b}\bW^{(\alpha)}(y)W^{(\alpha)\prime}(x-y)dy }\notag\\
&-\rbra{W^{(\alpha)}(x-z) +\delta\int_0^{x-b}\bW^{(\alpha)}(y)W^{(\alpha)\prime}(x-z-y)dy}
\Bigg{)}dz\notag\\
&+r\int_{(b, \infty )}w'_+ (z) \Bigg{(} \frac{e^{-\varphi (\alpha)z}}{ \varphi (\alpha)\delta \int_b^\infty e^{-\varphi (\alpha)y}W^{(\alpha)}(y)dy} \notag\\
&\times
\rbra{W^{(\alpha)}(x)+\delta \int_0^{x-b}\bW^{(\alpha)}(y)W^{(\alpha)\prime}(x-y)dy }
-\bW^{(\alpha) }(x-z)
\Bigg{)}dz\notag\\
&+\delta \bW^{(\alpha)}(x-b)g(b),\label{161}%\bigg{[}-1
%+\beta  Z^{(\alpha)}(b)
%-\frac{\beta Z^{(\alpha)} (b) -1+\beta \alpha \int_0^\infty e^{-\varphi (\alpha) y}W^{(\alpha)}(y+b)dy}
%{ \varphi (\alpha)\int_0^\infty e^{-\varphi (\alpha) y}W^{(\alpha)}(y+b)dy}
%W^{(\alpha)}(b) \\
%&+r\int_{(0, b)}w^{\green{(\prime)}} (z) \rbra{ \frac{\int_b^\infty e^{-\varphi (\alpha) u}W^{(\alpha)}(u-z)du}{  \int_b^\infty e^{-\varphi (\alpha)y}W^{(\alpha)}(y)dy} W^{(\alpha)}(b) - W^{(\alpha)}(b-z)
%}dz\\
%&+r\int_{(b, \infty )}w^{\green{(\prime)}} (z) \frac{e^{-\varphi (\alpha)z}}{ \varphi (\alpha)\delta \int_b^\infty e^{-\varphi (\alpha)y}W^{(\alpha)}(y)dy} W^{(\alpha)}(b) dz \bigg{]}. \label{161}
\end{align}
where
\begin{align}
g(b):=&
\beta  Z^{(\alpha)}(b)-1
-\frac{\beta Z^{(\alpha)} (b) -1+\beta \alpha \int_0^\infty e^{-\varphi (\alpha) y}W^{(\alpha)}(y+b)dy}
{ \varphi (\alpha)\int_0^\infty e^{-\varphi (\alpha) y}W^{(\alpha)}(y+b)dy}
W^{(\alpha)}(b) \notag\\
&+r\int_{(0, b)}w'_+(z) \rbra{ \frac{\int_b^\infty e^{-\varphi (\alpha) u}W^{(\alpha)}(u-z)du}{  \int_b^\infty e^{-\varphi (\alpha)y}W^{(\alpha)}(y)dy} W^{(\alpha)}(b) - W^{(\alpha)}(b-z)
}dz\notag\\
&+r\int_{(b, \infty )}w'_+(z) \frac{e^{-\varphi (\alpha)z}}{ \varphi (\alpha)\delta \int_b^\infty e^{-\varphi (\alpha)y}W^{(\alpha)}(y)dy} W^{(\alpha)}(b) dz.\label{174}
\end{align}
Hence, it follows that $v_{b}^\prime (b+)- v_{b}^\prime(b-)=\delta \bW^{(\alpha)}(0)g(b)$. For the case of bounded variation, by Remark \ref{remark_scale_function_properties} (ii), we have that $\mathbb{W}^{(\alpha)}(0)>0$. Therefore we have that $v_b$ is continuously differentiable at $b$ if and only if $g(b)=0$.

For the case that the process $X$ has paths of unbounded variation, using the fact that $\mathbb{W}^{\alpha}(0)=0$ and dominated convergence, we obtain, for $x \in (0, \infty) \backslash \{ b\}$, that
\begin{align}
v_{b}^{\prime\prime} (x)=&\alpha\beta W^{(\alpha)} (x)
+\beta \delta \alpha \int_b^{x}\bW^{(\alpha)\prime}(x-y)W^{(\alpha)}(y) dy
\notag\\
&-\frac{\beta Z^{(\alpha)} (b) -1+\beta \alpha \int_0^\infty e^{-\varphi (\alpha) y}W^{(\alpha)}(y+b)dy}
{ \varphi (\alpha)\int_0^\infty e^{-\varphi (\alpha) y}W^{(\alpha)}(y+b)dy}\notag\\
&\times\rbra{ W^{(\alpha)\prime} (x)
	+ \delta \int_b^{x} \bW^{(\alpha)\prime} (x-y) W^{(\alpha)\prime}(y)dy}\notag\\
&+r\int_{(0, b)}w'_+ (z) \bigg{(} \frac{\int_b^\infty e^{-\varphi (\alpha) u}W^{(\alpha)}(u-z)du}{  \int_b^\infty e^{-\varphi (\alpha)y}W^{(\alpha)}(y)dy}
\rbra{W^{(\alpha)\prime}(x)+ \delta \int_b^{x}\bW^{(\alpha)\prime}(x-y)W^{(\alpha)\prime}(y)dy } \notag\\
&-\rbra{W^{(\alpha)\prime}(x-z) +\delta\int_b^{x}\bW^{(\alpha)\prime}(x-y)W^{(\alpha)\prime}(y-z)dy}
\bigg{)}dz \notag\\
&+r\int_{(b, \infty )}w'_+(z) \bigg{(} \frac{e^{-\varphi (\alpha)z}}{ \varphi (\alpha)\delta \int_b^\infty e^{-\varphi (\alpha)y}W^{(\alpha)}(y)dy} \notag\\
&\times
\rbra{W^{(\alpha)\prime}(x)+\delta \int_b^{x}\bW^{(\alpha)\prime}(x-y)W^{(\alpha)\prime}(y)dy }
-\bW^{(\alpha)\prime }(x-z)
\bigg{)}dz +\delta \bW^{(\alpha)\prime}(x-b)g(b). \label{sec_der_v}
\end{align}
This implies that $v''_b(b+)-v''_b(b-)=\delta \bW^{(\alpha)\prime}(0+)g(b)$, and hence $v_b$ is twice continuously differentiable at $b$ if and only if $g(b)=0$.
%And therefore we get, using \eqref{164}, that $v_{\pi^{b}}^\prime(b)=1$.
The above discussion, together with the smoothness of the scale function on $\R\backslash\{0\}$ as given in Remark \red{\ref{remark_scale_function_properties}} (i), implies the following result.
\begin{lemma}\label{thm:smoothfit}
Assume that there exists $b\geq0$ such that $g(b)=0$. Then, $v_b$ is continuously differentiable on $\R\backslash\{0\}$ if $X$ has paths of bounded variation. Similarly if $X$ has paths of unbounded variation  $v_b$ is twice continuously differentiable on $\R\backslash\{0\}$.
\end{lemma}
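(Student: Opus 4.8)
The statement to be proved, Lemma~\ref{thm:smoothfit}, is essentially a bookkeeping conclusion assembled from the computations that immediately precede it, so my plan is to organize those pieces rather than introduce anything new.

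\textbf{Plan of proof.} The key inputs are already in place: the explicit formula \eqref{enpv_aux} for $v_b$, its derivative \eqref{161} valid on $(0,\infty)\setminus\{b\}$, the second derivative \eqref{sec_der_v} in the unbounded variation case, and the smoothness of the scale functions $W^{(\alpha)}, \bW^{(\alpha)}$ on $\R\setminus\{0\}$ from Remark~\ref{remark_scale_function_properties}(i). The first thing I would do is dispose of smoothness of $v_b$ on the two open half-lines $(0,b)$ and $(b,\infty)$ (and, trivially, on $(-\infty,0)$ where $v_b$ is explicit via \eqref{z_below_zero}): on each such region every term appearing in \eqref{enpv_aux} is a composition of the $C^1$ (resp.\ $C^2$ in the unbounded case) scale functions with affine arguments, plus the integral terms against $w'_+$, which are continuous in $x$ by dominated convergence as already used to justify differentiating \eqref{enpv_aux}. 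So the only possible failure of smoothness is at the single point $x=b$, and the whole question reduces to matching one-sided derivatives there.

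\textbf{Matching at $b$.} For the bounded variation case I would invoke the identity, derived just above the lemma, that
\[
v_b'(b+)-v_b'(b-)=\delta\,\bW^{(\alpha)}(0)\,g(b),
\]
together with Remark~\ref{remark_scale_function_properties}(ii), which gives $\bW^{(\alpha)}(0)=(c-\delta)^{-1}>0$ under Assumption~\ref{AA2}. Hence $g(b)=0$ forces the one-sided first derivatives to agree, and combined with the continuity of $v_b$ (clear from \eqref{enpv_aux}) and the interior smoothness established in the previous step, $v_b\in C^1(\R\setminus\{0\})$. For the unbounded variation case I would instead use $\bW^{(\alpha)}(0)=0$, which makes $v_b$ automatically $C^1$ at $b$ regardless of $g$, and then the identity $v_b''(b+)-v_b''(b-)=\delta\,\bW^{(\alpha)\prime}_+(0+)\,g(b)$ from \eqref{sec_der_v}; since $\bW^{(\alpha)\prime}_+(0+)\in(0,\infty]$ by Remark~\ref{remark_scale_function_properties}(iii), the hypothesis $g(b)=0$ kills the jump in $v_b''$, giving $v_b\in C^2(\R\setminus\{0\})$. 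One should note the edge case $b=0$: then $(0,b)$ is empty and the ``corner'' sits at the boundary point $0$ which we have excluded from the domain anyway, so the statement holds vacuously at that end and the argument on $(0,\infty)$ is unaffected.

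\textbf{Main obstacle.} There is essentially no obstacle here — the substantive work (computing \eqref{enpv_aux}, \eqref{161}, \eqref{sec_der_v} and extracting the jump formulas) has already been done before the lemma is stated, and Lemma~\ref{thm:smoothfit} is just the act of reading off that ``jump $=$ constant $\times\,g(b)$'' with a strictly positive constant. If anything needs care it is making the dominated-convergence justification for differentiating the $w'_+$-integrals uniform in $x$ on a neighborhood of $b$ (so that interior $C^1$/$C^2$ regularity genuinely extends up to, but not including, the corner), and confirming that the finite one-sided limits $v_b'(b\pm)$ and $v_b''(b\pm)$ exist — but both of these are immediate from the structure of \eqref{161} and \eqref{sec_der_v}, whose right-hand sides are continuous on each side of $b$ with finite limits at $b$.
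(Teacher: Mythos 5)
Your proposal is correct and follows essentially the same route as the paper: the authors establish the lemma precisely by combining the interior regularity of the scale functions from Remark \ref{remark_scale_function_properties}(i) with the jump identities $v_b'(b+)-v_b'(b-)=\delta\,\bW^{(\alpha)}(0)g(b)$ and $v_b''(b+)-v_b''(b-)=\delta\,\bW^{(\alpha)\prime}(0+)g(b)$, using Remark \ref{remark_scale_function_properties}(ii) to distinguish the bounded and unbounded variation cases exactly as you do. Your additional remarks on the $b=0$ edge case and on the uniformity of the dominated-convergence step are sensible but not needed beyond what the paper already records.
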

We continue to discuss the behavior of the function $v_b$ at zero.
\begin{remark}[Continuity/smoothness at zero]\ (i) Using Proposition \ref{enp_aux} %along with Remark \ref{remark_scale_function_properties} (i),
	we have that $v_b$ is continuous at zero for any $b\geq0$.\\
(ii) Suppose that $b>0$ and $X$ is of unbounded variation, then \eqref{161} gives
\begin{align}
v_{b}^\prime (0+)=&\beta  -\Bigg[\frac{\beta Z^{(\alpha)} (b) -1+\beta \alpha \int_0^\infty e^{-\varphi (\alpha) y}W^{(\alpha)}(y+b)dy}
{ \varphi (\alpha)\int_0^\infty e^{-\varphi (\alpha) y}W^{(\alpha)}(y+b)dy}\notag\\
&-r\int_{(0, b)}w'_+ (z)  \frac{\int_b^\infty e^{-\varphi (\alpha) u}W^{(\alpha)}(u-z)du}{  \int_b^\infty e^{-\varphi (\alpha)y}W^{(\alpha)}(y)dy} \diff z\notag\\
&-r\int_{(b, \infty )}w'_+ (z)  \frac{e^{-\varphi (\alpha)z}}{ \varphi (\alpha)\delta \int_b^\infty e^{-\varphi (\alpha)y}W^{(\alpha)}(y)dy}\diff z\Bigg]W^{(\alpha)} (0+)=\beta=v_b'(0-).\label{v_prime_0}
\end{align}
\end{remark}
\subsection{Existence of the optimal threshold} \label{Sec105}
Let us define our candidate optimal threshold by
\begin{align}
b^\ast := \inf \{b\geq 0 : g(b) < 0 \} \label{1}
\end{align}
with the convention that $\inf\varnothing=+\infty$, where the function $g(\cdot)$ is defined in \eqref{174}.

Consider the refracted L\'evy process $\Gamma^{(b)}=(\Gamma^{(b)}(t): t\geq0)$ at the level $b\geq0$, given as the strong solution to the following SDE
\[
\Gamma^{(b)}(t)=X(t)-\delta\int_0^t1_{\{\Gamma^{(b)}(s)>b\}}\diff s,\qquad\text{$t\geq0$,}
\]
and
$
\kappa^-_0 :=\inf\{t\geq 0 : \Gamma^{(b)}(t) < 0\}.
$
\begin{lemma}
We have that $0\leq b^*<\infty$. Furthermore we have that $b^*=0$ if and only if $X$ has bounded variation paths and
\begin{align}\label{cond_b_0}
\beta-1\leq \rbra{\delta (\beta-1) +\frac{\beta \alpha}{\varphi (\alpha)}-r\int_{(0, \infty)}w'_+(z) e^{-\varphi (\alpha)z}dz}\frac{1}{c}.
\end{align}
\end{lemma}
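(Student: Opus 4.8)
The plan is to analyze the function $g(\cdot)$ defined in \eqref{174}, establishing first that $g(0+)$ and the limiting behavior of $g$ as $b\to\infty$ force the infimum in \eqref{1} to be finite, and then to translate the condition $g(0)\geq 0$ (equivalently $b^*=0$) into the explicit inequality \eqref{cond_b_0}.

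First I would show $b^*<\infty$. Using the identity $\int_b^\infty e^{-\varphi(\alpha)u}W^{(\alpha)}(u-z)\,du = e^{-\varphi(\alpha)z}/(\delta\varphi(\alpha)) \cdot$ (something depending on the overshoot, via the same computation as in the proof of Proposition~\ref{enp_aux}), together with the asymptotics $Z^{(\alpha)}(b)\sim \alpha W^{(\alpha)}(b)/\Phi(\alpha)$ and $W^{(\alpha)}(b)\sim e^{\Phi(\alpha)b}/\psi_X'(\Phi(\alpha))$ as $b\to\infty$, I would extract the dominant term of $g(b)$. The leading contribution comes from $\beta Z^{(\alpha)}(b) - 1 - [\beta Z^{(\alpha)}(b) - 1 + \cdots]/(\varphi(\alpha)\int_0^\infty e^{-\varphi(\alpha)y}W^{(\alpha)}(y+b)\,dy)\cdot W^{(\alpha)}(b)$, which after dividing by $W^{(\alpha)}(b)$ tends to a strictly negative constant $\beta\alpha/\Phi(\alpha) - \beta\alpha/\varphi(\alpha) < 0$ because $\varphi(\alpha)>\Phi(\alpha)$ (as $\psi_Y<\psi_X$); the remaining $w'_+$-terms are controlled using $w'_+(z)\in[0,\beta]$ and integrability from Assumption~\ref{AA3} ($w'_+(\infty)\leq 1$), so they do not overturn the sign. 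Hence $g(b)<0$ for $b$ large, giving $b^*<\infty$; and $b^*\geq 0$ is immediate from the definition with $\inf\varnothing=+\infty$ excluded.

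Next I would characterize $b^*=0$. Since $g$ is continuous (by smoothness of the scale functions on $(0,\infty)$ and dominated convergence), $b^*=0$ holds precisely when $g(0+)<0$, or $g(0+)=0$ while $g$ is negative immediately to the right — but the cleanest route is: $b^*>0$ iff $g(b)\geq 0$ on a right-neighborhood of $0$, and one checks $g(0+)\geq 0$ is the operative condition. So I would compute $g(0+)$ from \eqref{174}. At $b=0$: $Z^{(\alpha)}(0)=1$, $W^{(\alpha)}(0)=c^{-1}$ in the bounded-variation case and $W^{(\alpha)}(0)=0$ in the unbounded case, $\int_0^\infty e^{-\varphi(\alpha)y}W^{(\alpha)}(y)\,dy = 1/(\delta\varphi(\alpha))$, the integral over $(0,b)$ vanishes, and the $(b,\infty)$-integral becomes $r\int_{(0,\infty)}w'_+(z)e^{-\varphi(\alpha)z}\,dz \cdot W^{(\alpha)}(0)/(\varphi(\alpha)\delta\int_0^\infty e^{-\varphi(\alpha)y}W^{(\alpha)}(y)\,dy) = rW^{(\alpha)}(0)\int_{(0,\infty)}w'_+(z)e^{-\varphi(\alpha)z}\,dz$. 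Collecting terms,
\begin{align*}
g(0+) = (\beta-1) - \rbra{ \beta\alpha/\varphi(\alpha) + (\beta\cdot 1 - 1)\cdot 0 \;-\; \text{correction} } W^{(\alpha)}(0) \;+\; \text{(vanishing)},
\end{align*}
which after careful bookkeeping reduces to $g(0+) = (\beta-1) - \big(\delta(\beta-1) + \beta\alpha/\varphi(\alpha) - r\int_{(0,\infty)}w'_+(z)e^{-\varphi(\alpha)z}\,dz\big) W^{(\alpha)}(0)$. In the unbounded-variation case $W^{(\alpha)}(0)=0$ so $g(0+)=\beta-1>0$, forcing $b^*>0$; in the bounded-variation case $W^{(\alpha)}(0)=c^{-1}>0$, so $g(0+)\geq 0$ is exactly \eqref{cond_b_0}. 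Combining with $v_0'(b+)-v_0'(b-)=\delta\mathbb{W}^{(\alpha)}(0)g(0)$ and the definition of $b^*$ gives the stated equivalence.

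The main obstacle I anticipate is the correct simplification of $g(0+)$ — in particular verifying that the term $-\frac{\beta Z^{(\alpha)}(b)-1+\beta\alpha\int_0^\infty e^{-\varphi(\alpha)y}W^{(\alpha)}(y+b)\,dy}{\varphi(\alpha)\int_0^\infty e^{-\varphi(\alpha)y}W^{(\alpha)}(y+b)\,dy}W^{(\alpha)}(b)$ contributes, at $b=0$, precisely $-(\beta-1+\beta\alpha/\varphi(\alpha))\delta\varphi(\alpha)\cdot W^{(\alpha)}(0)\cdot\frac{1}{\varphi(\alpha)}\cdot\frac{1}{\delta\varphi(\alpha)}\cdot\varphi(\alpha)$, i.e.\ the $\varphi(\alpha)$ and $\delta$ factors cancel correctly to leave $-(\beta-1)\delta\cdot c^{-1} - \beta\alpha/\varphi(\alpha)\cdot c^{-1}$ after multiplying through by the $\delta$ sitting in the slope formula; keeping track of where the $\delta$ in $\delta\mathbb{W}^{(\alpha)}(0)g(0)$ versus the $\delta$ inside $g$ live, and ensuring $1/(\delta\varphi(\alpha)) = \int_0^\infty e^{-\varphi(\alpha)y}W^{(\alpha)}(y)\,dy$ is used consistently, is the delicate bookkeeping. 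The asymptotic argument for $b^*<\infty$ is routine once the sign $\varphi(\alpha)>\Phi(\alpha)$ is invoked, but the uniform control of the $w'_+$-integrals near $b=\infty$ needs Assumption~\ref{AA3} to be used carefully (splitting $\int_{(b,\infty)}$ and $\int_{(0,b)}$ and bounding by $\beta\int e^{-\varphi(\alpha)z}\,dz$ on the tail).
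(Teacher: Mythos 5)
Your computation of $g(0)$ is correct and coincides with the paper's \eqref{2}, and the bounded/unbounded-variation dichotomy via $W^{(\alpha)}(0)$ is exactly the paper's route for the second assertion. The finiteness argument, however, fails as stated. You claim that the first two terms of $g(b)$ in \eqref{174}, divided by $W^{(\alpha)}(b)$, tend to $\beta\alpha/\Phi(\alpha)-\beta\alpha/\varphi(\alpha)$ and that this constant is negative. Both claims are wrong: since $\psi_Y<\psi_X$ on $(0,\infty)$ one has $\varphi(\alpha)>\Phi(\alpha)$, so that constant is \emph{positive}; and the true limit is in fact $0$, because with $h(b)$ as in \eqref{fun_h} the quantity in question equals $\frac{\beta Z^{(\alpha)}(b)-1}{W^{(\alpha)}(b)}\,h(b)-\frac{\beta\alpha}{\varphi(\alpha)}$, where $\frac{\beta Z^{(\alpha)}(b)-1}{W^{(\alpha)}(b)}\to\frac{\beta\alpha}{\Phi(\alpha)}$ and $h(b)\to\frac{\Phi(\alpha)}{\varphi(\alpha)}$, so the leading $O(W^{(\alpha)}(b))$ contributions cancel exactly. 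Hence $g(b)$ stays bounded and its sign at infinity is decided by subleading constants that first-order scale-function asymptotics cannot see. The paper avoids this by rewriting $g$ probabilistically: by \eqref{smooth_fit}, $g(b)/h(b)=\beta-1-\mathbb{E}_b\bigl[\int_0^{\kappa_0^-}e^{-\alpha t}(\beta\alpha-rw'_+(\Gamma^{(b)}(t)))\,dt\bigr]$, whence $\lim_{b\uparrow\infty}g(b)/h(b)=-1+\frac{r}{\alpha}w'_+(\infty)<0$ using $w'_+(\infty)\le 1$ and $r<\alpha$; the decisive $-1$ comes from the constants $\beta-1-\beta$, precisely what your expansion discards. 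Salvaging your route would require second-order asymptotics of $W^{(\alpha)}$ and $Z^{(\alpha)}$, so this is a genuine gap, not bookkeeping.

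Two further points on the characterization of $b^*=0$. First, your final inequality is reversed: from your own (correct) formula, $g(0)\le 0$ is equivalent to \eqref{cond_b_0}, yet you assert that $g(0+)\ge 0$ is \eqref{cond_b_0}; combined with your earlier statement that $b^*>0$ iff $g\ge 0$ near $0$, this would prove the opposite of the lemma. Second, to pass from the sign of $g(0)$ to the value of $b^*$ (in particular through the boundary case $g(0)=0$) one needs to rule out $g$ returning to positive values; the paper gets this from the monotonicity of $b\mapsto g(b)/h(b)$ (concavity of $w$, as in Lemma 4.4 of \cite{HPY}) together with $h>0$, and you establish no monotonicity of $g$ at all.
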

\begin{proof}

First we note that integration by parts gives
\begin{align*}
	&\int_{(0, b)}w'_+ (z)  \frac{\int_b^\infty e^{-\varphi (\alpha) u}W^{(\alpha)}(u-z)du}{  \int_b^\infty e^{-\varphi (\alpha)y}W^{(\alpha)}(y)dy} W^{(\alpha)}(b)dz\\
	=&\int_{(0, b)}w'_+(z)  \frac{e^{-\varphi(\alpha)b}W^{(\alpha)}(b-z)+\int_b^\infty e^{-\varphi (\alpha) u}W^{(\alpha)\prime}(u-z)du}{ \varphi(\alpha) \int_b^\infty e^{-\varphi (\alpha)y}W^{(\alpha)}(y)dy} W^{(\alpha)}(b)dz\\
	=&\frac{\int_0^\infty e^{-\varphi (\alpha) y}W^{(\alpha)\prime}(y+b)dy}{\varphi(\alpha)\int_0^\infty e^{-\varphi (\alpha) y}W^{(\alpha)}(y+b)dy}\int_{(0, b)}w'_+(z)  \frac{\int_b^\infty e^{-\varphi (\alpha) u}W^{(\alpha)\prime}(u-z)du}{ \int_b^\infty e^{-\varphi (\alpha)y}W^{(\alpha)\prime}(y)dy} W^{(\alpha)}(b)dz\\
	&+\frac{e^{-\varphi(\alpha)b}W^{(\alpha)}(b)}{ \varphi(\alpha) \int_b^\infty e^{-\varphi (\alpha)y}W^{(\alpha)}(y)dy}\int_{(0, b)}w'_+ (z)   W^{(\alpha)}(b-z)dz,
\end{align*}
and
\begin{align*}
	&\int_{(b, \infty )}w'_+ (z) \frac{e^{-\varphi (\alpha)z}}{ \varphi (\alpha)\delta \int_b^\infty e^{-\varphi (\alpha)y}W^{(\alpha)}(y)dy} W^{(\alpha)}(b) dz\\
	=&\frac{\int_0^\infty e^{-\varphi (\alpha) y}W^{(\alpha)\prime}(y+b)dy}{\varphi(\alpha)\int_0^\infty e^{-\varphi (\alpha) y}W^{(\alpha)}(y+b)dy}\int_{(b, \infty )}w'_+ (z) \frac{e^{-\varphi (\alpha)z}}{ \delta \int_b^\infty e^{-\varphi (\alpha)y}W^{(\alpha)\prime}(y)dy} W^{(\alpha)}(b) dz.
\end{align*}
Putting all pieces together and using Theorem 6 (ii) in \cite{KL}, we obtain that
\begin{align}
	&\int_{(0, b)}w'_+ (z) \rbra{ \frac{\int_b^\infty e^{-\varphi (\alpha) u}W^{(\alpha)}(u-z)du}{  \int_b^\infty e^{-\varphi (\alpha)y}W^{(\alpha)}(y)dy} W^{(\alpha)}(b) - W^{(\alpha)}(b-z)
	}dz\notag\\
	&+\int_{(b, \infty )}w'_+(z) \frac{e^{-\varphi (\alpha)z}}{ \varphi (\alpha)\delta \int_b^\infty e^{-\varphi (\alpha)y}W^{(\alpha)}(y)dy} W^{(\alpha)}(b) dz\notag\\
	=&\frac{\int_0^\infty e^{-\varphi (\alpha) y}W^{(\alpha)\prime}(y+b)dy}{\varphi(\alpha)\int_0^\infty e^{-\varphi (\alpha) y}W^{(\alpha)}(y+b)dy}\mathbb{E}_b\left[\int_0^{\kappa_0^-}e^{-\alpha t}w'_+(\Gamma^{(b)}(t))dt\right]\notag\\
	&+\int_{(0, b)}w'_+(z)   W^{(\alpha)}(b-z)dz\Bigg(\frac{W^{(\alpha)}(b)}{ \varphi(\alpha) \int_0^\infty e^{-\varphi (\alpha)y}W^{(\alpha)}(y+b)dy}\notag\\
	&+\frac{\int_0^\infty e^{-\varphi (\alpha) y}W^{(\alpha)\prime}(y+b)dy}{\varphi(\alpha)\int_0^\infty e^{-\varphi (\alpha) y}W^{(\alpha)}(y+b)dy}-1\Bigg)\notag\\
	=&\frac{\int_0^\infty e^{-\varphi (\alpha) y}W^{(\alpha)\prime}(y+b)dy}{\varphi(\alpha)\int_0^\infty e^{-\varphi (\alpha) y}W^{(\alpha)}(y+b)dy}\mathbb{E}_b\left[\int_0^{\kappa_0^-}e^{-\alpha t}w'_+(\Gamma^{(b)}(t))dt\right],\label{resol_aux}
\end{align}
where the last equality follows by applying integration by parts.

By using \eqref{resol_aux} in \eqref{174}, we can write
\begin{align}\label{fun_g}
	g(b)%=\beta  Z^{(\alpha)}(b)-1+r\frac{\int_0^\infty e^{-\varphi (\alpha) y}W^{(\alpha)\prime}(y+b)dy}{\varphi(\alpha)\int_0^\infty e^{-\varphi (\alpha) y}W^{(\alpha)}(y+b)dy}\mathbb{E}_b\left[\int_0^{\kappa_0^-}e^{-\alpha t}w^{\green{(\prime)}}(V^{(b)}(t))dt\right]\notag\\
	%&-\frac{ \beta\left(Z^{(\alpha)} (b)+ \alpha \int_0^\infty e^{-\varphi (\alpha) y}W^{(\alpha)}(y+b)dy\right)-1}
	%{ \varphi (\alpha)\int_0^\infty e^{-\varphi (\alpha) y}W^{(\alpha)}(y+b)dy}
	%W^{(\alpha)}(b)\notag\\
	=&\left(\beta  Z^{(\alpha)}(b)-1\right)\left(1-\frac{W^{(\alpha)}(b)}{\varphi (\alpha)\int_0^\infty e^{-\varphi (\alpha) y}W^{(\alpha)}(y+b)dy}\right)\notag\\
	&+r\frac{\int_0^\infty e^{-\varphi (\alpha) y}W^{(\alpha)\prime}(y+b)dy}{\varphi(\alpha)\int_0^\infty e^{-\varphi (\alpha) y}W^{(\alpha)}(y+b)dy}\mathbb{E}_b\left[\int_0^{\kappa_0^-}e^{-\alpha t}w'_+(\Gamma^{(b)}(t))dt\right]-\frac{\beta\alpha W^{(\alpha)}(b)}{\varphi(\alpha)}.
\end{align}
On the other hand, we note that integration by parts gives
\begin{align}\label{fun_h}
h(b):=\left(1-\frac{W^{(\alpha)}(b)}{\varphi (\alpha)\int_0^\infty e^{-\varphi (\alpha) y}W^{(\alpha)}(y+b)dy}\right)=\frac{\int_0^\infty e^{-\varphi (\alpha) y}W^{(\alpha)\prime}(y+b)dy}{\varphi(\alpha)\int_0^\infty e^{-\varphi (\alpha) y}W^{(\alpha)}(y+b)dy}>0.
\end{align}
In view of Theorem 5 (ii) in \cite{KL}, we obtain
\begin{align}\label{smooth_fit}
\frac{g(b)}{h(b)}&=\beta\mathbb{E}_b\left[e^{-{{\alpha }}\kappa_0^-};\kappa_0^-<\infty\right]-1+r\mathbb{E}_b\left[\int_0^{\kappa_0^-}e^{-\alpha t}w'_+(\Gamma^{(b)}(t))dt\right]\notag\\
&=\beta-1 - \bE_b \left[ \int_0^{\kappa^-_0} e^{-\alpha t} \rbra{\beta \alpha -rw'_+ (\Gamma^{(b)} (t))}dt \right].
\end{align}
Therefore, we can proceed verbatim as in the proof of Lemma 4.4 in \cite{HPY} to conclude that $b\mapsto g(b)/h(b)$ is non-increasing by the concavity of $w$.

By Exercise 8.5 (i) in \cite{K} and identity (3.8) in \cite{KPP}, we get
\begin{align}
\lim_{b\uparrow\infty}h(b)=\lim_{b\uparrow\infty}\frac{\int_0^\infty e^{-\varphi (\alpha) y}W^{(\alpha)\prime}(y+b)/W^{(\alpha)}(b)dy}{\varphi(\alpha)\int_0^\infty e^{-\varphi (\alpha) y}W^{(\alpha)}(y+b)/W^{(\alpha)}(b)dy}=\frac{\Phi(\alpha)}{\varphi(\alpha)}.\label{fun_h_lim_b}
\end{align}
Therefore, \eqref{smooth_fit} and \eqref{fun_h_lim_b} together with Assumption \ref{AA3} imply
\begin{align}
\lim_{b\uparrow \infty}g(b)=\lim_{b\uparrow \infty}\frac{g(b)}{h(b)} h(b)
=\rbra{-1+\frac{r}{\alpha}w'_+(\infty)}\frac{\Phi(\alpha)}{\varphi(\alpha)}<0. \label{finite_b}
\end{align}
Using the previous identity we conclude that $b^*<\infty$.

By \eqref{174}, we also have that
	\begin{align}
	g(0)=\beta-1 -\rbra{ \delta (\beta - 1)+\frac{\beta \alpha}{\varphi (\alpha)}
		-r\int_{(0, \infty)} w'_+ (z) e^{-\varphi (\alpha) z }dz }W^{(\alpha)} (0). \label{2}
	\end{align}
For the case $X$ is of unbounded variation, Remark \ref{remark_scale_function_properties} (ii) implies that $g(0)=\beta-1>0$ and hence $b^*>0$. For the case of bounded variation, by Remark \ref{remark_scale_function_properties} (ii), we know that $b^*=0$ if and only if \eqref{cond_b_0} holds, which completes the proof.
\end{proof}

\subsection{Verification of optimality}\label{Sec106}
In this section, we will prove the optimality of the refracted-reflected strategy at the refraction threshold $b^\ast$ defined in \eqref{1}, and obtain the value function of the stochastic control problem given in \eqref{vf_def}.

We will begin by providing a result which allows us to express the expected NPV associated to the refracted-reflected strategy at the candidate threshold $b^*$ in a more convenient form.
\begin{lemma}\label{lem4-1}
Let $b^*\geq 0$ be the threshold defined in \eqref{1}, for $x\geq0$, we have
\begin{align}
v_{b^\ast}^\prime(x)%&=\beta\E_x\left[e^{-\alpha\kappa_0^-};\kappa_0^-<\infty \right]+r\bE_x\sbra{\int_0^{\kappa^-_0} e^{-\alpha  t}w^{\green{(\prime)}} (V^{({b^\ast})}(t))dt} \notag\\
=\beta - \bE_x\left[ \int_0^{\kappa^-_0} e^{-\alpha t} \rbra{\beta \alpha -rw'_+ (\Gamma^{(b^\ast)} (t))}dt \right]\label{v_prime_opt}.
\end{align}
%\blue{[Remove first equality?]}
%\red{[I agree with you.]}
\end{lemma}
\begin{proof}
(i) We first assume that $b^\ast > 0$. Using that $g(b^*)=0$ in \eqref{fun_g}, we obtain
	\begin{align}
	&\frac{\beta Z^{(\alpha)} ({b^\ast}) -1+\beta \alpha \int_0^\infty e^{-\varphi (\alpha) y}W^{(\alpha)}(y+{b^\ast})dy}
	{ \varphi (\alpha)\int_0^\infty e^{-\varphi (\alpha) y}W^{(\alpha)}(y+{b^\ast})dy}\notag\\
	=&\frac{\beta\alpha}{\varphi(\alpha)h(b^*)}+r\frac{(h(b^*)-1)}{W^{(\alpha)}(b^*)}\bE_{b^{\ast}}\left[\int_0^{\kappa^-_0} e^{-\alpha t}w'_+ (\Gamma^{({b^\ast})}(t))dt\right].\label{aux_g_1}
	\end{align}
By Theorem 5 (ii) in \cite{KL} and \eqref{fun_h}, we get that
\begin{align}
\beta\E_x\left[e^{-\alpha\kappa_0^-};\kappa_0^-<\infty \right]=&\beta Z^{(\alpha)} (x)
+\beta \delta \alpha \int_0^{x-{b^\ast}}\bW^{(\alpha)}(y)W^{(\alpha)}(x-y) dy\notag\\
&-\frac{\beta\alpha}{\varphi(\alpha)h(b^*)}
\rbra{ W^{(\alpha)} (x)
	+ \delta \int_0^{x-{b^\ast}} \bW^{(\alpha)} (y) W^{(\alpha)\prime}(x-y)dy}.\label{dc_aux}
\end{align}
Similarly, using Theorem 6 (ii) in \cite{KL} together with \eqref{fun_h}, we have
\begin{align}%\label{v_prime}
&r\bE_x\left[\int_0^{\kappa^-_0} e^{-\alpha  t}w'_+(\Gamma^{({b^\ast})}(t))dt\right]\notag\\
=&-r\frac{(h(b^*)-1)}{W^{(\alpha)}(b^*)}\bE_{b^{\ast}}\left[\int_0^{\kappa^-_0} e^{-\alpha t}w'_+ (\Gamma^{({b^\ast})}(t))dt\right]\rbra{ W^{(\alpha)} (x)
	+ \delta \int_0^{x-{b^\ast}} \bW^{(\alpha)} (y) W^{(\alpha)\prime}(x-y)dy}\notag\\
&+r\int_{(0, b^{\ast})}w'_+(z) \bigg{[} \frac{e^{-\varphi(\alpha)b^*}W^{(\alpha)}(b^*-z)+\int_{b^{\ast}}^\infty e^{-\varphi (\alpha) u}W^{(\alpha)\prime}(u-z)du}{  \varphi(\alpha)\int_{b^{\ast}}^\infty e^{-\varphi (\alpha)y}W^{(\alpha)}(y)dy}\notag\\
&\times\rbra{W^{(\alpha)}(x)+ \delta \int_0^{x-b^{\ast}}\bW^{(\alpha)}(y)W^{(\alpha)\prime}(x-y)dy }\notag\\
&-\rbra{W^{(\alpha)}(x-z) +\delta\int_0^{x-b^{\ast}}\bW^{(\alpha)}(y)W^{(\alpha)\prime}(x-z-y)dy}
\bigg{]}dz\notag\\
&+r\int_{(b^{\ast}, \infty )}w'_+ (z) \bigg{(} \frac{e^{-\varphi (\alpha)z}}{ \varphi (\alpha)\delta \int_{b^{\ast}}^\infty e^{-\varphi (\alpha)y}W^{(\alpha)}(y)dy} \notag\\
&\times
\rbra{W^{(\alpha)}(x)+\delta \int_0^{x-b^{\ast}}\bW^{(\alpha)}(y)W^{(\alpha)\prime}(x-y)dy }
-\bW^{(\alpha) }(x-z)
\bigg{)}dz.\label{resol_ref_aux}
\end{align}
We now note that integration by parts gives
\begin{align}
\varphi(\alpha)\int_{b^{\ast}}^\infty e^{-\varphi (\alpha) u}W^{(\alpha)}(u-z)du=e^{-\varphi(\alpha)b^*}W^{(\alpha)}(b^*-z)+\int_{b^{\ast}}^\infty e^{-\varphi (\alpha) u}W^{(\alpha)\prime}(u-z)du.\label{int_der_W}
\end{align}
Therefore using \eqref{aux_g_1}, \eqref{dc_aux}, \eqref{resol_ref_aux} and \eqref{int_der_W} in \eqref{161}, we obtain \eqref{v_prime_opt}.

(ii) For the case $b^\ast = 0$, using \eqref{161} together with \eqref{111} and \eqref{111_der}, we obtain that for $x\geq0$,
\begin{align*}
v_{0}^\prime(x)
= \beta\rbra{\bZ^{(\alpha)}(x) -\frac{\alpha}{\varphi (\alpha)} \bW^{(\alpha)}(x)}
+r\int_{(0, \infty )}w'_+ (z) \rbra{ e^{-\varphi (\alpha)z} \bW^{(\alpha)}(x)
	-\bW^{(\alpha) }(x-z)}dz.
\end{align*}
Therefore, Theorems 5 (ii) and 6 (ii) in \cite{KL} together with \eqref{int_der_W} imply \eqref{v_prime_opt}.
\end{proof}
Let $\mathcal{L}$ be the infinitesimal generator associated with
the process $X$ applied to a $C^1$ (resp.\ $C^2$) function $f$ for the case $X$ is of bounded (resp.\ unbounded) variation:
\begin{align} \label{generator}
\mathcal{L} f(x) &:= \gamma f'(x) + \frac 1 2 \sigma^2 f''(x) + \int_{(-\infty,0)} \left[ f(x+z) - f(x) -  f'(x) z 1_{\{-1 < z < 0\}} \right] \Pi(\diff z), \quad x  >0.
\end{align}
We now provide a verification lemma.
The proof is essentially the same as Lemma 5.1 in \cite{PerYamYu2018} (which deals with the case where the payoff function $w$ is equal to zero) and Lemma 4.1 in \cite{HPY} (which deals with the case without capital injection), and is hence omitted.
\begin{lemma}[Verification lemma]\label{ver_lemma}
Suppose $\hat{\pi}$ is an admissible dividend strategy such that $v_{\hat{\pi}}$ is sufficiently smooth on $(0, \infty)$, continuous on $\R$, and, for the case of unbounded variation, continuously differentiable at zero. In addition, we assume that
\begin{align}
\sup_{0\leq r\leq \delta}((\cL -\theta)v_{\hat{\pi}}(x)-rv_{\hat{\pi}}'(x)+r)+w(x) \leq 0 ,\quad &~~~x>0, \label{189}\\
v_{\hat{\pi}}^\prime(x) \leq \beta,\quad&~~~x>0, \notag\\
\inf_{x \geq 0} v_{\hat{\pi}} (x) >-m,\quad&~~~\text{ for some }m>0.\notag
\end{align}
Then $v(x)=v_{\hat{\pi}}(x)$ for all $x\geq 0$ and hence $\hat{\pi}$ is an optimal strategy.
\end{lemma}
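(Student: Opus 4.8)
The plan is the classical verification argument. I would fix an arbitrary admissible strategy $\pi\in\Pi$ with controlled surplus $U^\pi$, apply It\^o's formula to the discounted process $s\mapsto e^{-\theta s}v_{\hat\pi}(U^\pi(s))$, and use the two pointwise inequalities in the statement to turn this into the bound $v_\pi(x)\le v_{\hat\pi}(x)$ for all $x\ge0$. Since $\hat\pi$ is itself admissible, taking the supremum over $\pi\in\Pi$ then forces $v=v_{\hat\pi}$, so that $\hat\pi$ attains the value and is optimal.

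Concretely, I would first extend $v_{\hat\pi}$ to all of $\R$ by an affine function of slope at most $\beta$ matching it at $0$ (e.g.\ slope $v_{\hat\pi}'(0+)$, which keeps the extension continuously differentiable at $0$ in the unbounded-variation case); this is needed to make sense of $\cL v_{\hat\pi}$ at points close to $0$ and it encodes that reflecting $U^\pi$ at $0$ cannot be cheaper than its literal cost $\beta$ per unit. Applying the change-of-variables formula to $e^{-\theta s}v_{\hat\pi}(U^\pi(s))$ — the finite-variation version when $X$ has bounded variation, and the semimartingale version when $X$ has unbounded variation, using in the latter case that $U^\pi$ spends zero Lebesgue time at $0$ so the mild irregularity of the extension at and below the origin is harmless for the second-order term, and that the extension is continuously differentiable at $0$ — and regrouping the bounded-variation drift through the generator \eqref{generator}, the slope bound $v_{\hat\pi}'\le\beta$ on $(0,\infty)$ ensures that the capital-injection process contributes at most its discounted cost $\beta\int_0^t e^{-\theta s}\diff R^\pi(s)$, which cancels against the corresponding term inside the running objective. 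Thus, for a local martingale $M$ and the running objective $J^\pi_t$ accumulated up to time $t$ (with $v_\pi(x)=\lim_{t\to\infty}\bE_x[J^\pi_t]$), one obtains
\begin{align*}
e^{-\theta t}v_{\hat\pi}(U^\pi(t))+J^\pi_t &\le v_{\hat\pi}(x) \\
&\quad+ \int_0^t e^{-\theta s}\Big[(\cL-\theta)v_{\hat\pi}(U^\pi(s))-\ell^\pi(s)v_{\hat\pi}'(U^\pi(s))+\ell^\pi(s)+w(U^\pi(s))\Big]\diff s + M_t.
\end{align*}
The $\diff s$-integrand is $\le 0$ by the HJB inequality \eqref{189} with the dummy rate taken to be $\ell^\pi(s)\in[0,\delta]$ (the set $\{s:U^\pi(s)=0\}$ is Lebesgue-null and contributes nothing to the integral); hence $e^{-\theta t}v_{\hat\pi}(U^\pi(t))+J^\pi_t\le v_{\hat\pi}(x)+M_t$.

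It then remains to localize and pass to the limit. Taking a localizing sequence $(\tau_n)$ for $M$ and expectations gives $\bE_x\big[e^{-\theta(t\wedge\tau_n)}v_{\hat\pi}(U^\pi(t\wedge\tau_n))+J^\pi_{t\wedge\tau_n}\big]\le v_{\hat\pi}(x)$. Letting $n\to\infty$ uses monotone convergence for the dividend part of $J^\pi$, the admissibility bound \eqref{9} as a dominating function for the capital-injection part, and, for the payoff part, the at-most-linear growth of $w$ from Assumption \ref{AA3} together with $\bE_x\big[\int_0^\infty e^{-\theta s}(|X(s)|+R^\pi(s))\,\diff s\big]<\infty$, which follows from Assumption \ref{AA1} and \eqref{9}; the boundary term is handled by dominated convergence via $-m\le v_{\hat\pi}(y)\le v_{\hat\pi}(0)+\beta y$ for $y\ge0$ and the same moment bounds. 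Letting then $t\to\infty$, one checks $\bE_x[e^{-\theta t}v_{\hat\pi}(U^\pi(t))]\to 0$ from the sandwich $-m\,e^{-\theta t}\le \bE_x[e^{-\theta t}v_{\hat\pi}(U^\pi(t))]\le e^{-\theta t}\big(v_{\hat\pi}(0)+\beta\,\bE_x[U^\pi(t)]\big)$ together with $\bE_x[e^{-\theta t}U^\pi(t)]\to0$ (again from Assumption \ref{AA1} and \eqref{9}, using $U^\pi(t)\le |X(t)|+R^\pi(t)$). This yields $v_\pi(x)\le v_{\hat\pi}(x)$, and, taking the supremum over $\pi\in\Pi$ and noting the trivial bound $v_{\hat\pi}\le v$, we conclude $v=v_{\hat\pi}$, so $\hat\pi$ is optimal.

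The step I expect to be the main obstacle is the rigorous passage through It\^o's formula: $v_{\hat\pi}$ is only $C^1$ (resp.\ $C^2$) away from the origin, while $U^\pi$ can jump strictly below $0$ before being pushed back and, in the unbounded-variation case, accumulates continuous local time at $0$, so the capital-injection contribution must be bounded above with care — this is precisely where the affine extension of slope $\le\beta$ and the slope bound $v_{\hat\pi}'\le\beta$ interlock, and one must also justify, by a suitable approximation, that the change-of-variables formula applies with the mildly irregular extension and that every stochastic integral in sight is genuinely a local martingale or well defined. By comparison the limiting step is routine once the moment bounds above are recorded; relative to Lemma 5.1 in \cite{PerYamYu2018} (which treats the case $w\equiv0$), the only genuinely new ingredient is the control of the payoff term $\int_0^\infty e^{-\theta s}w(U^\pi(s))\,\diff s$, for which the concavity and the bound $w'_+(\infty)\in[0,1]$ in Assumption \ref{AA3} are exactly what is needed.
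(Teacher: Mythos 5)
Your proposal is the classical It\^o-based verification argument, which is precisely the (omitted) proof the paper has in mind — it explicitly defers to Lemma 5.1 of \cite{PerYamYu2018} and Lemma 4.1 of \cite{HPY} — and your handling of the additional payoff term via the linear growth of $w$ from Assumption \ref{AA3} is indeed the only genuinely new ingredient, as you note. One small remark: for consistency with how \eqref{189} is actually verified (through the scale-function formula, which already defines $v_{b^*}$ on all of $\R$ with slope exactly $\beta$ on the negative half-line), the extension below zero should be taken with slope exactly $\beta$ rather than $v_{\hat{\pi}}'(0+)$, since otherwise the nonlocal term in $\cL v_{\hat{\pi}}(x)$ appearing in the hypothesis would not coincide with the one entering your It\^o decomposition (the two choices differ only by a bookkeeping shift between the drift and the reflection cost, but the hypothesis applies verbatim only to the slope-$\beta$ extension).
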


We shall proceed by obtaining some properties of the function $v_{b^*}$.
\begin{lemma} \label{lemma_slope}
	For the optimal threshold $b^*$ defined in \eqref{1}, we have $1\leq v_{b^*}'(x) \leq \beta$ for $x < b^*$, and $0 \leq v_{b^*}'(x) \leq 1$ for $x \geq b^*$.
\end{lemma}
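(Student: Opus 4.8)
The plan is to combine the probabilistic representation \eqref{v_prime_opt} with the monotonicity properties of the function $g(\cdot)/h(\cdot)$ established in Section \ref{Sec105}, plus the smooth-fit condition $g(b^*)=0$ at the threshold. Recall from \eqref{smooth_fit} that
\[
\frac{g(b)}{h(b)}=\beta-1-\bE_b\left[\int_0^{\kappa^-_0}e^{-\alpha t}\rbra{\beta\alpha-rw'_+(\Gamma^{(b)}(t))}dt\right],
\]
and the right-hand side of \eqref{v_prime_opt} is exactly $\beta$ minus the same expectation evaluated with starting point $x$ instead of $b^*$; that is, $v_{b^*}'(x)=1+\frac{g(\text{``}x\text{''})}{h(\text{``}x\text{''})}$ in the suggestive sense that one reads off the representation with the refraction level fixed at $b^*$ but the initial condition varying. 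So the upper and lower bounds on $v_{b^*}'$ reduce to controlling the expectation $\bE_x[\int_0^{\kappa^-_0}e^{-\alpha t}(\beta\alpha-rw'_+(\Gamma^{(b^*)}(t)))\,dt]$ as a function of $x$.

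First I would treat the regime $x\ge b^*$. Here $\Gamma^{(b^*)}$ starts above the refraction level, so until it downcrosses $b^*$ it moves with the depleted drift; once it is below $b^*$ it behaves like $X$ reflected-refracted in the usual way. Since $w$ is concave, $w'_+$ is non-increasing, so $w'_+(\Gamma^{(b^*)}(t))\le w'_+(0+)\le\beta$ by Assumption \ref{AA3}; also $w'_+(\infty)\ge 0$ gives $w'_+\ge 0$. Using $0\le w'_+\le w'_+(0+)\le\beta$ together with $\beta\ge 1$ and the elementary bound $\alpha\bE_x[\int_0^{\kappa^-_0}e^{-\alpha t}\,dt]=\bE_x[1-e^{-\alpha\kappa^-_0}]\le 1$, one gets $\bE_x[\int_0^{\kappa^-_0}e^{-\alpha t}(\beta\alpha-rw'_+)\,dt]\le\beta\bE_x[1-e^{-\alpha\kappa^-_0}]\le\beta$, which yields $v_{b^*}'(x)\ge\beta-\beta\ge 0$... but actually the sharp statement $v_{b^*}'(x)\le 1$ for $x\ge b^*$ is the one that needs the threshold definition: by \eqref{smooth_fit} and the fact that $b\mapsto g(b)/h(b)$ is non-increasing with $g(b^*)/h(b^*)=0$, one has $g(b)/h(b)\le 0$ for all $b\ge b^*$; I would then upgrade this to a statement about the starting point $x\ge b^*$ rather than the refraction level, observing that increasing the initial condition only postpones $\kappa^-_0$ and keeps $\Gamma^{(b^*)}$ larger pathwise, hence (by concavity of $w$, so $w'_+$ smaller) makes $\beta\alpha-rw'_+(\Gamma^{(b^*)}(t))$ larger, so the subtracted expectation is at least $g(b^*)/(h(b^*))\cdot(\text{something})$... the cleanest route is a coupling/monotonicity argument showing $x\mapsto v_{b^*}'(x)$ inherits monotonicity from the representation, then evaluate at $x=b^*$ where $v_{b^*}'(b^*)=1$ (this last equality follows because $g(b^*)=0$ makes the representation give exactly $\beta$ minus the value at $b^*$, which is $1$ by \eqref{smooth_fit}).

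For $x<b^*$ the symmetric argument applies: $g(b)/h(b)\ge 0$ for $b\le b^*$ gives, via the same monotonicity transfer from refraction level to initial point, $v_{b^*}'(x)\ge 1$; and the upper bound $v_{b^*}'(x)\le\beta$ is \eqref{v_prime_0} at the boundary combined with monotonicity, or directly from $v_{b^*}'(x)=\beta-\bE_x[\int_0^{\kappa^-_0}e^{-\alpha t}(\beta\alpha-rw'_+(\Gamma^{(b^*)}(t)))\,dt]\le\beta$ since the integrand is nonnegative (as $w'_+(\Gamma^{(b^*)}(t))\le w'_+(0+)\le\beta$ forces $\beta\alpha-rw'_+\ge\beta\alpha-r\beta=\beta q\ge 0$, using $\alpha=q+r$). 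I expect the main obstacle to be making rigorous the transfer from "monotone in the refraction level $b$" (which is what Lemma-type argument \'a la Lemma 4.4 in \cite{HPY} gives) to "monotone in the initial position $x$ with $b=b^*$ fixed": this requires a pathwise comparison for the refracted L\'evy process $\Gamma^{(b^*)}$ started from different points, using that two copies started at $x_1<x_2$ stay ordered and that the one started higher has a later passage time below zero, so the concavity of $w$ can be leveraged under the expectation. Once that monotonicity in $x$ is in hand, the bounds at $x=b^*$ (namely $v_{b^*}'(b^*)=1$) pin down the inequalities on both sides.
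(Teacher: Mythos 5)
Your proposal follows essentially the same route as the paper's proof: the representation \eqref{v_prime_opt}, the monotonicity of $x\mapsto v_{b^*}'(x)$ obtained from the pathwise ordering of $\Gamma^{(b^*)}$ in its starting point together with the monotonicity of $x\mapsto \beta\alpha-rw'_+(x)$ (a step the paper itself only asserts), the smooth-fit value $v_{b^*}'(b^*)=1$ from $g(b^*)=0$, and the boundary bound $v_{b^*}'(0+)\leq\beta$. The only point to patch is the case $b^*=0$, where one only has $g(0)\leq 0$ rather than $g(0)=0$, so $v_0'(0+)=1+g(0)/h(0)\leq 1$ (not $=1$); the interval $x<b^*$ is then empty and the claimed bounds for $x\geq b^*$ still follow from non-negativity and monotonicity, exactly as in the paper's case (ii).
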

\begin{proof}
Because $w$ is concave and its right derivative $w'_+(0+)\leq \beta$, we have that the mapping $x \mapsto \beta\alpha -rw'_+(x)$ is non-decreasing on $[0, \infty)$. It follows that the mapping
\[
x\mapsto \bE_x\left[ \int_0^{\kappa^-_0} e^{-\alpha t} \rbra{\beta \alpha -rw'_+(\Gamma^{(b^\ast)} (t))}dt \right],
\]
is non-decreasing as well. Using \eqref{v_prime_opt}, we derive that
\begin{align*}
v_{b^\ast}^\prime(x)&=\beta - \bE_x\left[ \int_0^{\kappa^-_0} e^{-\alpha t} \rbra{\beta \alpha -rw'_+(\Gamma^{(b^\ast)} (t))}dt \right]\\
&=\beta\E_x\left[e^{-\alpha\kappa_0^-};\kappa_0^-<\infty\right]+r\bE_x\left[ \int_0^{\kappa^-_0} e^{-\alpha t} w'_+(\Gamma^{(b^\ast)} (t))dt \right].
\end{align*}
By Assumption \ref{AA3} we have that the right derivative $w'_+(x)\geq0$ for all $x\geq0$, therefore the previous equality implies that the function $v_{b^\ast}^\prime$ is non-increasing and non-negative on $(0, \infty)$.

(i) Let us assume $b^*>0$. Using \eqref{smooth_fit}, and the fact that $g(b^*)=0$, we obtain
\begin{equation}\label{v_prime_b}
v_{b^\ast}^\prime(b^*)=\beta - \bE_{b^*}\left[ \int_0^{\kappa^-_0} e^{-\alpha t} \rbra{\beta \alpha -rw'_+ (\Gamma^{(b^\ast)} (t))}dt \right]=1.
\end{equation}
On the other hand, the fact that $w'_+(x)\leq \beta$ for all $x\geq0$, implies %\eqref{v_prime_0} implies $v_{b^*}'(0+)\leq \beta$.
\begin{align}\label{v_prime_00}
v_{b^\ast}^\prime(0+)=\beta - \bE_{0}\left[ \int_0^{\kappa^-_0} e^{-\alpha t} \rbra{\beta \alpha -rw'_+ (\Gamma^{(b^\ast)} (t))}dt \right]\leq \beta - \beta\rbra{ \alpha -r}\bE_{0}\left[ \int_0^{\kappa^-_0} e^{-\alpha t} dt \right]\leq \beta.
\end{align}
Therefore using that $v_{b^\ast}^\prime$ is non-increasing on $(0, \infty)$ together with \eqref{v_prime_b} and \eqref{v_prime_00}, we obtain the result.

(ii) For the case $b^*=0$, we note that \eqref{smooth_fit} and the fact that $g(0)\leq 0$ yield that
\begin{align*}
v_{0}^\prime(0+)=\beta - \bE_0\left[ \int_0^{\kappa^-_0} e^{-\alpha t} \rbra{\beta \alpha -rw'_+(\Gamma^{(0)} (t))}dt \right]=\frac{g(0)}{h(0)}+1\leq 1 .
\end{align*}
This observation, together with the fact that $v_0'$ is non-increasing and non-negative on $[0, \infty)$, implies the result.
\end{proof}
\begin{remark}\label{inf_v_opt}
By Lemma \ref{lemma_slope}, we have that the function $v_{b^*}'$ is non-decreasing on $(0,\infty)$. This implies that $\inf_{x\geq 0}v_{b^*}(x)\geq v_{b^*}(0)>-\infty$.
\end{remark}
The next result allows us to work the HJB equation in a simplified way. The proof is essentially the same as for Lemma 4.3 in \cite{HPY} (without capital injection) and therefore is omitted.
\begin{lemma}\label{optimality_1}
	The first inequality in \eqref{189} holds for $v_{b^*}$ if and only if
	\begin{equation}\label{ineq_opt}
		\begin{cases}
			v_{b^*}'(x) \geq 1, & \text{if $0< x\leq b^*$}, \\
			v_{b^*}'(x) \leq 1, & \text{if $x>b^*$}.
		\end{cases}
	\end{equation}
\end{lemma}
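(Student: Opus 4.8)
The plan is to reduce the first inequality in \eqref{189} to a pointwise sign condition on $1-v_{b^*}'$, exploiting that $v_{b^*}$ solves, \emph{with equality}, the relevant equation on each of the two regions $\{0<x<b^*\}$ and $\{x>b^*\}$.

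First I would rewrite the inequality. In the notation of Section~\ref{sec4} the first line of \eqref{189} reads $\sup_{0\le \ell\le\delta}\big((\mathcal{L}-\alpha)v_{b^*}(x)-\ell\, v_{b^*}'(x)+\ell\big)+r\,w(x)\le 0$, and since the bracketed quantity equals $(\mathcal{L}-\alpha)v_{b^*}(x)+\ell\,(1-v_{b^*}'(x))+r\,w(x)$, which is affine in $\ell$, the supremum is attained at $\ell=0$ when $v_{b^*}'(x)\ge1$ and at $\ell=\delta$ when $v_{b^*}'(x)\le1$. Thus \eqref{189} (first line) at $x$ is equivalent to $(\mathcal{L}-\alpha)v_{b^*}(x)+r\,w(x)\le0$ when $v_{b^*}'(x)\ge1$, and to $(\mathcal{L}-\alpha)v_{b^*}(x)-\delta v_{b^*}'(x)+\delta+r\,w(x)\le0$ when $v_{b^*}'(x)\le1$ (the two conditions coincide where $v_{b^*}'(x)=1$). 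Here $\mathcal{L}v_{b^*}$ is well defined: when $b^*>0$ one has $g(b^*)=0$ (from the monotonicity of $g/h$ shown above and continuity), so Lemma~\ref{thm:smoothfit} gives $v_{b^*}\in C^1(\mathbb{R}\setminus\{0\})$ in the bounded-variation case and $v_{b^*}\in C^2(\mathbb{R}\setminus\{0\})$ in the unbounded-variation case, while the slope bounds of Lemma~\ref{lemma_slope} and Assumption~\ref{assum_w} keep the integral in \eqref{generator} convergent.

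Second --- the crux --- I would verify that $v_{b^*}$ satisfies the equalities $(\mathcal{L}-\alpha)v_{b^*}(x)+r\,w(x)=0$ for $0<x<b^*$ and $(\mathcal{L}-\alpha)v_{b^*}(x)-\delta v_{b^*}'(x)+\delta+r\,w(x)=0$ for $x>b^*$. This is the counterpart of Lemma~4.3 in \cite{HPY}. Writing $v_{b^*}=z_1+z_2$ as in the proof of Proposition~\ref{enp_aux}, the term $z_2$ is (up to an explicit $\alpha$-harmonic correction) the $\alpha$-resolvent of the refracted--reflected process integrated against $rw$, which by Corollary~4.1 of \cite{PerYam2018} solves $(\mathcal{L}-\alpha)z_2+r\,w=0$ on $(0,b^*)$ and $(\mathcal{L}-\alpha)z_2-\delta z_2'+r\,w=0$ on $(b^*,\infty)$; and $z_1$, the NPV of dividends minus injection, solves $(\mathcal{L}-\alpha)z_1=0$ on $(0,b^*)$ and $(\mathcal{L}-\alpha)z_1-\delta z_1'+\delta=0$ on $(b^*,\infty)$, because below $b^*$ no dividends are paid and the reflection at $0$ enters only through a boundary term there, whereas above $b^*$ dividends run at rate $\delta$ and the drift is reduced accordingly. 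Adding the two gives the asserted identities; alternatively one differentiates \eqref{enpv_aux} directly and uses $(\mathcal{L}-\alpha)W^{(\alpha)}\equiv(\mathcal{L}-\alpha)Z^{(\alpha)}\equiv0$ on $(0,\infty)$ together with the $W^{(\alpha)}$--$\mathbb{W}^{(\alpha)}$ identities from \cite{KL}. I expect this step to be the main obstacle --- not for any new idea, but because of the bookkeeping around the reflection boundary at $0$ and the refracted scale functions; this is why the authors import it from \cite{HPY,PerYam2018,PerYamYu2018}.

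Finally I would combine the two. For $0<x<b^*$ the first identity turns the reduced inequality into $\sup_{0\le\ell\le\delta}\ell\,(1-v_{b^*}'(x))\le0$, i.e.\ $v_{b^*}'(x)\ge1$; for $x>b^*$, substituting the second identity turns it into $\sup_{0\le\ell\le\delta}(\delta-\ell)\,(v_{b^*}'(x)-1)\le0$, i.e.\ $v_{b^*}'(x)\le1$. The point $x=b^*$ (present only when $b^*>0$) is handled by continuity of $v_{b^*}'$ on $\mathbb{R}\setminus\{0\}$ from Lemma~\ref{thm:smoothfit} together with $v_{b^*}'(b^*)=1$, cf.\ \eqref{v_prime_b}. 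Reading these equivalences in both directions over all $x>0$ yields exactly \eqref{ineq_opt}; when $b^*=0$ the first branch of \eqref{ineq_opt} is vacuous and the statement reduces to ``the first line of \eqref{189} holds on $(0,\infty)$ iff $v_0'\le1$ on $(0,\infty)$'', which holds automatically by Lemma~\ref{lemma_slope}.
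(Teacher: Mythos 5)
Your proposal is correct and follows essentially the same route as the paper, which omits the proof by referring to Lemma 4.3 of \cite{HPY}: reduce the supremum over the affine dividend rate, verify that $v_{b^*}$ satisfies $(\mathcal{L}-\alpha)v_{b^*}+rw=0$ on $(0,b^*)$ and $(\mathcal{L}-\alpha)v_{b^*}-\delta v_{b^*}'+\delta+rw=0$ on $(b^*,\infty)$, and read off the equivalence with the slope conditions. Your treatment of the boundary point $x=b^*$ via smooth fit and of the degenerate case $b^*=0$ matches the intended argument.
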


\begin{proof}[Proof of Theorem \ref{Prop102}]
Thanks to Remark \ref{inf_v_opt}, Lemmas \ref{ver_lemma}, \ref{lemma_slope} and \ref{ineq_opt}, we verified the optimality of the refracted-reflected strategy at the threshold $b^*$, explicitly given in \eqref{1}, as in the statement.
\end{proof}
\section{Optimal strategies for bail-out dividend problems with regime switching}\label{sec5}
This section provides the rigorous proof of the Theorem \ref{Thm201} formulated in Section \ref{sec3}, which is the main result of this paper.

\subsection{Iteration algorithm to compute the value function $V$}
We will first show that the value function $V_{\pi^{b}}$, under a modulated refracted-reflected strategy at the level $b=(b(i),i\in E)$ and $0$ respectively, solves a fixed point equation.

Let us define
\begin{align*}
V_+:=\frac{\delta_+ }{r_-},\qquad\text{and}\qquad V_-:=-\beta \sup_{i\in E}\E_{(0,i)}\left[\int_0^{\infty}e^{-r_-t}\diff R^{(0)}(t)\right],
\end{align*}
where $\delta_+:= \max_{i\in E}\delta(i)$, %$\beta_+:=\max_{i\in E}\beta(i)$, 
$r_-:=\min_{i\in E}r(i)$ and $R^{(0)}(t) :=\sup_{s\leq t}(-(X(s)-\delta_+s))\vee 0$ for $t\geq0$.
%$R^{(0)}(t) :=\sup_{s\leq t}(-X(s))\vee 0$ for $t\geq0$. \blue{[Here is $R^{(0)}$ correct? Or should we use $R^{(0)}(t) :=\sup_{s\leq t}(-(X(s)-\delta_+s))\vee 0$ for $t\geq0$ instead?]} 
It is straightforward to check the next auxiliary result, which provides some bounds for the value function that will be useful for later iterations.
\begin{proposition}\label{Prop201}
 We have that
\begin{align*}
V_-< {V(x, i)}< V_+,
\end{align*}
for all $x\in [0, \infty)$ and $i\in E$.
\end{proposition}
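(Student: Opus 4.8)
The plan is to prove the two bounds by direct estimates on $V_\pi(x,i)$ for suitable competing strategies, and then to upgrade each inequality to a strict one.

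For the right-hand bound, fix an arbitrary $\pi\in\mathcal{A}$. Since $\beta>1>0$ and $R^\pi$ is non-decreasing, the capital-injection term is non-positive; discarding it and using $\ell^\pi(t)\le\delta(H(t))\le\delta_+$ together with $\Lambda(t)=\int_0^t r(H(s))\diff s\ge r_- t$ gives
\[
V_\pi(x,i)\ \le\ \bE_{(x,i)}\!\left[\int_0^\infty e^{-r_- t}\,\delta_+\,\diff t\right]=\frac{\delta_+}{r_-}=V_+ .
\]
Taking the supremum over $\pi\in\mathcal{A}$ yields $V(x,i)\le V_+$. To get the strict inequality I would argue that $V_+$ is not even approached: any strategy whose value is close to $V_+$ must pay at a rate close to $\delta_+$ for essentially all times, inject essentially no capital, and moreover the chain must stay almost surely in states with $r(\cdot)=r_-$; but then the surplus behaves like $X(t)-\delta_+ t$ up to a negligible reflection term, and since $X$ is a genuine spectrally negative process (and regime changes only add further non-positive jumps), it dips below $0$ with positive probability, forcing a capital injection whose discounted expectation is bounded away from $0$ by a constant depending only on $x$. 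This contradicts closeness to $V_+$, so $V(x,i)<V_+$.

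For the left-hand bound I would take the admissible strategy $\pi^0$ that never pays dividends ($\ell^{\pi^0}\equiv0$) and only reflects the surplus at $0$ via capital injection; that $\pi^0$ satisfies \eqref{adm_ci} follows exactly as in the Remark preceding Theorem \ref{Thm201} from Assumptions \ref{AAA1} and \ref{AAA2}. Then $V_{\pi^0}(x,i)=-\beta\,\bE_{(x,i)}\big[\int_{[0,\infty)}e^{-\Lambda(t)}\,\diff R^{\pi^0}(t)\big]$. Comparing the reflection functionals pathwise, since $X(t)\ge X(t)-\delta_+ t$ and $X_0=x\ge 0$ one has $R^{\pi^0}(t)\le R^{(0)}(t)$ for all $t$; then, using $\Lambda(t)\ge r_- t$, the head start $x\ge0$, and integration by parts $\int_0^\infty e^{-r_- t}\diff R(t)=r_-\int_0^\infty e^{-r_- t}R(t)\diff t$,
\[
\bE_{(x,i)}\!\left[\int_{[0,\infty)}e^{-\Lambda(t)}\,\diff R^{\pi^0}(t)\right]\ \le\ \bE_{(0,i)}\!\left[\int_0^\infty e^{-r_- t}\,\diff R^{(0)}(t)\right]\ \le\ \sup_{j\in E}\bE_{(0,j)}\!\left[\int_0^\infty e^{-r_- t}\,\diff R^{(0)}(t)\right],
\]
whence $V(x,i)\ge V_{\pi^0}(x,i)\ge V_-$. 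The strict inequality follows because the drift $\delta_+ t$ makes $R^{(0)}$ strictly larger than $R^{\pi^0}$ on an event of positive probability (and the slack from $x\ge 0$ or from $r(H(\cdot))>r_-$ only helps), so $\bE_{(x,i)}[\int e^{-\Lambda}\diff R^{\pi^0}]<\sup_j\bE_{(0,j)}[\int e^{-r_- t}\diff R^{(0)}]$ and hence $V(x,i)>V_-$.

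The routine parts are the two displayed estimates. The only genuinely delicate point is the strictness of the upper bound, i.e.\ excluding a maximizing sequence converging to $V_+$; I would make it quantitative by noting that for fixed $x$ we have $\bP_{(x,i)}(\inf_{t\ge0}(X(t)-\delta_+ t)<0)>0$ because $X$ is spectrally negative and non-monotone, and that on this event every strategy paying close to the maximal rate incurs a discounted capital-injection cost at least some $\varepsilon>0$ that is independent of the strategy; combined with the dividend estimate this forces $V(x,i)\le V_+-\varepsilon$. If in the sequel only the non-strict bounds $V_-\le V\le V_+$ are needed for the iteration in Section \ref{sec5}, this last refinement can be omitted.
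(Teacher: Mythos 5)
The paper gives no proof of this proposition (it is dismissed as ``straightforward to check''), so there is nothing to compare line by line; your derivation of the two non-strict bounds is the natural argument and is correct. For the upper bound, discarding the non-positive injection term and using $\ell^\pi(t)\le\delta_+$ together with $\Lambda(t)\ge r_- t$ gives $V_\pi(x,i)\le \delta_+/r_-$ for every admissible $\pi$. For the lower bound, the no-dividend reflection strategy combined with the pathwise domination $R^{\pi^0}(t)\le R^{(0)}(t)$ (valid because $x\ge 0$ and $X(s)\ge X(s)-\delta_+ s$) and the integration-by-parts step $\int_0^\infty e^{-r_-t}\diff R(t)=r_-\int_0^\infty e^{-r_-t}R(t)\diff t$ gives $V(x,i)\ge V_{\pi^0}(x,i)\ge V_-$. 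These non-strict inequalities are precisely what is used later (boundedness of the value in the proof of Proposition \ref{Prop203}, and the existence of $v_0^\pm\in\cD$ with $v_0^-\le V\le v_0^+$ in Proposition \ref{Prop301}), so the essential content is established; one should also note, as you implicitly do, that $V_->-\infty$ follows from Assumptions \ref{AAA1} and \ref{AAA2} exactly as in the admissibility remark.

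The only soft spots are the strictness claims, which you present as sketches. In the upper-bound argument the phrase ``the chain must stay almost surely in states with $r(\cdot)=r_-$'' conflates what the controller chooses with what the chain does; the clean dichotomy is that either $\bE_{(x,i)}\bigl[\int_0^\infty e^{-\Lambda(t)}\delta(H(t))\diff t\bigr]<\delta_+/r_-$ already, or this quantity equals $V_+$, in which case one shows that a dividend term within $\varepsilon$ of $V_+$ forces $L^\pi(T)\ge \delta_+T-C\varepsilon$ with high probability, hence $R^\pi(T)\ge \delta_+T-X(T)-x-C\varepsilon$, which is bounded below on the positive-probability event $\{X(T)-\delta_+T<-x-1\}$; this is the quantitative route you indicate, and it does close the gap, but it is not written out. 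For the lower bound, strictness can actually fail in the degenerate case $\delta_+=0$ (a single-regime model with $x=0$ gives $V(0,i)=V_-$), so strictness there tacitly requires $\delta_+>0$. Since only $V_-\le V\le V_+$ is invoked in Section \ref{sec5}, none of this affects the downstream argument.
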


We then consider the space of functions
\begin{align*}
\cB :=&\{ f: f(\cdot,i)\in C
([0,\infty)), \text{$f(x,i)$ is bounded, for $i\in E$}\}.
\end{align*}
endowed with the norm $\|f\| :=\max_{i\in E}\sup_{x\geq 0}|f(x,i)|$ for $f\in \cB$.

For any $f:[0,\infty)\times E\mapsto\R$, we define the function $\tilde{f}:[0,\infty)\times E\mapsto\R$ given by
\begin{align}\label{finite_tilde}
\tilde{f}(x,i):=\sum_{j\in E,j\not=i}\frac{q_{ij}}{q_i}\int_{(-\infty,0)}\Big[(\beta(x+y)+f(0,j))1_{\{-y>x\}}+f(x+y, j)1_{\{-y\leq x\}}\Big]\diff  F_{ij}(y),
\end{align}
where $F_{ij}$ is the distribution function of the random variable $J_{ij}$ for $i,j\in E$ and $q_i = \sum_{j\neq i} q_{ij}$.
%\green{[To understand the meaning of $\tilde{f}$ soon, should we add the remark which shows
%\begin{align}
% &
%\bE_{(x, i)} \bigg{[}\int_0^{{\zeta} } e^{- {\Lambda(t) }}l^\pi(t) dt
%-\int_{[0, {\zeta}]} {\beta(H(t))}e^{-{\Lambda(t)}}dR^\pi(t)
%+
% e^{-\Lambda (\zeta)}f(U^\pi (\zeta), H(\zeta))\bigg{]}\\
% = &
%\bE_{(x, i)} \bigg{[}\int_0^{{\zeta} } e^{- {\Lambda(t) }}l^\pi(t) dt
%-\int_{[0, {\zeta})} {\beta(H(t))}e^{-{\Lambda(t)}}dR^\pi(t)
%+
% e^{-\Lambda (\zeta)}\tilde{f}(U^\pi (\zeta-), H(\zeta-))\bigg{]} ?
%\end{align}
%I do not know this remark had better be there or not.
%]}\blue{[Kei: I am not sure, because it can enlarge the paper and I believe it is not that necessary?  It will become clear in Proposition \ref{Prop203} which is just below?}
%\green{[JL: I agree with you. ]}

\begin{remark}\label{rem_tilde_f}
Note that
\begin{align*}
|\tilde{f}(x,i)|\leq \sum_{j\in E,j\not=i}\frac{q_{ij}}{q_i}\left[-\beta\E[J_{ij}]+\|f\|\right],\qquad\text{$(x,i)\in[0,\infty)\times E$}.
\end{align*}
Then if $f\in\cB$, Assumption \ref{AAA2} %that $\max_{i,j\in E}\E[-J_{ij}]<\infty$
implies that $\tilde{f}\in\cB$.
\end{remark}
Now for any pair of functions $b\in\cE$ and $f\in\cB$, we define the following mapping
%\begin{align}\label{opr_T_def}
%\\
%T_b f(x, i):=\bE_{(x,i)} \Bigg[\int_0^\infty e^{-\alpha_i t}\diff L_i^{0,b(i)}(t) -\beta(i)\int_{[0, \infty)} e^{-\alpha_i t}dR_i^{0,b(i)}(t)+q_i\int_0^\infty e^{-\alpha_i t} \tilde{f}(U^{0,b(i)}_i(t), i)dt\Bigg],
%\end{align}
%\green{[
%I think we should write by $\bP^i_x$ the law of $X^i$ conditioned on the event $\{X^i_0=x\}$ and %$\bE^i_x$ as the associated expectation operator and
%we should change the definition of $T_b$ as
%\begin{align}
%T_b f(x, i):=&\bE_{(x,i)} \Bigg[\int_0^\zeta e^{-r(i) t}\diff L_i^{0,b(i)}(t) -\beta(i)\int_{[0, \zeta)} e^{-r(i) t}dR_i^{0,b(i)}(t)+e^{-r(i)\zeta}\tilde{f}(U^{0,b(i)}_i(\zeta-), i)\Bigg]\\
%=&\bE^i_{x} \Bigg[\int_0^\infty e^{-\alpha_i t}\diff L_i^{0,b(i)}(t) -\beta(i)\int_{[0, \infty)} e^{-\alpha_i t}dR_i^{0,b(i)}(t)+q_i\int_0^\infty e^{-\alpha_i t} \tilde{f}(U^{0,b(i)}_i(t), i)\diff t\Bigg]
%\end{align}
%]}
%\blue{[OK. How about this:
\begin{align}\label{opr_T_def}
T_b f(x, i):=%&\bE_{(x,i)} \Bigg[\int_0^\zeta e^{-r(i) t}\diff L_i^{0,b(i)}(t) -\beta(i)\int_{[0, \zeta)} e^{-r(i) t}dR_i^{0,b(i)}(t)+e^{-r(i)\zeta}\tilde{f}(U^{0,b(i)}_i(\zeta-), i)\Bigg]\\
\bE^i_{x} \Bigg[\int_0^\infty e^{-\alpha_i t}\diff L_i^{0,b(i)}(t) -\beta\int_{[0, \infty)} e^{-\alpha_i t}dR_i^{0,b(i)}(t)+q_i\int_0^\infty e^{-\alpha_i t} \tilde{f}(U^{0,b(i)}_i(t), i)\diff t\Bigg]
\end{align}	
where $\alpha_i = r(i)+q_i$, and $\bE^i_x$ denotes the expectation operator associated to the law of the process $X^i$ conditioned on the event $\{X^i_0=x\}$.
%?]}
%\green{[JL: Thanks. I think this is very good. ]}
%where %$q_i = \sum_{j\neq i} q_{i, j}$\red{[JL: I wrote the definition of $q_i$ before Remark 2.1]},
%$\alpha_i = r(i)+q_i$.
The process $U^{0,b(i)}_i$
 %$U^{(0,b(i))}_i$
is the refracted-reflected process with threshold $b(i)\geq0$ driven by $X^i$, defined in Section \ref{Sec103}; and $L_i^{0,b(i)}(t)$, %$L_i^{(0,b(i))}(t)$
$R_i^{0,b(i)}(t)$
%$R_i^{(0,b(i))}(t)$
are its corresponding cumulative dividend payments and capital injection, respectively.
\begin{proposition}\label{Prop203}
For $b\in\cE$, and $(x,i)\in\R\times E$ we have
\begin{align*}
V_{\pi^{b}}(x,i) = T_b V_{\pi^{b}}(x,i).
\end{align*}
\end{proposition}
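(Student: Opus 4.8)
The plan is to exploit the dynamic programming principle in Proposition \ref{Prop101} applied to the particular admissible strategy $\pi^b$, and then to translate the abstract expression involving the first regime-switch time $\zeta$ into the concrete functional $T_b$ acting on $V_{\pi^b}$. First I would observe that the strategy $\pi^b$ is Markovian and ``restarts'' at each regime switch: by the strong Markov property of the bivariate process $(X,H)$ together with the independence of $(X^i)_{i\in E}$, $H$ and $(J_{ij})$, the expected NPV $V_{\pi^b}(x,i)$ admits the decomposition obtained by conditioning on the first switch time $T_1$ (which, given $H_0=i$, is exponential with rate $q_i$) and on the post-switch state $H(T_1)=j$ and post-switch jump $J_{ij}$. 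On the event $\{H(T_1)=j\}$, the surplus just before $T_1$ equals $U^{0,b(i)}_i(T_1-)$, and just after it equals $U^{0,b(i)}_i(T_1-)+J_{ij}$, unless this quantity is negative in which case the capital-injection mechanism instantaneously lifts it to $0$ at an extra cost $\beta\cdot|U^{0,b(i)}_i(T_1-)+J_{ij}|$. From that time on the strategy $\pi^b$ behaves exactly as the corresponding refracted-reflected strategy in regime $j$ started from that reset level, so the continuation value is precisely $V_{\pi^b}$ evaluated at the post-switch state. This is exactly the content encoded by $\tilde f$ in \eqref{finite_tilde} with $f=V_{\pi^b}$: the two indicator terms $1_{\{-y>x\}}$ and $1_{\{-y\le x\}}$ separate the case where the regime-switch jump forces a capital injection from the case where it does not.

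Next I would carry out the bookkeeping of the discount factors. Over the interval $[0,T_1)$ the discounting is simply $e^{-r(i)t}$ since $H$ is constant equal to $i$ there, and the additional exponential ``killing'' at rate $q_i$ built into $T_1$ converts $\int_0^{T_1}e^{-r(i)t}(\cdots)\diff t$ into $\int_0^\infty e^{-(r(i)+q_i)t}(\cdots)\diff t=\int_0^\infty e^{-\alpha_i t}(\cdots)\diff t$ after taking expectation over $T_1$, and similarly turns the terminal term $e^{-r(i)T_1}\bE[\,\cdot\mid H(T_1)]$ into $q_i\int_0^\infty e^{-\alpha_i t}\,\tilde f(U^{0,b(i)}_i(t),i)\,\diff t$ — this is the standard device of representing the expectation of a functional evaluated at an independent exponential time as a $q$-discounted integral, exactly as already used in Section \ref{sec4} to pass between the three equivalent forms of $v_\pi$. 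Matching the dividend term $\int_0^{T_1}e^{-r(i)t}\diff L^{0,b}(t)$ and the injection term $\beta\int_{[0,T_1)}e^{-r(i)t}\diff R^{0,b}(t)$ in the same way produces precisely the first two terms in the definition \eqref{opr_T_def} of $T_b f$. Assembling the three pieces gives $V_{\pi^b}(x,i)=T_b V_{\pi^b}(x,i)$.

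The well-definedness of the right-hand side must be checked: by Remark \ref{rem_tilde_f}, since $V_{\pi^b}$ is bounded (it lies strictly between $V_-$ and $V_+$ by Proposition \ref{Prop201}, and is continuous in $x$ so belongs to $\cB$), the function $\widetilde{V_{\pi^b}}$ is again bounded, so $q_i\int_0^\infty e^{-\alpha_i t}\widetilde{V_{\pi^b}}(U^{0,b(i)}_i(t),i)\diff t$ is absolutely integrable; the dividend term is bounded by $\delta(i)/\alpha_i$ and the capital-injection term is finite by the admissibility of $\pi^b$ (the remark preceding Theorem \ref{Thm201}). The main obstacle, and the step requiring the most care, is the strong-Markov / regenerative argument that identifies the continuation value after the first switch with $V_{\pi^b}$ evaluated at the reset surplus level — in particular handling the instantaneous capital injection triggered by the downward jump $J_{ij}$ at time $T_1$, since the reflected process may need to be pushed up at exactly the switching instant, and verifying that the cost of this push is correctly accounted for by the $\beta(x+y)1_{\{-y>x\}}$ contribution inside $\tilde f$. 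Once the path decomposition at $T_1$ is justified rigorously (it is essentially the $n=1$ term of the decomposition already displayed in the admissibility remark, applied to the full NPV rather than just to $R^{\pi^b}$), the rest is the discount-factor bookkeeping described above. $\hfill\myBox$
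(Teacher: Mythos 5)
Your proposal is correct and follows essentially the same route as the paper's proof: apply the strong Markov property at the first regime-switch time $\zeta$, split the continuation value according to whether the switch jump $J_{ij}$ forces an instantaneous capital injection (which is exactly what the two indicators in $\tilde f$ encode), condition on $H(\zeta)=j$ with weights $q_{ij}/q_i$ to produce $\widetilde{V_{\pi^b}}$, and then use the independence and exponential law of $\zeta$ to absorb the killing rate $q_i$ into the discount factor $\alpha_i=r(i)+q_i$. The well-definedness check via Proposition \ref{Prop201}, Lemma \ref{Lipschitz_con_vf} and Remark \ref{rem_tilde_f} also matches the paper.
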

\begin{proof}
By Proposition \ref{Prop201}, together with Lemma \ref{Lipschitz_con_vf}, we have that $V_{\pi^{b}}(x,i)$ is bounded and continuous and hence $V_{\pi^{b}}(x,i)\in\cB$. Let $\zeta$ be the epoch of the first regime switch. By an application of the strong Markov property, we obtain
\begin{align}
 V_{\pi^{b}} (x,i) =& \mathbb{E}_{(x,i)} \left[ \int_0^\infty e^{-\int_0^tr(H(s))ds}\diff L^{0,b}(t) -  \int_{[0, \infty)} e^{-\int_0^tr(H(s))ds} \beta \diff R^{0,b}(t)\right]\notag\\
=&\bE_{(x, i)} \Bigg{[}\int_0^{\zeta } e^{- r(i) t}l^{0,b}(t) dt
-\beta \int_{[0, \zeta)} e^{-r(i) t}dR^{0, b}(t)\notag\\
&+e^{-r (i) \zeta }\bigg[\beta (U^{0,b}(\zeta-)+J_{iH(\zeta)})
+\bE_{(0, H({\zeta}))} \bigg{[}\int_0^{\infty } e^{- q \int_0^{t}r(H(s)) ds}l^{0,b}(t) dt\notag\\
&-\int_{[0, \infty)} e^{- \int_0^{t} qr(H(s)) ds}\beta \diff R^{0,b}(t) \bigg{]}\bigg]1_{\{U^{0,b}(\zeta-)< -J_{ij}\}}\notag\\
&+
e^{-r(i)\zeta}\bE_{(U^{0,b}(\zeta), H({\zeta}))} \bigg{[}\int_0^{\infty } e^{- q \int_0^{t}r(H(s)) ds}l^{0,b}(t) dt\notag\\
&-\int_{[0, \infty)} e^{- \int_0^{t} qr(H(s)) ds}\beta \diff R^{0,b}(t) \bigg{]}1_{\{U^{0,b}(\zeta-)\geq -J_{ij}\}}\Bigg{]}\notag\\
=&\bE_{(x, i)} \bigg{[}\int_0^{\zeta } e^{- r(i) t}l^{0,b}(t) dt
-\beta\int_{[0, \zeta)} e^{-r(i) t}dR^{0,b}(t)\notag\\
&+e^{-r(i)\zeta}V_{\pi^{0,b}}(U^{0,b}(\zeta), H(\zeta))1_{\{U^{0,b}(\zeta-)\geq -J_{ij}\}} \notag\\
&+e^{-r (i) \zeta }\Big(\beta (U^{0,b}(\zeta-)+J_{iH(\zeta)})+V_{\pi^{0,b}}(0, H(\zeta))\Big)1_{\{U^{0,b}(\zeta-)< -J_{ij}\}}\bigg{].}\label{fix-point_1}
\end{align}
Recall that the random variable $J_{ij}$ describes the jump when $H$ changes from the state $i$ to the state $j$. By conditioning on the state of the Markov chain $H$ at the first regime switching time $\zeta$ and the random variable $J_{ij}$, we can get
\begin{align}
&\bE_{(x, i)} \bigg[\int_0^{\zeta } e^{- r(i) t}l^{0,b}(t) dt
-\beta\int_{[0, \zeta)} e^{-r(i) t}dR^{0,b}(t)+e^{-r(i)\zeta}V_{\pi^{0,b}}(U^{0,b}(\zeta), H(\zeta))1_{\{U^{0,b}(\zeta-)\geq -J_{ij}\}} \notag\\
&+e^{-r (i) \zeta }\Big(\beta (U^{0,b}(\zeta-)+J_{iH(\zeta)})+V_{\pi^{0,b}}(0, H(\zeta))\Big)1_{\{U^{0,b}(\zeta-)< -J_{ij}\}}\bigg]\notag\\
=&\sum_{j\in E,j\not=i}\frac{q_{ij}}{q_i}\bE_{(x, i)} \bigg{[}\int_0^{\zeta } e^{- r(i) t}l^{0,b}(t) dt
-\beta\int_{[0, \zeta)} e^{-r(i) t}dR^{0,b}(t)\notag\\
&+e^{-r(i)\zeta}\Big[V_{\pi^b}(U^{0,b}(\zeta-)+J_{ij}, j)1_{\{U^{0,b}(\zeta-)\geq -J_{ij}\}}\notag\\
&+\Big(\beta(U^{0,b}(\zeta-)+J_{ij})+V_{\pi^b}(0,j)\Big)1_{\{U^{0,b}(\zeta-)<- J_{ij}\}}\Big]\bigg|H_{\zeta}=j\bigg{]}\notag\\
=&\bE_{(x, i)} \bigg{[}\int_0^{\zeta } e^{- r(i) t}l^{0,b}(t) dt
-\beta\int_{[0, \zeta)} e^{-r(i) t}dR^{0,b}(t)
+e^{-r(i)\zeta}\tilde{V}_{\pi^b}(U^{0,b}(\zeta-), i)\bigg{]}=T_bV_{\pi^b}(x, i).\label{fix-point_2}
\end{align}
The last equality follows by noting that $\zeta$ is an exponential random variable with rate $q_i$, independent of the processes $L^{0,b}$, $R^{0,b}$, and $U^{0,b}$.
\end{proof}
\begin{corollary}\label{Cor201}
The operator $T_b$ is a contraction on {$\cB$} with respect to the norm $\norm{\cdot}$. In particular,
for {$f \in \cB$}, we have
\begin{align}
V_{\pi^b}(x, i)= \lim_{n\uparrow \infty}T_b^n(f)(x, i), \qquad\text{$(x,i)\in [0, \infty)\times E$}, \label{8}
\end{align}
{where the convergence is in the $\|\cdot\|$-norm and $T_b^n(f) :=T_b(T_b^{n-1}(f))$ for $n>1$ with $T_b^1 :=T_b$.}
\end{corollary}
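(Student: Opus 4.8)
The plan is to realize $T_b$ as a contraction on the Banach space $(\cB,\|\cdot\|)$ and to identify its unique fixed point with $V_{\pi^b}$ via Proposition \ref{Prop203}; the convergence \eqref{8} then drops out of the contraction estimate iterated $n$ times. So first I would check that $T_b$ maps $\cB$ into itself. Given $f\in\cB$, write $T_bf(x,i)$ as the sum of a dividend--injection term and a term involving $\tilde f$. The dividend--injection term is, by Lemma 3.1 of \cite{PerYamYu2018} applied to $X^i$ with discount rate $\alpha_i$ and refraction level $b(i)$, an explicit combination of the scale functions of $X^i$ and $Y^i$; in particular it is continuous in $x$ and bounded in modulus by $\delta(i)/\alpha_i+\beta\,\bE^i_0\big[\int_{[0,\infty)}e^{-\alpha_i t}\diff R_i^{0,b(i)}(t)\big]$, which is finite by admissibility of $\pi^{b(i)}$ (cf.\ Remark \ref{bounded_v}) and monotone in $x$. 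The $\tilde f$-term has modulus at most $(q_i/\alpha_i)\|\tilde f\|\le\|\tilde f\|<\infty$ by Remark \ref{rem_tilde_f}, and it is continuous in $x$ by the resolvent density of the refracted--reflected process (Corollary 4.1 of \cite{PerYam2018}, already used in the proof of Proposition \ref{enp_aux}) provided $\tilde f(\cdot,i)$ is bounded and continuous; the latter holds because for fixed $y$ the integrand in \eqref{finite_tilde} is continuous in $x$ --- its two branches meet at $x=-y$, both equal to $f(0,j)$ --- and it is dominated by $\beta(|x|+|y|)+\|f\|$, which is $F_{ij}$-integrable by Assumption \ref{AAA2}. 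Hence $T_bf\in\cB$.

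Next I would establish the contraction bound. For $f,g\in\cB$ the dividend--injection terms cancel in the difference, leaving
\begin{align}
T_bf(x,i)-T_bg(x,i)=q_i\,\bE^i_x\Big[\int_0^\infty e^{-\alpha_i t}\big(\tilde f-\tilde g\big)\big(U_i^{0,b(i)}(t),i\big)\,\diff t\Big].
\end{align}
In $\tilde f-\tilde g$ the terms carrying $\beta$ cancel as well, and since $\sum_{j\neq i}q_{ij}/q_i=1$ and each surviving contribution is either $f(0,j)-g(0,j)$ or $f(x+y,j)-g(x+y,j)$, one obtains $\|\tilde f-\tilde g\|\le\|f-g\|$. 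Therefore $|T_bf(x,i)-T_bg(x,i)|\le q_i\|f-g\|\int_0^\infty e^{-\alpha_i t}\,\diff t=(q_i/\alpha_i)\|f-g\|$, and taking the supremum over $x\ge0$ and the maximum over $i\in E$ yields $\|T_bf-T_bg\|\le\rho\,\|f-g\|$ with $\rho:=\max_{i\in E}q_i/(q_i+r(i))<1$, since $r(i)>0$ for every $i\in E$. Thus $T_b$ is a contraction on $\cB$, so by the Banach fixed point theorem it has a unique fixed point there. By Proposition \ref{Prop203}, $V_{\pi^b}(\cdot,\cdot)$ is a fixed point of $T_b$ and lies in $\cB$ (bounded by Proposition \ref{Prop201}, continuous as recorded in the proof of Proposition \ref{Prop203}); hence it is \emph{the} fixed point. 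Finally, for any $f\in\cB$ all iterates $T_b^nf$ belong to $\cB$, and iterating the contraction estimate together with $T_b^nV_{\pi^b}=V_{\pi^b}$ gives $\|T_b^nf-V_{\pi^b}\|\le\rho^n\|f-V_{\pi^b}\|\to0$, which is precisely \eqref{8}.

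The main obstacle is not the contraction step, which is essentially a one-line computation once $\|\tilde f-\tilde g\|\le\|f-g\|$ is observed, but the verification that $T_b$ preserves $\cB$ --- in particular the continuity of $x\mapsto T_bf(x,i)$. This leans on having the explicit resolvent density of the refracted--reflected \lev process from \cite{PerYam2018} available for the $\tilde f$-term and on the continuity of $\tilde f$, and it is the one place where the moment Assumption \ref{AAA2} is genuinely invoked.
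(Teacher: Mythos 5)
Your proof is correct and follows essentially the same route as the paper: show $T_b$ maps $\cB$ into itself, derive the contraction estimate, and identify the unique fixed point with $V_{\pi^b}$ via Proposition \ref{Prop203}. Your explicit contraction constant $\max_{i\in E} q_i/(q_i+r(i))$ is exactly the paper's $\sup_{i\in E}\bE_{(0,i)}[e^{-r(i)\zeta}]$ since $\zeta\sim\mathrm{Exp}(q_i)$, and your extra care about the continuity of $x\mapsto T_bf(x,i)$ only makes the self-mapping step more complete than the paper's brief citation of Remarks \ref{bounded_v} and \ref{rem_tilde_f}.
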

\begin{proof}
It is easy to check that $\cB$ endowed with the norm $\norm{ \cdot}$ is a complete metric space. Using \eqref{opr_T_def} together with Remark \ref{bounded_v} and Remark \ref{rem_tilde_f}, we have that $T_b$ is a mapping from $\cB$ to itself.
Hence for $f, g \in \cB$, we have
\begin{align}
\norm{T_b f- T_b g}=&\sup_{x \geq 0, i\in E}\bE_{(x,i)}\Bigg[ e^{- {r(i)}{\zeta}}  {\sum_{j\in E,j\not=i}} {\frac{q_{ij}}{q_i}} {\int_{(-\infty,-U^{0,b}(\zeta-))}}\absol{f(0, j)-g(0, j)   )}\diff  F_{ij}(y)\notag\\
&+e^{- {r(i)}{\zeta}}   {\sum_{j\in E,j\not=i}}{\frac{q_{ij}}{q_i}} {\int_{[-U^{0,b}(\zeta-),0]}}\absol{f(U^{0,b}(\zeta-)+y, j)-g(U^{0,b}({\zeta})+y, j)   )}\diff  F_{ij}(y)\Bigg]\notag\\
\leq& \norm{f-g}  \bE_{(0,i)}\Bigg[ e^{- {r(i)}{\zeta}}   {\sum_{j\in E,j\not=i}} {\frac{q_{ij}}{q_i}} {\int_{(-\infty,0]}}\diff  F_{ij}(y)\Bigg]\notag\\
\leq& \sup_{ i\in E}\bE_{(0,i)}\left[ e^{- {r(i)}{\zeta}}\right] \norm{f-g}{<\norm{f-g}}.\label{7}
\end{align}

By \eqref{7} for any $f\in\cB$, $(T_b^n f)_{n\geq 1}$ is a Cauchy sequence.
Therefore, by the continuity of the mapping $T_b:\cB\mapsto\cB$, we have
\[
T^{\infty}_bf:=\lim_{n\uparrow\infty}T^n_bf=T_b(\lim_nT^{n}_bf)=T_b(T^{\infty}_b(f)),\qquad\text{$f\in\cB$.}
\]
The previous identity implies that $T^{\infty}_bf$ is a fixed point for the mapping $T_b$, hence by Proposition \ref{Prop203}, we obtain \eqref{8}.
\end{proof}
\subsection{Verification by fixed point arguments}
%For a function $f \in \red{\cB}$, $x\in [0, \infty)$ and $i\in E$, we define
%\begin{align}
%(U_\pi f)(x, i)=&\bE_{(x, i)} \bigg{[}\int_0^{{\zeta} } e^{- {r(i)} t}l^\pi(t) dt
%-\beta\int_{[0, {\zeta})} e^{-{r(i)} t}dR^\pi(t)
%+ e^{-{r(i)}{\zeta}}f(U^\pi({\zeta}), Z({\zeta}))\bigg{]}, \\
%(Uf)(x, i)=&\sup_{\pi \in \Pi} (U_\pi f)(x, i).
%\end{align}
Let us then define the following space of functions:
\begin{align*}
\cD:=\ \{ f \in \cB :&\text{$\tilde{f}(\cdot,i)$ is concave and satisfies that $\tilde{f}'_+ (0+, i) \leq \beta$ and $\tilde{f}'_+(\infty,i)\in[0,1]$  for $i\in E$}\}.
\end{align*}
%\green{[Here, $\tilde{f}^\prime$ denotes the right derivative of $\tilde{f}$ for $f \in \cD$.  ]}

The next result gives sufficient conditions for a function $f$ to belong to the class $\cD$.
\begin{proposition}\label{classD}
Consider $f\in\cB$ such that $f(\cdot,i)\in C^1((0,\infty))$, it is concave, non-decreasing, and satisfies that $f^\prime \leq \beta$ %and $f'(\infty,i)\in[0,1]$\green{[I think we can remove this condition. Since $f$ is bounded and concave, we have $f^\prime (\infty , i)=0$.]}%\blue{[OK, maybe say that "it is concave, non-decreasing, and satisfies that $f^\prime (0+, i) \leq \beta(i)$"?]}\green{[Yes, I agree with you.]}
for $i\in E$. Then $f\in\cD$.
\end{proposition}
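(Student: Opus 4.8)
The plan is to reduce the statement to elementary one-dimensional facts about concave functions, applied regime by regime. Fix $i\in E$. For each $j\in E$ with $j\neq i$ and each $a\geq 0$, introduce the auxiliary function $g_{a,j}:[0,\infty)\to\R$ defined by $g_{a,j}(x):=\beta(x-a)+f(0,j)$ for $0\leq x<a$ and $g_{a,j}(x):=f(x-a,j)$ for $x\geq a$. Recalling the definition \eqref{finite_tilde} of $\tilde f$, one can write
\begin{align}
\tilde f(x,i)=\sum_{j\in E,\,j\neq i}\frac{q_{ij}}{q_i}\,\phi_{ij}(x),\qquad \phi_{ij}(x):=\int_{(-\infty,0)}g_{-y,j}(x)\,\diff F_{ij}(y),
\end{align}
so that, since $E$ is finite and $\sum_{j\neq i}q_{ij}/q_i=1$, it suffices to transfer the desired properties successively from $g_{a,j}$ to $\phi_{ij}$ and then to $\tilde f(\cdot,i)$.

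First I would check that for every $a\geq 0$ the function $g_{a,j}$ is concave, non-decreasing, bounded above by $\|f\|$ on $[0,\infty)$, and satisfies $(g_{a,j})'_+(0+)\leq\beta$. On $[0,a)$ it is affine with slope $\beta$; on $[a,\infty)$ it coincides with $x\mapsto f(x-a,j)$, which by hypothesis is concave, non-decreasing, and has right derivative at most $\beta$; the two pieces agree at $x=a$, and the slope does not jump up there precisely because $f'_+(0+,j)\leq\beta$. Hence $g_{a,j}$ is concave and non-decreasing on all of $[0,\infty)$; when $a=0$ it reduces to $f(\cdot,j)$, so $(g_{a,j})'_+(0+)\leq\beta$ in every case, while the bound $g_{a,j}\leq\|f\|$ is immediate from $f$ being bounded and non-decreasing.

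Next I would pass these properties through the integral. Concavity and monotonicity of $\phi_{ij}$ follow by integrating the corresponding pointwise inequalities for $g_{-y,j}$ against $\diff F_{ij}(y)$; the integrals are finite since $g_{-y,j}(x)\leq\|f\|$ and $g_{-y,j}(x)\geq -\beta(-y)-\|f\|$, with $\int_{(-\infty,0)}(-y)\,\diff F_{ij}(y)=\E[-J_{ij}]<\infty$ by Assumption \ref{AAA2}. For the slope at zero, concavity of each $g_{-y,j}$ gives $(g_{-y,j}(x)-g_{-y,j}(0))/x\leq (g_{-y,j})'_+(0+)\leq\beta$ for all $x>0$; integrating this and letting $x\downarrow 0$ yields $(\phi_{ij})'_+(0+)\leq\beta$, hence $\tilde f'_+(0+,i)\leq\beta$ after summing. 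For the slope at infinity, Remark \ref{rem_tilde_f} gives $\tilde f\in\cB$, so $\tilde f(\cdot,i)$ is bounded; being also concave and non-decreasing, it must have $\tilde f'_+(x,i)\downarrow 0$ as $x\to\infty$, since otherwise $\tilde f'_+(\cdot,i)\geq c>0$ eventually would force $\tilde f(x,i)\to\infty$; thus $\tilde f'_+(\infty,i)=0\in[0,1]$. Together with $\tilde f\in\cB$ this gives $f\in\cD$.

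I expect the only genuinely delicate points to be the behaviour of $g_{a,j}$ at the kink $x=a$, which is exactly what the assumption $f'\leq\beta$ is there to control, and the transfer of the one-sided derivative bounds through the integral defining $\phi_{ij}$; the latter I would deliberately handle via the monotone-difference-quotient characterization of concavity rather than by differentiating under the integral sign, so that the integrability supplied by Assumption \ref{AAA2} is the only analytic input needed.
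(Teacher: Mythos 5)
Your proof is correct, and it takes a genuinely different (and in fact more elementary) route than the paper's. The paper first rewrites $\tilde f(x,i)$ via integration by parts and then differentiates under the integral sign, arriving at the explicit expression $\tilde f'(x,i)=\sum_{j\neq i}\frac{q_{ij}}{q_i}\bigl[\beta+\int_{-x}^0\bigl(f'(x+y,j)-\beta\bigr)F'_{ij}(y)\,dy\bigr]$; concavity of $\tilde f(\cdot,i)$ is then read off as monotonicity of this expression in $x$, and the boundary slopes follow by letting $x\downarrow0$ and $x\to\infty$. That computation uses the $C^1$ hypothesis on $f$ and tacitly treats each $F_{ij}$ as if it had a density $F'_{ij}$. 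You instead freeze $y$, observe that the integrand $x\mapsto g_{-y,j}(x)$ is itself concave, non-decreasing, bounded above by $\|f\|$ and has right slope at $0+$ at most $\beta$ (the condition $f'_+(0+,j)\le\beta$ being precisely what rules out an upward kink at $x=-y$), and then push these properties through the integral by pointwise inequalities and monotone difference quotients rather than by differentiation. This buys two things: it requires no regularity of the laws $F_{ij}$, and it never actually uses the $C^1$ hypothesis on $f$ — concavity, monotonicity, boundedness and the slope bound at $0+$ suffice, with $\E[-J_{ij}]<\infty$ from Assumption \ref{AAA2} as the only integrability input, exactly as you say. The treatment of the slope at infinity is the same in substance in both proofs (a bounded, non-decreasing concave function has derivative tending to $0$). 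All the individual steps you flag as delicate — continuity and non-increase of slopes across the kink of $g_{a,j}$, and the passage of the one-sided derivative bound through the integral via difference quotients — check out.
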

\begin{proof}
Using \eqref{finite_tilde} and integration by parts we have
\begin{align*}
\tilde{f}(x,i)=&\sum_{j\in E,j\not=i}\frac{q_{ij}}{q_i}\bigg(\int_{[-x,0)}\Big[f(x+y, j)-(\beta(x+y)+f(0,j))\Big]\diff  F_{ij}(y)\\
&+\beta(x+\E(J_{ij}))+f(0,j)\bigg)\\
=&\sum_{j\in E,j\not=i}\frac{q_{ij}}{q_i}\left(f(x, j)+\beta\E(J_{ij})-\int_{-x}^0\Big[f^\prime(x+y, j)-\beta)\Big] F_{ij}(y)dy\right),\ \text{$(x,i)\in[0,\infty)\times E$.}
\end{align*}
Using the condition that $f\in C^1((0,\infty))$, we obtain by differentiating the above expression %that for $i\in E$ and for $x\geq0$
\begin{align*}
\tilde{f}^\prime(x,i)={\sum_{j\in E,j\not=i}}{\frac{q_{ij}}{q_i }}\left[\beta+  {\int_{-x}^0\Big(f^{\prime}(x+y,j)-\beta\Big)F'_{ij}(y)\diff y}\right],\qquad\text{$i\in E$ and $x\geq0$.}
\end{align*}
Clearly, the fact that $f\in C^1((0,\infty))$ implies that $\tilde{f}\in C^1((0,\infty))$ as well. On the other hand, by the concavity of $f(\cdot,i)$ and the fact that $f^\prime(0+,j)\leq \beta$, we obtain that $\tilde{f}^\prime(x,i)$ is non-increasing and hence $\tilde{f}(\cdot,i)$ is concave as well.
%\blue{[Do you believe that this is enough to show the convexity of $\tilde{f}$, or shall we provide more details?]}. \red{[I think if $\tilde{f}(x,i)$ is only differentiable a.e., we can not use the fact that $\tilde{f}'(x,i)$ is decreasing a.e. to conclude the concavity. Think about a parabola opening down, which has a break (jump) at the maximum point. This is not a concave function anymore. We will only have that $\tilde{f}(\cdot, i)$ is piecewise concave. Can we assume that $\mathcal{B}$ and $\mathcal{C}$ contain $C^1$ functions?]}\blue{I was thinking about this. We cannot change $\cB$ because $T_bf$ may not be $\cC^1$ in the bounded variation case. But for $\cC$ given that we will work with $\Theta f$ and this is $\cC^1$ (because of the smooth fit condition) it may be ok. Can you check?]} \red{[Agree. For the set $\cD$, we can enforce that $f\in \cC^1$ so that $\tilde{f}$ is also in $\cC^1$. In this case, we can get the concavity by the argument above.]}
Moreover, it is clear that $\tilde{f}^\prime(0+,i)=\sum_{j\in E,j\not=i}\frac{q_{ij}}{q_i}\beta\leq\beta$. On the other hand, by dominated convergence theorem,
\begin{align*}
\lim_{x\to\infty}\tilde{f}^\prime(x,i)=\sum_{j\in E,j\not=i}\frac{q_{ij}}{q_i}f^\prime(\infty,j) =0\leq 1,
\end{align*}
where we used the fact that $f^\prime(\infty,j)=0$ for all $j$ as $f(\cdot,j)$ is bounded, concave and nondecreasing. Therefore, by the previous arguments together with Remark \ref{rem_tilde_f}, we have that $f\in\cD$.
\end{proof}
For $f\in\cD$ and $(x,i)\in[0,\infty)\times E$, let us define another auxiliary operator:
\begin{align}
\Theta f(x, i)&:=\sup_{\pi \in \Pi} \bE_{(x, i)} \bigg{[}\int_0^{{\zeta} } e^{- {r(i)} t}l^\pi(t) dt
-\beta\int_{[0, {\zeta})} e^{-{r(i)} t}dR^\pi(t)
+ e^{-{r(i)}{\zeta}}\tilde{f}(U^\pi({\zeta-}), H({\zeta-}))\bigg{]}\notag\\
%&=\sup_{\pi \in \Pi}\bE_{x} \Bigg[\int_0^\infty e^{-\alpha_i t}\diff L_i^{\pi}(t) -\beta(i)\int_{[0, \infty)} e^{-\alpha_i t}dR_i^{\pi}(t)+\int_0^\infty e^{-\alpha_i t} \tilde{f}(U^{\pi)}_i(t)-y, j)dt\Bigg].\\
&=\sup_{\pi \in \Pi}\bE^i_x  \Bigg[\int_0^\infty e^{-\alpha_i t}\diff L_i^{\pi}(t) -\beta\int_{[0, \infty)} e^{-\alpha_i t}dR_i^{\pi}(t)+q_i\int_0^\infty e^{-\alpha_i t} \tilde{f}(U^{\pi}_i(t), i)dt\Bigg]\label{fun_U}.
\end{align}
\begin{remark}\label{rem_U_T}
Assume $f\in\cD$.
For fixed $i\in E$, Theorem \ref{Prop102} %\red{and Remark \ref{Rem203}} \blue{[I guess we can remove this, if $\beta$ is constant]?} guarantee %guarantees 
guarantees the existence of $b^f(i)\geq0$ such that the second equality of \eqref{fun_U} is achieved by the NPV of a refracted-reflected strategy at the barrier $b^f(i)$. Therefore, if we define $b^{f}=(b^f(i);i\in E)$, it follows that $\Theta f(x, i)=T_{b^f}f(x,i)$ for $(x,i)\in[0,\infty)\times E$.

The previous argument, together with Lemma \ref{thm:smoothfit} and the proof of Lemma \ref{lemma_slope}, implies that for $f\in\cD$ and $i\in E$, $\Theta f(\cdot, i)\in C^{1}((0,\infty))$, it is concave, $(\Theta f)'(0+, i)\leq \beta$, and $(\Theta f)'(\infty, i)\leq 1$. Hence, using Proposition \ref{classD}, we conclude that $\Theta f\in\cD$.
%\blue{[I think it is important to note that $f\in\cD$ implies that $\Theta f\in\cD$. Because I am not sure it is easy to prove that the value function $V\in\mathcal{C}^1$ with the arguments in Proposition \ref{Prop301}. So with this argument we only need to prove that $V\in\cD$ in order to get Theorem \ref{Thm201}.]}
%\green{[Sorry, I made a mistake. We should not change.]}
\end{remark}

\begin{proposition}\label{Prop301}
Let $v^-_0, v^+_0 \in \cD$ and we define $v^-_n\ :=\ \Theta v^-_{n-1}$ and $v^+_n\ :=\ \Theta v^+_{n-1}$ for $n\in\bN$, respectively. If $v^-_0 \leq V \leq v_0^+$, then we have $v^-_n\leq V \leq v^+_n$ for all $n\geq 1$, and
\begin{align}
V(x,i) =\lim_{n\uparrow \infty} v^-_n (x, i)=\lim_{n\uparrow \infty} v^+_n (x, i),~~~~~~
\text{ for $(x,i)\in[0,\infty)\times E$}, \label{10}
\end{align}
where the convergence is in the $\|\cdot\|$-norm. In particular $V\in\cD$.
\end{proposition}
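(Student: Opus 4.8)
The plan is to realize $V$ as the unique fixed point of the operator $\Theta$ on the complete metric space $(\cD,\norm{\cdot})$ and then to squeeze $V$ between the two monotone iteration sequences. First I would record three properties of $\Theta$. \emph{Monotonicity}: the map $f\mapsto\tilde f$ in \eqref{finite_tilde} is order preserving, since the term $\beta(x+y)$ does not depend on $f$; hence $f\le g$ pointwise implies $\tilde f\le\tilde g$, and then $\Theta f\le\Theta g$ from \eqref{fun_U}. \emph{Contractivity}: the computation in \eqref{7} carries over essentially verbatim (using the bound $\norm{\tilde f-\tilde g}\le\norm{f-g}$, which is read off \eqref{finite_tilde} exactly as there) to give $\norm{\Theta f-\Theta g}\le c\,\norm{f-g}$ with $c:=\sup_{i\in E}\bE_{(0,i)}\!\left[e^{-r(i)\zeta}\right]=\sup_{i\in E}\frac{q_i}{q_i+r(i)}<1$. \emph{Invariance}: by Remark \ref{rem_U_T}, $\Theta$ maps $\cD$ into $\cD$. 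Consequently $\Theta$ admits a unique fixed point $f^\ast\in\cD$, and since $v_0^\pm\in\cD$ the iterates $v_n^\pm=\Theta^n v_0^\pm$ remain in $\cD$ and converge to $f^\ast$ in $\norm{\cdot}$.

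The core of the argument is the identity $\Theta V=V$. By Proposition \ref{Prop201} and the continuity of $V$ (Lemma \ref{Lipschitz_con_vf}) we have $V\in\cB$, so $\Theta V$ is well defined via \eqref{fun_U}. Starting from the dynamic programming principle \eqref{4}, I would condition the terminal term $e^{-\Lambda(\zeta)}V(U^\pi(\zeta),H(\zeta))$ on $\cF_{\zeta-}$, on the post-switch state $H(\zeta)=j$ (which has probability $q_{ij}/q_i$), and on the switching jump $J_{ij}=y$; write $u:=U^\pi(\zeta-)$. The strategy $\pi$ is free to perform a lump capital injection $\Delta R^\pi(\zeta)\ge(-(u+y))^+$ at $\zeta$, and, using the elementary inequality $V(x,i)-V(x',i)\le\beta(x-x')$ for $0\le x'\le x$ --- valid because ``inject $x-x'$ at time $0$ and then follow a near-optimal strategy from $(x,i)$'' is admissible from $(x',i)$ --- the optimal such injection is the minimal one. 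Hence the contribution of the switch epoch, after this local optimization, equals $e^{-r(i)\zeta}\big[V(u+y,j)1_{\{u+y\ge0\}}+\big(\beta(u+y)+V(0,j)\big)1_{\{u+y<0\}}\big]$; averaging over $(j,y)$ turns it into $e^{-r(i)\zeta}\tilde V(u,i)$ by \eqref{finite_tilde}. This is precisely the conditioning computation from the proof of Proposition \ref{Prop203}, now performed under the supremum over $\pi$, and it recasts \eqref{4} as $V(x,i)=\Theta V(x,i)$ for all $(x,i)\in[0,\infty)\times E$.

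Granting $\Theta V=V$, the sandwich $v_n^-\le V\le v_n^+$ follows by induction on $n$: it holds for $n=0$ by hypothesis, and monotonicity together with $\Theta V=V$ gives $v_n^-=\Theta v_{n-1}^-\le\Theta V=V\le\Theta v_{n-1}^+=v_n^+$. Since $v_n^\pm\to f^\ast$ in $\norm{\cdot}$, hence pointwise, the squeeze forces $f^\ast(x,i)=\lim_n v_n^-(x,i)\le V(x,i)\le\lim_n v_n^+(x,i)=f^\ast(x,i)$, i.e.\ $V=f^\ast$. In particular $V\in\cD$ and $V=\lim_{n\uparrow\infty}v_n^-=\lim_{n\uparrow\infty}v_n^+$ in $\norm{\cdot}$, which is \eqref{10}.

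The step I expect to be the main obstacle is the identity $\Theta V=V$: one must recast the global dynamic programming principle of Proposition \ref{Prop101} exactly as the local fixed-point equation for $\Theta$, carefully accounting for the interplay at the regime-switch time $\zeta$ between the downward jump $J_{iH(\zeta)}$ and the forced capital injection (including the case where $F_{iH(\zeta)}$ has an atom at $0$), and justifying that the supremum may be pushed through the conditioning. Once this is in place, the remaining ingredients --- monotonicity of $\Theta$, the Banach fixed point theorem on $(\cD,\norm{\cdot})$, and the squeeze --- are routine.
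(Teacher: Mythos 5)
Your proposal is correct and follows essentially the same route as the paper: monotonicity of $\Theta$ plus the dynamic programming principle give the sandwich $v_n^-\le V\le v_n^+$, the contraction estimate inherited from $T_b$ (via $\Theta f=T_{b^f}f$) gives convergence to the unique fixed point, and that fixed point is identified with $V$ through Proposition \ref{Prop101}, with membership in $\cD$ obtained in the limit. Your more careful justification of $\Theta V=V$ --- conditioning at $\zeta$ and using the Lipschitz bound $V(x,i)-V(x',i)\le\beta(x-x')$ to argue the lump injection at the switch is minimal --- spells out a step the paper's proof leaves implicit, but it is the same argument, not a different one.
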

\begin{proof}
(i) From the definition of $\Theta$ given in \eqref{fun_U}, we have that if $v_{n-1}^- \leq V \leq v_{n-1}^-$, then
\[v_n^-=\Theta v_{n-1}^- \leq \Theta V \leq \Theta v_{n-1}^+=v_n^+.\]
This, together with \eqref{4} implies that $v_n^- \leq V \leq v_n^+$. By induction, we obtain the first claim.
\ \\
(ii) By Remark \ref{rem_U_T}, we have that for any $f\in\cD$, there exists $b^f\in\cE$ such that $\Theta f=\sup_{b\in\cE}T_{b}f=T_{b^f}f$. Therefore, by the proof of Corollary \ref{Cor201}, it follows that for $f,g\in\cD$,
\begin{align*}
\norm{\Theta f-\Theta g}\leq \sup_{b \in \cE}\norm{T_b f -T_b g}\leq  \sup_{i \in E}\bE_{(0, i)}\left[e^{-q(i)\zeta} \right]\norm{f-g}.
\end{align*}
Hence the sequences $(v^+_{n})_{n\geq 1}$ and $(v^-_{n})_{n\geq 1}$ should converge to the unique fixed point of the mapping $\Theta$ which, by Proposition \ref{Prop101}, is given by $V$. Moreover, the existence of functions $v_0^+,v_0^-\in\cD$ is already given by Proposition \ref{Prop201}. In view of Remark \ref{rem_U_T}, it follows that functions $v_n^{\pm}$ belong to $\cD$.
On the other hand, using \eqref{finite_tilde} together with dominated convergence theorem, we note that for $(x,i)\in[0,\infty)\times E$
\begin{align}\label{lim_V}	
\tilde{V}(x,i)&=\lim_{n\to\infty}\sum_{j\in E,j\not=i}\frac{q_{ij}}{q_i}\int_{(-\infty,0)}\Big[(\beta(x+y)+v_n^{\pm}(0,j))1_{\{-y>x\}}+v_n^{\pm}(x+y, j)1_{\{-y\leq x\}}\Big]\diff  F_{ij}(y)\notag\\
&=\lim_{n\to\infty}\tilde{v}^{\pm}_n(x,i). 	
\end{align}
Therefore using the fact that the functions $v_n^{\pm}$ belong to $\cD$ together with \eqref{lim_V},
%Using the fact that $V$ is the pointwise limit of functions belonging to $\cD$
we derive that $V$ belongs to $\cD$ as well, which completes the proof. %\blue{[The convexity is quite clear, but for the other conditions to belonging to $\mathcal{C}$ do you think more details are needed given that it deals with the derivative, or is it clear from the convergence of convex functions?]} \red{[Here, I think it is fine as we have pointwise convergence to $V$. If we write the derivatives at $0$ and $\infty$ by their limit definition, we can show that conditions hold by exchange limit of $n$ and limit of $x$.]}
\end{proof}

The previous result gives an iterative construction of the value function as follows: Initializing by $n=0$ and $v=v_0$ for some $v_0\in\cD$, we can operate the iteration scheme by
\begin{itemize}
\item[(i)] Find $b^v=(b^v(i);i\in E)$ as in Remark \ref{rem_U_T};
\item[(ii)] Set $T_{b^v}v\to v$, $n+1\to n$, and $v\to v_n$ and return to step (i).
\end{itemize}

\subsection{Proof of Theorem \ref{Thm201}}
Proposition \ref{Prop301} gives that $V\in\cD$. %\red{[To use Proposition \ref{Prop102} (Remark 2.2) for $V$, I think it is necessary to prove the differentiability of $V$. I think we have not proved the differentiability of $V$ yet.]}
Hence Proposition \ref{Prop101}, together with Remark \ref{rem_U_T}, implies that there exists $b^*\in\cE$ such that $V(x,i)=\Theta V(x,i)=T_{b^*}V(x,i)$ for $(x,i)\in[0,\infty)\times E$. %\blue{
%Then, using Proposition \ref{Prop203} and Corollary \ref{Cor201} gives
%\begin{align}
%V(x,i)%&=\bE_{(x, i)} \bigg{[}\int_0^{\zeta } e^{- r(i) t}l^{0,{b^*}}(t) dt
%-\beta(i)\int_{[0, \zeta)} e^{-r(i) t}dR^{0,b^*}(t)
%+e^{-r(i)\zeta}V_{\pi^{b^*}}(U^{0, b^*}(\zeta), H(\zeta))\bigg{]}\\
%&=\mathbb{E}_{(x,i)} \left[ \int_0^\infty e^{-\int_0^tr(H(s))ds} \ell^{0,b^*}(s)  \diff t -  \int_{[0, %\infty)} e^{-\int_0^tr(H(s))ds} \beta(H(t))\diff R^{0,b^*}(t)\right]\\
%=V_{\pi^{b^*}}(x,i),\qquad\text{for $(x,i)\in[0,\infty)\times E$.}
%\end{align}%}
%\green{[I think $V(x, i)=V_{\pi_{b^\ast}}(x, i)$ comes from $V(x,i)=T_{b^*}V(x,i)$ and Corollary \ref{Cor201}. ]}
%\red{[Is Proposition \ref{Prop203} not necessary? Is it easier to change as follows?]\\
This further yields that
\begin{align*}
V(x, i)=\lim_{n\uparrow \infty}T_{b^\ast}^n V(x, i), \qquad\text{for $(x,i)\in[0,\infty)\times E$.} 
\end{align*}
Finally, an application of Corollary \ref{Cor201} concludes that
\begin{align*}
V(x,i)
=V_{\pi^{b^*}}(x,i),\qquad\text{for $(x,i)\in[0,\infty)\times E$.}
\end{align*}%}
\qed

\appendix
\section{Proof of Auxiliary Results}\label{secA}
\subsection{Proof of technical lemmas}
In this section, we proceed to prove Proposition \ref{Prop101}. Before presenting the proof, we first need the preparation of the two technical lemmas.
%\blue{[Kei: made this a lemma, because we have to make minor modifications to Lemma \ref{LemA01}, so to be safe I think we should provide a proof, even it is quite close to that in \cite{JiaPis2012}.]}\red{[JL: Thanks!]}
\begin{lemma}\label{Lipschitz_con_vf}
For $y\geq x\geq 0$ and $i\in E$, we have
\begin{align}\label{ineq_q}
0\leq V(y,i)-V(x,i)\leq \beta (y-x).
\end{align}
In particular, it follows that, for fixed $i\in E$, $V(\cdot,i)$ is non-decreasing and Lipschitz-continuous.
\end{lemma}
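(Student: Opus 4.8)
The plan is to prove both inequalities in \eqref{ineq_q} by a coupling/strategy-modification argument, exploiting the fact that the driving process $X$ has only one-directional (downward) jumps and that dividends are capped at rate $\delta(i)$ while capital injection costs $\beta>1$ per unit. For the lower bound $V(y,i)\geq V(x,i)$, I would fix $\varepsilon>0$ and an admissible strategy $\pi$ for the initial state $(x,i)$ that is $\varepsilon$-optimal, and then construct an admissible strategy $\pi'$ for initial state $(y,i)$ as follows: run $\pi$ verbatim, but immediately pay out the extra initial surplus $y-x$ as an extra lump — since dividends must be absolutely continuous this cannot be done instantaneously, so instead I would keep the surplus path of $\pi'$ shifted upward by $y-x$ (i.e.\ $U^{\pi'}=U^\pi+(y-x)$ until the shifted process would be absorbed) and use the cushion only to reduce required capital injections. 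More cleanly: define $\pi'$ to have the same dividend rate as $\pi$ but injection $R^{\pi'}(t)=(R^\pi(t)-(y-x))^+$, so that $U^{\pi'}(t)=X(t)-L^\pi(t)+R^{\pi'}(t)\geq U^\pi(t)\geq 0$ stays admissible; since we inject no more than under $\pi$ and pay the same dividends, $V_{\pi'}(y,i)\geq V_{\pi}(x,i)$. Letting $\varepsilon\downarrow 0$ gives $V(y,i)\geq V(x,i)$, and in particular $V(\cdot,i)$ is non-decreasing.

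For the upper bound $V(y,i)\leq V(x,i)+\beta(y-x)$, I would argue in the opposite direction: take $\pi$ an $\varepsilon$-optimal admissible strategy for $(y,i)$, and build a strategy $\pi'$ for $(x,i)$ by injecting the deficit $y-x$ at time $0$ (a lump injection is allowed for $R^\pi$, which need not be absolutely continuous) and then following $\pi$. This costs exactly $\beta(y-x)$ extra in the capital-injection term (discounted by $e^{-\int_0^0 r\,ds}=1$) and leaves all subsequent dynamics, dividends, and further injections identical to those of $\pi$ started from $y$. Hence $V(x,i)\geq V_{\pi'}(x,i)=V_{\pi}(y,i)-\beta(y-x)\geq V(y,i)-\varepsilon-\beta(y-x)$, and letting $\varepsilon\downarrow 0$ yields the claim. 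Combining the two bounds gives $0\leq V(y,i)-V(x,i)\leq\beta(y-x)$, from which Lipschitz continuity of $V(\cdot,i)$ (with constant $\beta$) is immediate.

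The main technical point to be careful about is the admissibility of the modified strategies: one must check that $\pi'$ in each case is adapted, that its dividend density still lies in $[0,\delta(H(t))]$ (which is automatic since we never change the dividend rate), and that the integrability condition \eqref{adm_ci} on the discounted capital injection is preserved. In the lower-bound construction $R^{\pi'}\leq R^\pi$ so \eqref{adm_ci} holds trivially; in the upper-bound construction we add only a single deterministic lump at $t=0$, which adds a finite constant to the injection integral, so \eqref{adm_ci} is again preserved. The only mild subtlety is that in the lower-bound argument the surplus process $U^{\pi'}$ should be verified to coincide pathwise with $X-L^\pi+(R^\pi-(y-x))^+$ and to remain non-negative — this follows since $U^\pi\geq 0$ forces $U^{\pi'}=U^\pi+\min(R^\pi,\,y-x)\geq U^\pi\geq0$. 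I do not expect any genuine obstacle here; the argument is a standard ``shift the surplus / pay the gap'' comparison, and the one-sided jump structure of $X$ plays no essential role beyond guaranteeing that the constructed controlled processes are well defined.
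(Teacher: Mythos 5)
Your proposal is correct and follows essentially the same route as the paper: for monotonicity, keep the dividend rate and replace the injection by $(R^{\pi}(t)-(y-x))\vee 0$; for the upper bound, inject the lump $y-x$ at time $0$ and then follow an $\varepsilon$-optimal strategy from $y$, costing exactly $\beta(y-x)$. (Only a cosmetic slip: the surplus gap is $U^{\pi'}-U^{\pi}=\bigl((y-x)-R^{\pi}\bigr)^{+}$, not $\min(R^{\pi},y-x)$, but either expression is non-negative so admissibility is unaffected.)
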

\begin{proof}
	(i) First, we will prove that $V(\cdot,i)$ is non-decreasing. Let $\pi^{(\varepsilon)}=\{\ell^{\pi^{(\varepsilon)}},R^{\pi^{(\varepsilon)}}\}$ %\blue{[I am no sure this notation can be confused with $\pi^y$ below?]} \red{[How about $\pi^{(\varepsilon)}$]?}
	be an $\varepsilon$-optimal strategy for $(U^{\pi^{(\varepsilon)}}(0),H(0))=(x,i)$. For $0\leq x<y$, we define the following strategy $\pi^y=\{\ell^{\pi^y},R^{\pi^y}\}$ given for $t\geq0$,
	\begin{align*}
	\ell^{\pi^y}(t):=\ell^{\pi^{(\varepsilon)}}(t)\qquad\text{and}\qquad R^{\pi^y}(t):=(R^{\pi^{(\varepsilon)}}(t)-y+x)\vee 0.
	\end{align*}
	The previous definition guarantees that $\pi^y$ is an admissible strategy for $(U^{\pi^y}(0),H(0))=(y,i)$. Note that
	\begin{align*}
	V(y,i)-V(x,i)&\geq V_{\pi^y}(y,i)-V_{\pi^{(\varepsilon)}}(x,i)-\varepsilon\\
	&=\mathbb{E}_{(x,i)} \left[  \int_{[0, \infty)} e^{-\int_0^tr(H(s))ds} \beta 1_{\{R^{\pi^{(\varepsilon)}}(t)\leq y-x\}}\diff R^{\pi^{(\varepsilon)}}(t)\right]-\varepsilon.
	\end{align*}
	Therefore, by taking $\varepsilon\to 0$, we deduce that
	\[
	V(y,i)- V(x,i)\geq0,\qquad\text{$0\leq x<y$.}
	\]
	(ii) Next, we prove the upper bound in \eqref{ineq_q}. Let $\pi^{(y,i)}=\{\ell^{(y,i)},R^{(y,i)}\}$ be an $\varepsilon$-optimal strategy for $(U^{\pi^{(y,i)}}(0),H(0))=(y,i)$. For $y\geq x\geq 0$, we define the strategy $\pi^{(x,y,i)}=\{\ell^{(x,y,i)},R^{(x,y,i)}\}$ given for $t\geq0$ by
	\begin{align*}
	\ell^{(x,y,i)}(t):=\ell^{(y,i)}(t)1_{\{t>0\}}\quad\text{and }\quad R^{(x,y,i)}(t):=(y-x)1_{\{t=0\}}+R^{(y,i)}(t)1_{\{t>0\}}.
	\end{align*}
	Hence
	\begin{align*}
	V(x,i)\geq V_{\pi^{(x,y,i)}}(x,i)\geq V_{\pi^{(y,i)}}(y,i)-\beta(y-x)\geq -\beta(y-x)+V(y,i)-\varepsilon.
	\end{align*}
	Therefore, by taking $\varepsilon\to 0$, we obtain that
	\[
	V(y,i)-V(x,i)\leq \beta(x-y),\qquad\text{$0\leq x\leq y$.}
	\]
\end{proof}
The proof of the next result follows closely that of \cite[Proposition 2.2]{JiaPis2012} with minor modifications required for the setting of absolutely continuous dividend strategies and capital injections.
\begin{lemma}\label{LemA01}
For all $\epsilon>0$ and $M>0$, there exists a strategy $\tilde{\pi}$ such that
\begin{align}\label{V_cont_mod}
\max_{i\in E} \sup_{x \in [0, M]}\rbra{  V (x, i) - V_{\tilde{\pi} } (x, i) }< \epsilon.
\end{align}
\end{lemma}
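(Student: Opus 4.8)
The plan is to construct $\tilde\pi$ by discretising the admissible range of initial surpluses $[0,M]$ into finitely many cells and, on each cell, forcing the surplus up to the right endpoint of that cell by an instantaneous lump capital injection before running a fixed near-optimal strategy anchored at that endpoint. The Lipschitz/monotonicity estimate of Lemma \ref{Lipschitz_con_vf} will then control the loss caused by this ``snapping''. Working with a \emph{finite} grid, rather than invoking the definition of $V$ as a supremum separately at every $x$, is what makes $\tilde\pi$ a bona fide strategy: an arbitrary $\epsilon$-optimal choice at each $x$ need not depend measurably on the starting point, whereas with finitely many anchor points this obstruction disappears.

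In detail, I would fix $\epsilon>0$ and $M>0$, choose $N\in\bN$ with $\beta M/N<\epsilon/2$, and set $x_k:=kM/N$ for $k=0,\dots,N$. For each $k\in\{1,\dots,N\}$ and each $i\in E$, the definition of $V$ provides an admissible strategy $\pi^{k,i}$ for the initial state $(x_k,i)$ with $V_{\pi^{k,i}}(x_k,i)>V(x_k,i)-\epsilon/2$. I would then let $\tilde\pi$ prescribe: for an initial state $(x,i)$ with $x\in[x_{k-1},x_k)$ (assigning $x=M$ to the cell $k=N$), inject the amount $x_k-x$ at time $0$ and apply $\pi^{k,i}$ thereafter; concretely $L^{\tilde\pi}=L^{\pi^{k,i}}$ and $R^{\tilde\pi}(t)=(x_k-x)1_{\{t\geq 0\}}+R^{\pi^{k,i}}(t)$. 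The admissibility of $\tilde\pi$ is then checked routinely: the dividend density is unchanged, so the absolute-continuity and rate constraints still hold; adding a fixed finite lump to $R^{\pi^{k,i}}$ preserves the integrability condition \eqref{adm_ci}; and by the spatial homogeneity of the spectrally negative Markov additive process the surplus under $\tilde\pi$ started from $x$ coincides in law with the (nonnegative) surplus under $\pi^{k,i}$ started from $x_k$, so the state constraint $U^{\tilde\pi}\geq 0$ holds.

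For the final estimate I would use that, the lump injection being made at time $0$ and hence undiscounted, the coupling above yields the identity $V_{\tilde\pi}(x,i)=V_{\pi^{k,i}}(x_k,i)-\beta(x_k-x)$. Combining this with the choice of $\pi^{k,i}$, the bound $\beta(x_k-x)\leq\beta M/N<\epsilon/2$, and the monotonicity $V(x_k,i)\geq V(x,i)$ from Lemma \ref{Lipschitz_con_vf} (valid since $x\leq x_k$), one obtains
\begin{align}
V_{\tilde\pi}(x,i)=V_{\pi^{k,i}}(x_k,i)-\beta(x_k-x)>V(x_k,i)-\epsilon\geq V(x,i)-\epsilon,
\end{align}
uniformly over $x\in[0,M]$ and $i\in E$, which is precisely \eqref{V_cont_mod}. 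The one point that genuinely needs attention — the modification relative to \cite[Proposition 2.2]{JiaPis2012} — is that here dividends are forced to be absolutely continuous, so one cannot ``snap down'' to a grid point by a lump dividend payment; the remedy is to snap \emph{up} by a lump capital injection, whereupon the quantity that must be kept below $\epsilon/2$ is exactly the injection cost $\beta(x_k-x)$. Verifying the translation identity for $V_{\tilde\pi}$ and the admissibility of $\tilde\pi$ under this upward shift is the only real work; the remainder is routine.
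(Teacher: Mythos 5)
Your proposal is correct and follows essentially the same route as the paper: discretize $[0,M]$ into a finite grid, snap the initial surplus up to the nearest grid point by a lump capital injection at time zero, then run an $\epsilon$-optimal strategy anchored at that grid point, with the error controlled by Lemma \ref{Lipschitz_con_vf}. Your version is marginally cleaner in that you exploit the monotonicity $V(x_k,i)\geq V(x,i)$ so the grid-approximation term drops out with the right sign, whereas the paper bounds $|V(x,i)-V(x_{j^*},i)|$ via the Lipschitz estimate and collects a $(2+\beta)\varepsilon$ bound to be relabelled.
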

\begin{proof}
Let us take a partition of the interval $[0,M]$ as $\left(x_j:=\frac{jM}{N},j=0,\dots,N\right)$, where $N>M\varepsilon^{-1}$ that satisfies
\begin{align*}
\max_{i\in E}\sup_{|x-y|<M/N,x,y\in[0,M]}|V(x,i)-V(y,i)|<\varepsilon.
\end{align*}
Let $\pi^{i,j}=\{\ell^{\pi^{i,j}},R^{\pi^{i,j}}\}$ be $\varepsilon$-optimal strategies for $U^{\pi^{i,j}}(0)=x_j$ and $H(0)=i$, i.e., $V(x_j,i)-V_{\pi^{i,j}}(x_j,i)<\varepsilon$. For $x\in[0,M]$, we define the strategy $\tilde{\pi}=\{\ell^{\tilde{\pi}},R^{\tilde{\pi}}\}$ such that $U^{\tilde{\pi}}(0)=x$ and $H(0)=i$ given for $t\geq0$ by
\begin{align*}
\ell^{\tilde{\pi}}(t):=\ell^{\pi^{i,j^*}}(t),\qquad\text{and}\qquad R^{\tilde{\pi}}(t):=\beta(x_{j^*}-x)1_{\{t=0\}}+R^{\pi^{i,j^*}}(t)1_{\{t>0\}},
\end{align*}
where $j^*=\min\{j\geq 0:x\leq x_j\}$. Then, it follows that $|V_{\pi^{i,j^*}}(x_{j^\ast },i)-V_{\tilde{\pi}}(x,i)|\leq\beta(x_{j^*}-x)\leq %\max_{i\in E}
\beta\varepsilon$.\\
Therefore we have
\begin{align*}
|V(x,i)-V_{\tilde{\pi}}(x,i)|&\leq |V(x,i)-V(x_{j^*},i)|+|V(x_{j^*},i)-V_{\pi^{i,j^*}}(x_{j^*},i)|+|V_{\pi^{i,j^*}}(x_{j^*},i)-V_{\tilde{\pi}}(x,i)|\\&=(2+\beta)\varepsilon.
\end{align*}
As the previous inequality is valid for arbitrary $x\geq0$ and $i\in E$, \eqref{V_cont_mod} is verified.
\end{proof}

\subsection{Proof of Proposition \ref{Prop101}}
Using the strong Markov property and proceeding as in the proof of \cite[Proposition 2.2]{JiaPis2012}, we obtain for $x\geq0$ and $i\in E$ that
\[
\sup_{\pi \in \Pi}
\bE_{(x, i)} \bigg{[}\int_0^{{\zeta} } e^{- {\Lambda(t) }}l^\pi(t) dt
-\int_{[0, {\zeta}]} {\beta}e^{-{\Lambda(t)}}dR^\pi(t)+
e^{-\Lambda (\zeta)}V(U^\pi (\zeta), H(\zeta))\bigg{]}\geq V(x,i).
\]
%\green{[I think the integral of $R^\pi$ contain $\zeta$.]}
To prove the opposite inequality, let us fix the admissible strategy $\pi=\{\ell^{\pi},R^{\pi}\}$ and $\epsilon>0$.
By Lemma \ref{LemA01}, for all $k\in\bZ \cap [0, \infty)$, there exists $\pi_\epsilon^{ k}$ such that
\begin{align*}
\max_{i\in E}\sup_{x\in [k M, (k+1)M]} \rbra{V(x, i)-V_{\pi_\epsilon^{k}(x, i)}}<\epsilon.
\end{align*}
Let us denote $\theta$ as the shift operator. Recalling that $\zeta$ denotes the epoch of the first regime switch, we can define a strategy $\pi_\epsilon=\{\ell^{\pi_\epsilon},R^{\pi_\epsilon}\}$ as follows:
\begin{align*}
&l^{\pi_\epsilon} (t) :=  l^\pi (t)1_{\{ t <\zeta\}} + \sum_{k=0}^\infty
\rbra{l^{\pi_\epsilon^{k}} (t- \zeta) \circ \theta_{\zeta}  }1_{\{t \geq \zeta\}}1_{\{ U^\pi (\zeta) \in [k, (k+1))\}},\\
&R^{\pi_\epsilon} (t) :=  R^\pi (t)1_{\{ t <\zeta\}} + \sum_{k=0}^\infty
\rbra{R^{\pi_\epsilon^{k}} (t- \zeta) \circ \theta_{\zeta}  }1_{\{ t\geq \zeta\}}1_{\{ U^\pi (\zeta) \in [k, (k+1))\}}.
\end{align*}
Then we have
\begin{align*}
&\bE_{(x,i)} \left[ \int_0^{\zeta}e^{-r(i) t}l^{\pi}(t)dt-\beta \int_{[0, \zeta]}e^{-r(i)t}dR^{\pi}(t)
+e^{-r(i)\zeta}V(U^\pi (\zeta) , H(\zeta))\right]   \\
\leq&\bE_{(x,i)} \Bigg{[} \int_0^{\zeta}e^{-r(i) t}l^{\pi}(t)dt-\beta \int_{[0, \zeta]}e^{-r(i)t}dR^{\pi}(t)\notag\\
&+e^{-r(i)\zeta}\sum_{k=0}^\infty V_{\pi_\epsilon^{k}}(U^\pi (\zeta) , H(\zeta))
1_{\{ U^\pi (\zeta) \in [k, (k+1))\}} +\epsilon\Bigg{]} \\
=&V_{\pi_\epsilon} (x,i)+\epsilon \leq V(x,i) + \epsilon.
\end{align*}
As $\varepsilon>0$ is arbitrary, the proof is completed.
\qed


\begin{thebibliography}{99}

\bibitem{AngBeK02b}\sc Ang, A. and Bekaert, G. \rm  International asset allocation with regime shifts. {\it Rev. Financ. Stud.} {\bf 15}, 1137-1187, (2002).

\bibitem{AngTim}\sc Ang, A. and Timmermann, A. \rm Regime changes and financial markets. {\it Annu. Rev. Financ. Econ.} {\bf 4}, 313-337, (2012).


	\bibitem{APP2007}\sc Avram, F., Palmowski, Z. and Pistorius, M.R. \rm  On the optimal dividend problem for a spectrally negative L\'evy process. {\it Ann. Appl.Probab.} {\bf 17}, 156-180, (2007).

    \bibitem{Azcue}  \sc Azcue, P. and Muler N.  \rm Optimal dividend payment and regime switching in a compound Poisson risk model. {\it SIAM J. Control Optim.} {\bf 53(5)}, 3270-3298, (2015).
	
	\bibitem{BoLiaoYu}\sc Bo, L., Liao, H. and Yu, X.  \rm Risk sensitive portfolio optimization with default contagion and regime-switching. {\it SIAM J. Control Optim.} {\bf 57(1)}, 366-401, (2019).
									
\bibitem{CapLop14a}\sc Capponi, A. and Figueroa-L\'opez, J.E. \rm Dynamic portfolio optimization with a defaultable security and regime-switching. {\it Math. Financ.}  {\bf 24}, 207-249,  (2014).
	
	\bibitem{Chan2011}
	\sc Chan, T., Kyprianou, A.E., and Savov, M. \rm Smoothness of scale functions for spectrally negative \lev processes. {\it Probab. Theory Relat. Fields} {\bf 150}, 691-708, (2011).
	
	\bibitem{EY}\sc Egami, M. and Yamazaki, K. \rm Precautionary measures for credit risk management in jump models.  {\it Stochastics.} {\bf 111-143}, 1--22, (2013).
	
	\bibitem{Gray} \sc Gray, S. F.  \rm Modeling the conditional distribution of interest rates as a regime-switching process. {\it J. Financial Econ.}  {\bf 42(1)}, 27-62, (1996).

    \bibitem{Ham}  \sc Hamilton, J. D.  \rm A new approach to the economic analysis of nonstationary time series and the business cycle. {\it Econometrica}, {\bf 57(2)}, 357-384, (1989).


	\bibitem{HPY}\sc Hern\'andez-Hern\'andez, D., P\'erez, J.L., and Yamazaki, K. \rm Optimality of refraction strageties for spectrally negative {L}\'evy processes.  {\it SIAM J. Control Optim.} {\bf 54 (3)}, 1126-1156, (2016).
	
	\bibitem{JeI} \sc Ivanovs, J.  \rm A note on killing with applications in risk theory  {\it Insur. Math. Econ.} {\bf 52}, 29-34, (2013).
	
	
	\bibitem{JiaPis2012}\sc Jiang, Z. and Pistorius, M.R., \rm Optimal dividend distribution under Markov regime
	switching. {\it Finance Stoch.} {\bf 16}, 449--476, (2012).
	
	\bibitem{JinYin} \sc Jin, Z. and Yin, G. \rm Numerical methods for optimal dividend payment and investment strategies of Markov-modulated jump diffusion models with regular and singular control.   {\it J. Optim. Theory Appl.}  {\bf 159(1)}, 246-271, (2013).
	
	
	
	
	
	
	
	
	%\bibitem{KKM2004} \sc Kl\"uppelberg, C.,  Kyprianou, A.E. and  Maller, R.A. \rm Ruin probabilities and overshoots for general L\'evy insurance risk  processes.  {\it Ann. Appl. Probab.} {\bf 14}, 1766--1801, (2004).
	
	%\bibitem{KK2006} \sc Kl\"uppelberg, C. and  Kyprianou, A.E. \rm On extreme ruinous behaviour of L\'{e}vy insurance risk processes.
	%{\it J. Appl. Probab.} {\bf 43(2)}, 594--598, (2006).
	
	
	\bibitem{K} \sc Kyprianou, A.E. {\it Introductory lectures on
		fluctuations of L\'evy processes with applications.} \rm Springer,
	Berlin, (2006).
	
	
	\bibitem{KL}\sc Kyprianou, A.E. and Loeffen, R. \rm Refracted L\'evy
	processes. {\it Ann. Inst. H. Poincar\'e,} {\bf 46 (1)}, 24--44, (2010).
	
	
	\bibitem{KLP} \sc Kyprianou, A.E., Loeffen, R., and P\'erez, J.L. \rm Optimal control with absolutely continuous strategies for spectrally negative L\'evy processes. {\it J. Appl. Probab.} {\bf 49(1)}, 150-166, (2012).
	
	\bibitem{KPP} \sc Kyprianou, A.E., Pardo, J. C., and P\'erez, J.L. \rm Occupation times of refracted L\'evy processes. {\it J. Theor. Probab.} {\bf 27}(4), 1292-1315, (2014).
	%\bibitem{KP2007} \sc Kyprianou, A.E. and Palmowski, Z. \rm Distributional study of de Finetti's dividend problem for a general L\'evy insurance risk process. {\it J. Appl. Probab.} {\bf 44}, 349-365, (2007)
	
	%\bibitem{LRZ-0}\sc Landriault, D., Renaud, J-F. and Zhou, X. \rm Insurance risk models with Parisian implementation delays, \textit{ Submitted}, (2011). \texttt{ssrn.com/abstract=1744193}.
	
	%
	%\bibitem{LRZ}\sc Landriault, D., Renaud, J-F. and Zhou, X. \rm
	%Occupation times of spectrally negative L\'evy processes with applications.  {\it Stochastic Process. Appl.}  { 121},  2629--2641, (2011).
	
	%\bibitem{LSZ} \sc Lambert, A., Simatos, F., Zwart, B. \rm Scaling
	%limits via excursion theory: interplay between Crump-Mode-Jagers
	%branching processes and processor-sharing queues. {\it Submitted}, (2011).  \texttt{arxiv:1102.5620}.
	
	
	%\bibitem{LoRZ}\sc Loeffen, R. L., Renaud, J-F. and Zhou, X. \rm
	%Occupation times of intervals untill first passage times for spectrally negative L\'evy processes with applications.  {\it Submitted}, (2012).
	% \texttt{arxiv:1207.1592}.
	\bibitem{KKR}\sc  Kuznetsov, A., Kyprianou, A.E. and Rivero, V. \rm The
	theory of scale functions for spectrally negative L\'evy processes. {\it L\'evy Matters II: Recent
		Progress in Theory and Applications: Fractional L\'evy Fields and Scale Functions.} 97--186, \rm Springer, Berlin, (2013).
	
    \bibitem{Pelle}  \sc  Pelletier, D.  \rm  Regime switching for dynamic correlations. {\it J. Econometrics}. {\bf 131}, 445-473, (2006).

	\bibitem{PerYam2018}\sc P\'erez, J. L. and Yamazaki, K. \rm On the refracted-reflected spectrally negative L\'evy processes. {\it Stochastic Process. Appl.} {\bf 128 (1)}, 306--331, (2018).
	
	\bibitem{YP2017}\sc P\'erez, J. L. and Yamazaki, K. \rm Refraction-reflection strategies in the dual model. {\it Astin Bull.} {\bf 47 (1)}, 199-238, (2017).
	
		\bibitem{PerYamYu2018}\sc P\'erez, J. L., Yamazaki, K. and Yu, X.  \rm On the bail-out optimal dividend problem. {\it J. Optim. Theory Appl.} {\bf 179 (2)} 553--568, (2018).
	
	
	\bibitem{protter} \sc Protter, P.E. {\it Stochastic integration and differential equations.} \rm Second edition. Version 2.1. Springer-Verlag,
	Berlin, (2005).
	
    \bibitem{So}  \sc So, E.C.P., Lam, K. and Li, W.K.  \rm A stochastic volatility model with Markov switching.  {\it J. Bus. Econ. Stat.}, {\bf 16}, 244-253, (1998).

    \bibitem{WeiWY}  \sc  Wei, J., Wang, R. and Yang, H.  \rm On the optimal dividend strategy in a regime-switching diffusion model. {\it  Adv. Appl. Prob.} {\bf 44}, 886-906, (2016).

	\bibitem{ZY2016} \sc Yang, H. and Zhu, J. \rm  Optimal financing and dividend distribution in a general diffusion model with regime switching.  {\it  Adv. Appl. Prob.} {\bf 48},  406--422, (2016).

        \bibitem{ZhangZhou}  \sc Zhang, Q. and Zhou, X.  \rm Valuation of stock loans with regime switching. {\it SIAM J. Control Optim.}  {\bf 48(3)}, 1229-1250, (2009).	
	
	\bibitem{ZhouYin}  \sc Zhou, X. and Yin, G.  \rm Markowitz's mean-variance portfolio selection with regime switching: a continuous time model.  {\it SIAM J. Control Optim.}  {\bf 42(4)}, 1466-1482, (2003).	

     \bibitem{ZhuJ}  \sc Zhu, J.  \rm Singular optimal dividend control for the regime-switching Cram\'{e}r-Lundberg model with credit and debit interest.  {\it J. Comput. Appl. Math.}  {\bf 257}, 212-239, (2014).

\end{thebibliography}
\end{document}